\newcommand{\nn}{\nonumber}
\def\sn{\sum_{i=1}^n}
\def\ba{\mathbf{a}}
\def\bb{\mathbf{b}}
\def\bt{\mathbf{t}}
\def\br{\mathbf{r}}
\def\EE{\mathbb{E}}
\def\PP{\mathbb{P}}
\def\cB{\mathcal{B}}
\def\cM{\mathcal{M}}
\def\cP{\mathcal{P}}
\def\cS{\mathcal{S}}
\def\cI{\mathcal{I}}
\def\HH{\mathbb{H}}
\def\RR{\mathbb{R}}
\newcommand{\T}{{\mathsmaller {\rm T}}}
\def\##1\#{\begin{align}#1\end{align}}
\def\$#1\${\begin{align*}#1\end{align*}}
\newcommand{\argmin}{\mathop{\mathrm{argmin}}}
\numberwithin{equation}{section}
\newtheorem{theorem}{Theorem}[section]
\newtheorem{proposition}{Proposition}[section]
\newtheorem{lemma}{Lemma}[section]
\newtheorem{assumption}{Assumption}
\newtheorem{gmm assumption}{GMM Assumption}
\newtheorem{definition}{Definition}[section]
\newtheorem{remark}{Remark}[section]
\newcommand{\myalpha}{\nu}
\title{Fortified Proximal Causal Inference \\ with Many Invalid Proxies}
\author[1]{Myeonghun Yu\footnote{Corresponding author. E-mail: \href{mailto:audgns@umich.edu}{audgns@umich.edu}.}\hspace{.2cm}}
\author[1]{Xu Shi}
\author[2]{Eric J. Tchetgen Tchetgen}
\affil[1]{Department of Biostatistics, University of Michigan, Ann Arbor, MI, 48109, USA}
\affil[2]{Department of Statistics \& Data Science, the Wharton School, University of Pennsylvania, Philadelphia, PA, 19104, USA}
\begin{document}
\date{}
\maketitle

\begin{abstract}
Causal inference from observational data often relies on the assumption of no unmeasured confounding, an assumption frequently violated in practice due to unobserved or poorly measured covariates.
Proximal causal inference (PCI) offers a promising framework for addressing unmeasured confounding using a pair of outcome and treatment confounding proxies.
However, existing PCI methods typically assume all specified proxies are valid, which may be unrealistic and is untestable without extra assumptions.
In this paper, we develop a semiparametric approach for a many-proxy PCI setting that accommodates potentially invalid treatment confounding proxies. 
We introduce a new class of \emph{fortified} confounding bridge functions and establish nonparametric identification of the population average treatment effect (ATE) under the assumption that at least $\gamma$ out of $K$ candidate treatment confounding proxies are valid, for any $\gamma \leq K$ set by the analyst without requiring knowledge of which proxies are valid.
We establish a local semiparametric efficiency bound and develop a class of multiply robust, locally efficient estimators for the ATE. These estimators are thus simultaneously robust to invalid treatment confounding proxies and model misspecification of nuisance parameters.  
The proposed methods are evaluated through simulation and applied to assess the effect of right heart catheterization in critically ill patients.
\end{abstract}

\noindent
{\it Keywords:} Causal inference, Invalid proxies, Semiparametric inference, Unmeasured confounding

\doublespacing
\section{Introduction}

The assumption of no unmeasured confounding is a common assumption of causal inference in observational studies, which requires that all confounders are measured accurately and accounted for in the analysis.  However, this assumption is often violated in practice, as some confounders may be unobserved or imperfectly measured. Violations of this assumption can lead to biased estimates and invalid inference about causal effects.

To address this challenge, a variety of methods have been developed over the years, including recent approaches that leverage so-called proxy variables which have drawn growing interest \citep{miao2018identifying, shi2020selective,tchetgen2024anint,cui2024}.
Proximal causal inference (PCI) provides a formal framework that uses proxy variables to identify and estimate causal effects despite the presence of unmeasured confounding. 
In addition to having potentially measured a subset of relevant confounders, it relies on having also measured two types of proxy variables: (1) treatment confounding proxies; and (2) outcome confounding proxies. 
Treatment confounding proxies must be associated with the outcome only through a shared unmeasured confounder with the treatment, while outcome confounding proxies must be associated with the treatment only through a shared unmeasured confounder with the outcome. 
We refer readers to Section~\ref{sec:preliminaries} for formal definitions.
Prominent examples of proxies of types (1) and (2) are so-called negative control exposures and negative control outcomes, respectively. 
Importantly, both types of proxies should share the same unmeasured confounding structure as the primary treatment and outcome of interest.


However, the assumption that proxy variables are valid, in the sense of satisfying either (1) or (2), is not empirically testable without additional assumptions.  Assessments of proxy validity often rely on domain knowledge \citep{lipsitch2010negative,shi2020selective}, which may be subjective and potentially unreliable. Existing data-driven validity tests rely on stringent structural equation models, and are limited to validating one particular type of proxies, known as \emph{disconnected} proxies \citep{kummerfeld2024data}.
In practice, identifying valid outcome confounding proxies may be more straightforward than selecting valid treatment confounding proxies.
For example, pre-treatment outcome measures are widely used as outcome confounding proxies since they cannot be causally affected by the treatment. 
A case in point is \citet{jackson2006evidence}, who used pre-season influenza hospitalizations as an outcome confounding proxy to detect residual confounding bias in evaluating the effect of vaccination among the elderly. More broadly, the widely popular difference-in-differences methodology uses pre-treatment outcome measurements as valid outcome confounding proxies to debias causal effect estimates \citep{sofer2016negative}. 
In contrast, identifying valid treatment confounding proxies may be considerably more challenging in practice. 
The presence of invalid proxies can result in biased estimation and misleading inference, yet existing PCI literature offers limited methodological tools for dealing with this problem.


In this paper,  assuming that valid outcome confounding proxies are available, we develop a semiparametric framework for PCI about the population average treatment effect (ATE) that accommodates possibly invalid treatment confounding proxies. 
After introducing a new class of \emph{fortified} confounding bridge functions \citep{miao2018identifying}, we establish nonparametric identification of the ATE under the assumption that at least $\gamma$ out of $K$ candidate treatment confounding proxies are valid, for $1 \leq \gamma \leq K$ a priori set by the analyst, without necessarily knowing their identity.
As such, our proposed approach does not rely on the stringent assumption common in the proximal literature that all specified proxies are valid, providing identifiability conditions as long as $\gamma$ is a valid lower bound for the actual number of valid treatment confounding proxies. 
Beyond identification, we also consider estimation and inference in the proposed framework, and we characterize a locally semiparametric efficiency bound for the ATE under our proposed model and derive a rich class of multiply robust, locally efficient estimators. 
These estimators remain consistent if one of several working models is correctly specified, and attain the semiparametric efficiency bound in the union model when all working models are correctly specified. Notably, our method avoids an explicit model selection step, that is, it does not attempt to identify which proxies are invalid. However, as the true number of valid proxies may not be available, the analyst can probe the robustness of empirical results by evaluating whether estimated causal effects remain stable across different values of $\gamma$ over a certain range in a form of sensitivity analysis, given that the identified parameter should be constant provided that $\gamma$ is a valid lower bound for the underlying true number of valid proxies.  A similar strategy may be adopted when many outcome confounding proxies are available, to evaluate whether some are invalid, by repeating the analysis and assessing the sensitivity of results across different subsets of outcome confounding proxies. 

The remainder of the paper is organized as follows.
Section~\ref{sec2} reviews existing identification results for the ATE under the assumption that the given outcome and treatment confounding proxies are valid \citep{miao2018identifying, cui2024}, and then introduces our nonparametric identification strategy that allows for some invalid treatment confounding proxies, without requiring knowledge of which ones are valid, ex-ante.
In Section~\ref{sec:3}, we develop semiparametric theory for PCI under this framework and construct corresponding multiply robust, locally efficient estimators.
Section~\ref{sec:simulation} presents simulation studies, and Section~\ref{sec:data.analysis} illustrates a real data application evaluating the effectiveness of right heart catheterization in critically ill patients admitted to the intensive care unit.
Additional discussion and all technical proofs are provided in the Supplementary Material.

\bigskip
\noindent
{\sc Notation}. 
For any random variable $V$, let $L_2(V)$ be the space of square-integrable functions of $V$ with the $L_2$-norm $\|\cdot\|_2$.
For any finite set $A$, we denote its cardinality by $|A|$ and define $\cP_t(A) = \{\myalpha \subseteq A : |\myalpha| = t\}$ and $\cP_{\geq t}(A) = \{\myalpha \subseteq A : |\myalpha| \geq t\}$.
For any finite set of subspaces $\{S_i: i \in \cI\}$ of a vector space $R$, we denote the sum of these subspaces by $\sum_{i \in \cI} S_i$.

\section{Nonparametric identification}
\label{sec2}

\subsection{Preliminaries} \label{sec:preliminaries}

We consider the estimation of the effect of a binary treatment $A \in \{0,1\}$ on a real-valued outcome variable $Y \in \RR$, in the presence of potential confounding by observed covariates $L$ and unobserved variables $U$.
For $a \in \{0,1\}$, let $Y(a)$ denote the potential outcome that would be observed if the treatment were set to $a$ \citep{splawa1990application,rubin1974estimating}.
Our goal is to estimate the ATE $\tau^*$ defined as $\tau^* := \EE\{Y(1) - Y(0)\}$.
For the identification of $\tau^*$, we begin with the following consistency assumption:
\begin{assumption} [Consistency] \label{assump:consistency}
$Y = Y(a)$ almost surely when $A = a$.
\end{assumption}

Suppose that all confounding factors are observed, ensuring the exchangeability assumption $Y(a) \perp\mspace{-9mu}\perp A|L$ for $a = 0,1$.
Then, under Assumption~\ref{assump:consistency} and the positivity assumption that $0 < \PP(A = 1|L) < 1$ almost surely over $L$, the counterfactual mean $\EE\{Y(a)\}$ can be identified by the well-known g-formula \citep{robins1986new}.

Instead of requiring precise measurement of all confounders to satisfy the exchangeability assumption, the PCI literature acknowledges that measured covariates $L$ are often imperfect proxies of the true underlying confounding mechanism \citep{tchetgen2024anint}. The approach assumes that $L$ includes valid treatment and outcome confounding proxies correctly identified by the analyst, ex-ante. Mainly, $L=(X,Z,W)$, where $X$ represents common causes of treatment and outcome, (i.e. measured confounders), treatment confounding proxies $Z$  are a priori known to be unrelated to the outcome except through $U$ conditional on $A$ and $X$, and outcome confounding proxies $W$ which are unrelated to treatment other than through $U$ conditional on $X$. 
This view has generalized Robins' g-formula to the \emph{proximal g-formula}, which identifies the ATE even in the presence of unmeasured confounding, by carefully leveraging available proxies of hidden confounders. We refer the reader to the rich and fast-growing literature of PCI \citep{miao2018identifying, shi2020multiply, ying2023proximal, cui2024, miao2024confounding, tchetgen2024anint}. An important and long-standing problem of PCI is the need to find and validate treatment and outcome confounding proxies \citep{kummerfeld2024data}.

In this paper, we relax the validity requirement of treatment confounding proxies by assuming that while $W$ is a valid outcome confounding proxy, $Z = (Z_1, \ldots, Z_K)$ consists of $K$ candidate treatment confounding proxies for some positive integer $K$, some of which may be invalid.
Thus, we assume that some or all of these $K$ variables may serve as valid treatment confounding proxies, however their identity is not known a priori. 
To characterize the relationships among these variables formally, we first define the notion of valid treatment confounding proxies among $K$ candidates.
Let $[K] = \{1,2,\ldots,K\}$ and denote $Z_\myalpha = (Z_j: j \in \myalpha)$ and $Z_{-\myalpha} = (Z_j : j \notin \myalpha)$ as the subvectors of $Z$ for any index set $\myalpha \subseteq [K]$.

\begin{definition} \label{def:valid.proxy}
A subset $\myalpha^* \subseteq [K]$ is said to represent a set of indices of valid treatment confounding proxies, or equivalently, $Z_{\myalpha^*}$ are said to be valid treatment confounding proxies among $Z$ if $Z_{\myalpha^*}$ satisfy the following assumptions for $a \in \{0,1\}$:
\begin{itemize}
    \item[(i)] (Latent ignorability) $A \perp\mspace{-9mu}\perp Y(a)|U, Z_{-\myalpha^*}, X$.
    \item[(ii)] (Positivity) $0 < \PP(A = a|U, Z_{-\myalpha^*}, X) < 1$ almost surely.
    \item[(iii)] (Proxies) $Z_{\myalpha^*} \perp\mspace{-9mu}\perp Y|U,A,Z_{-\myalpha^*}, X$ and $W \perp\mspace{-9mu}\perp (Z_{\myalpha^*}, A) | U, Z_{-\myalpha^*}, X$.
\end{itemize}
\end{definition}

\begin{figure}
\centering
\begin{tikzpicture}
\node[circle, draw] (u) at (3,3.5) {$U$};
\node (z2) at (3,2) {$Z_2$};
\node (z1) at (0,1) {$Z_1$};
\node (w) at (6,1) {$W$};
\node (a) at (1.5,0) {$A$};
\node (y) at (4.5,0) {$Y$};

\draw [-stealth] (a) -- (y);
\draw [stealth-] (a) -- (u);
\draw [stealth-] (a) -- (z1);
\draw [stealth-] (z1) -- (u);
\draw [stealth-] (w) -- (u);
\draw [stealth-] (y) -- (u);
\draw [-stealth] (w) -- (y);

\draw [-stealth] (z2) -- (z1);
\draw [-stealth] (u) -- (z2);
\draw [-stealth] (z2) -- (a);
\draw [-stealth] (z2) -- (y);
\draw [-stealth] (z2) -- (w);
\end{tikzpicture}
\caption{A causal DAG when $Z_{\myalpha^*}=Z_1$, which is a valid treatment confounding proxy, and $Z_{-\myalpha^*}=Z_2$ which is an invalid treatment confounding proxy, among two candidates $Z=(Z_1, Z_2)$. Here we omitted $X$ and all relationships are implicitly conditional on $X$.}
\label{fig:dag}
\end{figure}
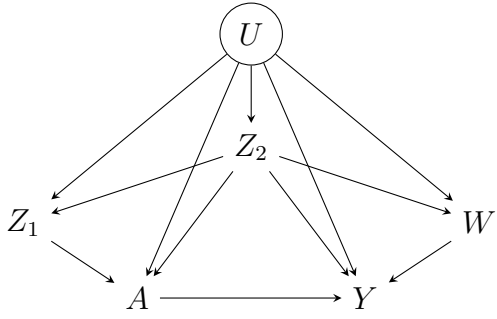

In the above definition, $\myalpha^* \subseteq [K]$ denotes the set of indices of valid treatment confounding proxies, which is unknown to the analyst.
That is,  we classify  $Z=(Z_{\myalpha^*},Z_{-\myalpha^*})$ into two groups: $Z_{\myalpha^*}$, which represents valid treatment confounding proxies, and $Z_{-\myalpha^*}$, which represents the remaining potentially invalid proxies. The latter can essentially be treated and adjusted for as standard measured covariates.  Specifically, the latent ignorability assumption ensures that the unmeasured confounders $U$ and measured covariates $(Z_{-\myalpha^*}, X)$ are sufficient to account for confounding in the relationship between treatment $A$ and potential outcome $Y(a)$. 
The positivity assumption requires that, for all strata defined by $U, Z_{-\myalpha^*}$ and $X$, the probability of receiving a particular treatment $a$ is non-zero. 
The last assumption states that the  $Z_{\myalpha^*}$ and $W$ are valid treatment and outcome confounding proxies, respectively, such that $Z_{\myalpha^*}$ is conditionally independent of $Y$ given $(U,Z_{-\myalpha^*},X)$ and $A$, while $W$ is unrelated with $(Z_{\myalpha^*},A)$, except through the measured and unmeasured confounders $(U,Z_{-\myalpha^*},X)$.
Figure~\ref{fig:dag} illustrates a causal directed acyclic graph (DAG) satisfying the above assumptions when $Z_1$ is a valid treatment confounding proxy and $Z_2$ is an invalid treatment confounding proxy given two candidates $Z = (Z_1, Z_2)$.

Definition~\ref{def:valid.proxy} aligns with the classical definition of proxy variables, where $Z_{\myalpha^*}$ and $W$ are valid treatment and outcome confounding proxies, respectively, and $(Z_{-\myalpha^*}, X)$ represent measured baseline covariates.
When $Z_{\myalpha^*}$ is known a priori, as in the case of an oracle, $\tau^*$ would in principle be nonparametrically identified, provided the following completeness conditions hold \citep{miao2018identifying}.

\begin{assumption} [Completeness] \label{assump:valid.completeness}
Let $g(U)$ be an arbitrary square-integrable function of $U$.
We have the following conditions:
\begin{itemize}
    \item[(i)]  For any $(a,z_{-\myalpha^*}, x)$, $\EE\{g(U)|Z_{\myalpha^*}, A = a, Z_{-\myalpha^*} = z_{-\myalpha^*}, X = x\} = 0$  almost surely if and only if $g(U) = 0$ almost surely.
    \item[(ii)] For any $(a,z_{-\myalpha^*}, x)$, $\EE\{g(U)|W, A = a, Z_{-\myalpha^*} = z_{-\myalpha^*}, X = x\} = 0$ almost surely if and only if $g(U) = 0$ almost surely.
\end{itemize}
\end{assumption}

The completeness condition requires that the treatment confounding proxies $Z_{\myalpha^*}$ and the outcome confounding proxy $W$ are sufficiently associated with $U$, that is $U$-relevant \citep{lipsitch2010negative}, so that any variability in $U$ induces corresponding variability in $Z$ and $W$ conditional on $(A, Z_{-\myalpha^*}, X)$.
For example, if $Z_{\myalpha^*}, W$ and $U$ are categorical variables with respective cardinality $d_W, d_{Z_{\myalpha^*}}$ and $d_U$, the completeness assumption can be viewed as a rank condition, requiring $\min({d_{Z_{\myalpha^*}},d_W}) \geq d_U$ \citep{miao2018identifying, shi2020multiply}.
Completeness is a foundational concept in statistical inference, closely related to the notion of sufficiency \citep{bickel2015mathem}.
Recently, it has been applied to identification problems across various areas, including nonparametric instrumental variable regression \citep{newey2003instrumental}, measurement error models \citep{hu2008instrumental}, long-term causal inference \citep{imbens2024long} and PCI \citep{tchetgen2024anint}.

Now, given the knowledge of $\myalpha^*$ and Assumption~\ref{assump:valid.completeness}, $\tau^*$ is nonparametrically identified by Theorem 1 in \cite{miao2018identifying} and Theorem 3.2 in \cite{cui2024}, which we restate for convenience.

\begin{proposition} \label{prop:oracle.identification}
Let $\myalpha^* \subseteq [K]$ be the set of indices of valid treatment confounding proxies defined in Definition~\ref{def:valid.proxy} and Assumption~\ref{assump:consistency} holds.
\begin{itemize}
    \item[(i)] Suppose that Assumption~\ref{assump:valid.completeness}(i) is satisfied and there exists a function $\tilde h$ satisfying the following integral equation:
    \begin{align}
    \EE(Y|Z, A, X) = \EE\{h(W, A, X, Z_{-\myalpha^*})|Z, A, X\}. \label{def:oracle.outcome.bridge}
    \end{align}
    Then, we have that $\tau^* = \EE\{\tilde h(W,1,X,Z_{-\myalpha^*}) - \tilde h(W,0,X,Z_{-\myalpha^*})\}$.
    \item[(ii)] Suppose that Assumption~\ref{assump:valid.completeness}(ii) is satisfied and there exists a function $\tilde q$ satisfying the following integral equation:
    \#
\EE\{q(Z, A, X)|W, A, X, Z_{-\myalpha^*}\} = \frac{1}{\PP(A|W,X, Z_{-\myalpha^*})}. \label{def:oracle.treatment.bridge}
    \#
    Then, we have that $\tau^* = \EE\{\tilde q(Z,1,X)AY-\tilde q(Z,0,X)(1-A)Y\}$.
\end{itemize}
\end{proposition}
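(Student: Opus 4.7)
The plan is to treat $(Z_{-\myalpha^*}, X)$ as playing the role of standard measured covariates in the classical PCI identification of \citet{miao2018identifying} and \citet{cui2024}, so that the statement reduces to their results after a careful re-indexing. In both parts, the defining integral equation holds among observables, while the target $\tau^*$ involves the unobserved $U$; the bridge is supplied by the completeness assumption, which lets us upgrade an identity that holds after conditioning on proxies to one that holds after conditioning on $U$.

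For part (i), I would start from the defining equation
\[
\EE(Y\mid Z,A,X) \;=\; \EE\{\tilde h(W,A,X,Z_{-\myalpha^*})\mid Z,A,X\},
\]
and rewrite each side by inserting $U$ via the tower property. On the left, Assumption~\ref{assump:consistency} and Definition~\ref{def:valid.proxy}(i) give $\EE(Y\mid U,Z,A,X)=\EE\{Y(A)\mid U,Z_{-\myalpha^*},A,X\}$, which under latent ignorability equals $\EE\{Y(A)\mid U,Z_{-\myalpha^*},X\}$; on the right, $W\perp\!\!\!\perp(Z_{\myalpha^*},A)\mid U,Z_{-\myalpha^*},X$ collapses the conditional expectation to $\EE\{\tilde h(W,A,X,Z_{-\myalpha^*})\mid U,Z_{-\myalpha^*},X\}$. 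Subtracting, the difference is a function of $(U,Z_{-\myalpha^*},A,X)$ whose conditional expectation given $(Z,A,X)$ is zero, so Assumption~\ref{assump:valid.completeness}(i) forces it to vanish almost surely. Integrating the resulting pointwise identity against the marginal of $(U,Z_{-\myalpha^*},X)$ gives $\EE\{\tilde h(W,a,X,Z_{-\myalpha^*})\}=\EE\{Y(a)\}$, and differencing at $a=1,0$ yields the claim.

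For part (ii), I would first establish, using Bayes' rule together with $W\perp\!\!\!\perp A\mid U,Z_{-\myalpha^*},X$, the elementary identity
\[
\frac{1}{\PP(A\mid W,X,Z_{-\myalpha^*})} \;=\; \EE\!\left\{\frac{1}{\PP(A\mid U,X,Z_{-\myalpha^*})}\,\bigg|\,W,A,X,Z_{-\myalpha^*}\right\}.
\]
Next, I would take the inner conditional expectation of the defining equation w.r.t.\ $(W,A,X,Z_{-\myalpha^*})$ and use $Z_{\myalpha^*}\perp\!\!\!\perp W\mid U,A,Z_{-\myalpha^*},X$ (implicit in Definition~\ref{def:valid.proxy}(iii)) to write the left-hand side as $\EE\bigl[\EE\{\tilde q(Z,A,X)\mid U,A,X,Z_{-\myalpha^*}\}\mid W,A,X,Z_{-\myalpha^*}\bigr]$. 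Matching this with the identity above and invoking Assumption~\ref{assump:valid.completeness}(ii) gives the key pointwise relation $\EE\{\tilde q(Z,A,X)\mid U,A,X,Z_{-\myalpha^*}\}=1/\PP(A\mid U,X,Z_{-\myalpha^*})$. Finally, I would compute
\[
\EE\{\tilde q(Z,a,X)\II(A=a)Y\} \;=\; \EE\!\left[\II(A=a)\,\EE\{\tilde q(Z,a,X)\mid U,A,X,Z_{-\myalpha^*}\}\,\EE\{Y(a)\mid U,A,Z_{-\myalpha^*},X\}\right],
\]
where the factorization uses $Z_{\myalpha^*}\perp\!\!\!\perp Y\mid U,A,Z_{-\myalpha^*},X$ together with consistency; substituting the identity for $\tilde q$, applying latent ignorability to drop the conditioning on $A=a$ inside $\EE\{Y(a)\mid\cdot\}$, and taking iterated expectations collapses the inverse-probability weight and delivers $\EE\{Y(a)\}$.

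The main obstacle, I expect, is the careful bookkeeping of conditional independences in part (ii): one must verify that all the ``oracle-style'' manipulations of \citet{cui2024} survive the reinterpretation in which $Z_{-\myalpha^*}$ is part of the conditioning set, and in particular that the Bayes identity above is legitimate under the specific form of proxy independence given in Definition~\ref{def:valid.proxy}(iii). Once these independences are catalogued at the start, both parts reduce to applying completeness followed by iterated expectations, and the remainder is essentially mechanical.
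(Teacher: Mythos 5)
Your proof is correct and follows essentially the same route as the paper: the paper does not prove Proposition~\ref{prop:oracle.identification} itself (it restates Theorem 1 of Miao et al.\ 2018 and Theorem 3.2 of Cui et al.\ 2024), but its own proofs of the fortified analogues, Theorems~\ref{thm:outcome.bridge.identification} and \ref{thm:treatment.bridge.identification}, use exactly your strategy of inserting $U$ via the tower property, collapsing conditioning sets with Definition~\ref{def:valid.proxy}(iii), invoking completeness to upgrade to a $U$-conditional identity, and finishing with iterated expectations. One minor bookkeeping slip: in part (i) the reduction $\EE(Y\mid U,Z,A,X)=\EE(Y\mid U,Z_{-\myalpha^*},A,X)$ relies on the exclusion restriction in Definition~\ref{def:valid.proxy}(iii), not on latent ignorability (i), which only enters afterwards to drop the conditioning on $A$.
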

A function $\tilde h$ (not necessarily unique) satisfying \eqref{def:oracle.outcome.bridge} is referred to as an outcome confounding bridge function, which stands-in as a well-calibrated forecast of the potential outcome; likewise, $\tilde q$ (not necessarily unique) satisfying \eqref{def:oracle.treatment.bridge} is referred to as a treatment confounding bridge function, and plays a similar role as inverse propensity score weighting under unconfoundness \citep{cui2024, miao2024confounding} but in the proximal setting.
Proposition~\ref{prop:oracle.identification} demonstrates that in the presence of unmeasured confounders, $\tau^*$ can be identified leveraging \emph{valid} outcome and treatment confounding proxies via outcome bridge or treatment bridge functions strategies.

Proposition~\ref{prop:oracle.identification} depends crucially on the accurate knowledge of $\myalpha^*$.
If $Z_{\myalpha^*}$ do not constitute valid treatment confounding proxies, the identification result breaks down, and applying Proposition~\ref{prop:oracle.identification} with invalid proxies will generally lead to the biased estimation of $\tau^*$.
Therefore, it is important to select valid $\myalpha^*$ for valid estimation and inference about $\tau^*$.
However, verifying the validity of $\myalpha^*$ in the presence of $U$ is challenging. 
For instance, directly testing the exclusion restriction of $Z_{\myalpha^*}$ on $Y$ conditional on both observed and unobserved confounders (Definition~\ref{def:valid.proxy}-(iii)) is often infeasible unless one is willing to impose additional assumptions \citep{kummerfeld2024data}.
Given these difficulties, the validity of proxies is often assessed using domain-specific knowledge, which can be subjective and unreliable.

\subsection{Nonparametric identification with invalid proxies}
\label{sec:2.2}

To address the aforementioned limitations in the current literature, we develop nonparametric identification of $\tau^*$, when the set of indices for valid treatment confounding proxies, $\myalpha^*$, is unknown.
In the absence of knowledge of $\myalpha^*$, we assume that, out of the $K$ candidates, at least some of them are valid, as formalized below.

\begin{assumption} \label{assump:lower.bound}
    The unknown set of indices of valid treatment confounding proxies (defined in Definition~\ref{def:valid.proxy}), $\myalpha^*$, satisfies $|\myalpha^*| \geq \gamma$ for some $\gamma \in [K]$. 
\end{assumption}

Although $\myalpha^*$ is unknown, Assumption~\ref{assump:lower.bound} ensures that at least $\gamma$ out of $K$ candidates are valid, for $\gamma$ a priori known to the analyst. Note that $\gamma$ need not be greater than $K/2$ (or the largest number of invalid proxies), which distinguishes our assumption from the majority (or plurality) rule in the instrumental variable literature \citep{han2008detecting, kang2016instrumental,guo2018confidence}. 
Identification under this setting can be interpreted as identification within a union of viable causal models, each corresponding to a subset $\myalpha^* \subseteq [K]$ satisfying $|\myalpha^*| \geq \gamma$.

A key contribution of the paper is to propose novel classes of bridge functions that identify the ATE under Assumption~\ref{assump:lower.bound}.
Specifically, we present two identification results: one based on a new class of outcome confounding bridge functions, and an alternative approach based on a new class of treatment confounding bridge functions which also provides robustness to invalid treatment confounding proxies.
Building on these results, we develop semiparametric estimators of $\tau^*$ in Section~\ref{sec:3}.

\subsubsection{Identification via fortified outcome confounding bridge functions}
\label{sec:outcome.bridge}

To introduce an outcome confounding bridge function under Assumption~\ref{assump:lower.bound}, we first define the subspace $\HH_\gamma$ of $L_2(Z,A,X)$ as follows:
\#
    \HH_\gamma := \{d \in L_2(Z, A, X): ~ \EE\{d(Z,A,X)|Z_{-\myalpha}, X\} = 0,~\forall \myalpha \in \cP_{\geq \gamma}([K])\}. \label{def:H.gamma}
\#
That is, the subspace $\HH_\gamma$ contains all functions of $(Z, A, X)$ that satisfy $\EE\{d(Z, A, X)|Z_{-\myalpha}, X\} = 0$, for all possible partitions $Z=(Z_{\myalpha},Z_{-\myalpha})$ with $|\myalpha|\geq\gamma$. It is characterized solely by the number $\gamma$ and thus does not require knowledge of $\myalpha^*$.
By the Law of Iterative Expectations, $\HH_\gamma$ is equivalently characterized as
\$
\HH_\gamma := \{d \in L_2(Z, A, X): ~ \EE\{d(Z,A,X)|Z_{-\myalpha}, X\} = 0,~\forall \myalpha \in \cP_{\gamma}([K])\}.
\$

\begin{remark}\label{remark:example}
As an illustrative example of $\HH_\gamma$, consider a simple setting with no baseline covariates $X$. Suppose there are $K = 2$ candidate proxies $Z=(Z_1,Z_2)$, and at least $\gamma = 1$ of them are valid. 
That is, the set of possible valid proxies is $Z_1$, $Z_2$, and $(Z_1,Z_2)$.
Moreover, assume $Z_1, Z_2 \in \{-1, 0, 1\}$ for ease of presentation.
Then, it follows that $\HH_1$ is the linear span of 
\$
\cB_1 = & \{\bar A, \bar A \bar Z_1, \bar A \bar Z_2, \bar A \bar Z_1 \bar Z_2, \bar A \bar{Z_1^2}, \bar A \bar{Z_2^2}, \bar A \bar{Z_1^2}\bar Z_2, \bar A \bar Z_1\bar{Z_2^2}, \bar A \bar{Z_1^2}\bar{Z_2^2}, \\
& \bar Z_1 \bar Z_2, \bar{Z_1^2}\bar Z_2, \bar Z_1 \bar{Z_2^2}, \bar{Z_1^2}\bar{Z_2^2} \},
\$ 
where $\bar A = \omega(A,Z_1, Z_2)\{A - \EE(A)\}$, $\bar Z_i = \omega(A,Z_1, Z_2)\{Z_i - \EE(Z_i)\}$ and $\bar{Z_i^2} = \omega(A,Z_1, Z_2)\{Z_i^2 - \EE(Z_i^2)\}$, for $i = 1,2$.
Here, the weight $\omega(a, z_1, z_2)$ is defined as $
\omega(a,z_1,z_2) = \PP(A = a)\PP(Z_1 = z_1)\PP(Z_2 = z_2)/\PP(A = a, Z_1 = z_1, Z_2 = z_2)$.
\end{remark}


We now formally introduce the outcome confounding bridge function in the presence of potential invalid treatment confounding proxies as follows.
\begin{assumption}
\label{assump:h}
    There exists a solution $h^*$ to the following equation:
\#
\EE[d(Z,A,X)\{Y - h^*(W,A,X)\}] = 0, ~~~~\forall d \in \HH_\gamma. \label{outcome.bridge.moment}
\#
\end{assumption}

We refer to $h^*$ satisfying \eqref{outcome.bridge.moment} as a \textit{fortified outcome confounding bridge function}, distinguishing it from a conventional outcome confounding bridge function as detailed in Proposition \ref{prop:oracle.identification}.
When $\gamma = K$, that is, all candidates are valid treatment confounding proxies, Assumption~\ref{assump:h} coincides with the existence of a conventional outcome confounding bridge function in the PCI literature \cite{miao2018identifying}, as shown in Appendix A.1 of the supplementary material.
As such, fortified outcome confounding bridge functions generalize their conventional counterpart.
For formal technical conditions for the existence of a solution to \eqref{outcome.bridge.moment}, we refer the readers to Appendix A.3.

\begin{remark}\label{remark:example_model}
Following Remark~\ref{remark:example}, suppose $Z_1$ is a valid proxy, while $Z_2$ is invalid, which corresponds to the DAG in Figure~\ref{fig:dag}. 
To illustrate the fortified outcome confounding bridge function satisfying \eqref{outcome.bridge.moment}, consider the following true data-generating mechanism defined by linear structural models:
\$
E[Y|A,W, Z,U] & = A+2Z_2+W + U, \nn \\
E[W|A,Z,U] & = Z_2+U. 
\$
The validity of $Z_1$ is reflected in its absence from both right-hand sides, while $Z_2$ appears in both equations, indicating its invalidity.
One can show that $E[Y-A-2W|A,Z]=Z_2$. 
Thus, by definition of $\HH_\gamma$, we  have $\EE[d(Z,A)\{Y - h^*(W,A)\}] = 0$ for any $d\in \HH_\gamma$, where $h^*(W,A)=A+2W$. 
\end{remark}

Similar to Proposition \ref{prop:oracle.identification}, we introduce the following completeness conditions for nonparametric identification of $\tau^*$ using fortified outcome confounding bridge functions.

\begin{assumption} [Fortified Completeness] \label{assump:outcome.bridge.completeness}
    For any $g \in L_2(U, A, Z_{-\myalpha^*}, X)$, 
    \$
    \EE\{d(Z, A, X)g(U, A, Z_{-\myalpha^*}, X)\} = 0 ~~~\mbox{ for all }~ d \in \HH_\gamma + L_2(Z_{-\myalpha^*}, X)
    \$ 
    if and only if $g = 0$ almost surely.
\end{assumption}
Assumption~\ref{assump:valid.completeness}(i), which is necessary for identification when $\myalpha^*$ is known, is equivalent to requiring that for any $g \in L_2(U, A, Z_{-\myalpha^*}, X)$, 
\$
\EE\{d(Z,A,X)g(U, A, Z_{-\myalpha^*}, X)\} = 0 ~\mbox{ for all } d \in L_2(Z,A,X)
\$ 
if and only if $g = 0$ almost surely.
Since $\HH_\gamma + L_2(Z_{-\myalpha^*}, X) \subseteq L_2(Z,A,X)$, Assumption~\ref{assump:outcome.bridge.completeness} imposes a stronger requirement than Assumption~\ref{assump:valid.completeness}(i), explicitly illustrating the price for identifying $\tau^*$ without knowing $\myalpha^*$.
Heuristically, Assumption~\ref{assump:valid.completeness}(i) requires that $L_2(Z, A, X)$ is sufficiently informative about $L_2(U, A, Z_{-\myalpha}, X)$, while Assumption~\ref{assump:outcome.bridge.completeness} demands the same degree of informativeness for the smaller functional space $\HH_\gamma + L_2(Z_{-\myalpha^*},X)$.
Moreover, for $1 \leq \gamma \leq \gamma^\prime \leq |\myalpha^*|$, we have $\HH_\gamma \subseteq \HH_{\gamma^\prime}$, so the fortified completeness assumption becomes weaker as the lower bound of the number of valid proxies increases. 
When $\gamma = K$, leading to $\HH_K + L_2(X) = L_2(Z,A,X)$, the assumption reduces to the standard completeness assumption in the proximal literature \citep{miao2018identifying,cui2024}.
Thus, Assumption~\ref{assump:outcome.bridge.completeness} highlights explicitly, a trade-off between the lower bound of the number of valid proxies and the strength of the completeness condition required for identification.

\begin{remark} \label{remark:example2}
Under the example of Remark~\ref{remark:example} and assuming $U$ is binary,  $L_2(U, A, Z_2)$ is the linear span of 
\$
\cB_2 = \{1, A, AU, AZ_2, AZ_2^2, AUZ_2, AUZ_2^2, U, Z_2, Z_2^2, UZ_2, UZ_2^2\},
\$ and Assumption~\ref{assump:outcome.bridge.completeness} is equivalent to requiring that the matrix $M \in \RR^{12 \times 15}$ is full rank, where the $(i,j)$-th entry of $M$ is given by the expectation of the product of $i$-th element of $\cB_2$ and $j$-th element of $\cB_1 \cup \{1, Z_2\}$.
\end{remark}

We now establish the nonparametric identification of $\tau^*$ using fortified outcome confounding bridge functions.

\begin{theorem} \label{thm:outcome.bridge.identification}
Suppose that Assumptions~\ref{assump:lower.bound}, \ref{assump:h} and \ref{assump:outcome.bridge.completeness} hold.
Then, any solution $h^*$ to \eqref{outcome.bridge.moment} satisfies
\#
\EE\{Y - h^*(W,A,X)|U,A,Z_{-\myalpha^*},X\} = \EE\{Y - h^*(W,A,X)|Z_{-\myalpha^*},X\}, \label{outcome.bridge.moment.U}
\#
almost surely.
Furthermore, provided that Assumption~\ref{assump:consistency} also holds, the ATE is nonparametrically identified by
\#
\tau^* = \EE\{h^*(W,1,X) - h^*(W,0,X)\}. \label{outcome.identification}
\#
\end{theorem}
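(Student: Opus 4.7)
The plan is to reduce the first claim \eqref{outcome.bridge.moment.U} to a direct application of the fortified completeness condition, and then derive the identification formula \eqref{outcome.identification} by integrating \eqref{outcome.bridge.moment.U} against the latent ignorability and consistency structure. Let $R := Y - h^*(W,A,X)$ and $m(U,A,Z_{-\myalpha^*},X) := \EE(R \mid U,A,Z_{-\myalpha^*},X)$, and define the residual $\phi := m(U,A,Z_{-\myalpha^*},X) - \EE(R \mid Z_{-\myalpha^*},X)$. Showing $\phi = 0$ almost surely is exactly \eqref{outcome.bridge.moment.U}.

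The key preliminary step is to verify that $\EE(R \mid Z,A,X,U) = m(U,A,Z_{-\myalpha^*},X)$, i.e.\ that $R$ loses its dependence on $Z_{\myalpha^*}$ once one conditions on $U$. This uses both parts of the proxy assumption in Definition~\ref{def:valid.proxy}(iii): $Z_{\myalpha^*} \perp\mspace{-9mu}\perp Y \mid U,A,Z_{-\myalpha^*},X$ strips $Z_{\myalpha^*}$ out of $\EE(Y \mid Z,A,X,U)$, while $W \perp\mspace{-9mu}\perp (Z_{\myalpha^*},A) \mid U,Z_{-\myalpha^*},X$ forces the conditional law of $W$ given $(Z,A,X,U)$ to depend only on $(U,Z_{-\myalpha^*},X)$ and so strips $Z_{\myalpha^*}$ from $\EE\{h^*(W,A,X) \mid Z,A,X,U\}$. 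Combined with the tower property, the defining moment condition \eqref{outcome.bridge.moment} then upgrades to
\[
\EE\{d(Z,A,X)\, m(U,A,Z_{-\myalpha^*},X)\} = 0 \quad \text{for every } d \in \HH_\gamma.
\]

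Next, I would check that $\EE(d \phi) = 0$ for every $d \in \HH_\gamma + L_2(Z_{-\myalpha^*},X)$, which by linearity splits into two cases. For $d \in \HH_\gamma$, Assumption~\ref{assump:lower.bound} gives $|\myalpha^*| \geq \gamma$, so the defining property of $\HH_\gamma$ yields $\EE\{d \mid Z_{-\myalpha^*},X\} = 0$ and hence $\EE\{d \cdot \EE(R \mid Z_{-\myalpha^*},X)\} = 0$; subtracting from the previous display gives $\EE(d \phi) = 0$. For $d \in L_2(Z_{-\myalpha^*},X)$, iterated expectations inside $\EE(d \cdot m)$ collapse it to $\EE\{d \cdot \EE(R \mid Z_{-\myalpha^*},X)\}$, so again $\EE(d \phi) = 0$. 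Assumption~\ref{assump:outcome.bridge.completeness} then forces $\phi = 0$ almost surely, which is precisely \eqref{outcome.bridge.moment.U}.

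For the identification formula \eqref{outcome.identification}, I would evaluate \eqref{outcome.bridge.moment.U} at $A = a$ and use $W \perp\mspace{-9mu}\perp A \mid U, Z_{-\myalpha^*},X$ (a direct consequence of the proxy condition) to rewrite $\EE\{h^*(W,a,X) \mid U, A=a, Z_{-\myalpha^*},X\} = \EE\{h^*(W,a,X) \mid U, Z_{-\myalpha^*},X\}$. Consistency together with latent ignorability $A \perp\mspace{-9mu}\perp Y(a) \mid U, Z_{-\myalpha^*},X$ then identifies the left-hand side as $\EE\{Y(a) \mid U, Z_{-\myalpha^*}, X\}$, yielding $\EE\{Y(a) \mid U, Z_{-\myalpha^*},X\} = \EE\{h^*(W,a,X) \mid U, Z_{-\myalpha^*},X\} + \EE(R \mid Z_{-\myalpha^*},X)$. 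Taking total expectations and differencing $a=1$ from $a=0$ cancels the $\EE(R \mid Z_{-\myalpha^*},X)$ term and delivers \eqref{outcome.identification}. The most delicate step is the initial reduction showing that $\EE(R \mid Z,A,X,U)$ depends on $Z$ only through $Z_{-\myalpha^*}$; once this ``projection'' is in hand, the remainder is bookkeeping in the direct sum $\HH_\gamma + L_2(Z_{-\myalpha^*},X)$ plus cancellation of nuisance terms after averaging.
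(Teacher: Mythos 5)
Your proposal is correct and follows essentially the same route as the paper's proof: use the proxy conditions of Definition~\ref{def:valid.proxy}(iii) to show $\EE\{Y-h^*(W,A,X)\mid U,Z,A,X\}$ depends on $Z$ only through $Z_{-\myalpha^*}$, verify that the residual (your $\phi$, the paper's $\EE\{Y-h^*-\ell^*\mid U,A,Z_{-\myalpha^*},X\}$) is orthogonal to every $d\in\HH_\gamma+L_2(Z_{-\myalpha^*},X)$ using $|\myalpha^*|\geq\gamma$, invoke Assumption~\ref{assump:outcome.bridge.completeness}, and then average over latent ignorability and consistency with the $a$-free term cancelling in the difference. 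The only differences are notational ($R$, $m$, $\phi$ versus the paper's $\ell^*$).
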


Theorem~\ref{thm:outcome.bridge.identification} establishes our first key identification result, about the causal effect using observed variables without prior knowledge of $\myalpha^*$.
Theorem~\ref{thm:outcome.bridge.identification} provides a multiply robust causal identification result by establishing the nonparametric identification within the union of causal models, each corresponding to a partition of $Z=(Z_{\myalpha},Z_{-\myalpha})$, with $|\myalpha|\geq\gamma$.
That is, $\tau^*$ is identified if any of these causal models is valid given the stated regularity conditions, without requiring explicit knowledge or identification of the valid proxies $\myalpha^*$.
As in other PCI setups \citep{tchetgen2024anint,cui2024}, identification does not require a unique solution to \eqref{outcome.bridge.moment}, and thus, in principle, any solution identifies $\tau^*$.
Theorem~\ref{thm:outcome.bridge.identification} demonstrates that a solution of the conditional moment equation \eqref{outcome.bridge.moment.U} can be found without explicitly knowing $\myalpha^*$ or directly modeling or estimating hidden confounding factors.

By definition of $\HH_\gamma$, we establish the equivalent condition for Assumption~\ref{assump:h} as follows.
Let $\cS_\gamma = \overline{\sum_{\myalpha \in \cP_\gamma([K])} L_2(Z_{-\myalpha}, X)}$, where the closure is taken in $L_2(Z,A,X)$.

\begin{proposition} \label{prop:outcome.bridge.identification2}
$h^*$ satisfies \eqref{outcome.bridge.moment} if and only if there exists $h^* \in L_2(W,A,X)$ and $\ell^* \in \cS_\gamma$ such that $\tilde h(W,A,Z,X) := h^*(W,A,X) + \ell^*(Z,A,X)$ satisfies
\#
\EE\{Y - \tilde h(W,A,Z,X) | Z, A, X\} = 0. \label{outcome.bridge.moment2}
\#
Moreover, under Assumptions~\ref{assump:consistency} and \ref{assump:lower.bound}--\ref{assump:outcome.bridge.completeness}, any such solution $\tilde h = h^* + \ell^*$ satisfies $\ell^* \in L_2(Z_{-\myalpha^*},X)$ and  
\#
\EE\{Y - \tilde h(W,A,Z,X) | U,A,Z_{-\myalpha^*}\} = 0, \label{outcome.bridge.moment.U2}
\#
and $\tau^* = \EE\{\tilde h(W,1,Z,X) - \tilde h(W,0,Z,X)\}$.
\end{proposition}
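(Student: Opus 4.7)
The plan hinges on recognizing that $\HH_\gamma$ is exactly the orthogonal complement of $\cS_\gamma$ inside $L_2(Z,A,X)$. Indeed, $\EE\{d\mid Z_{-\myalpha},X\}=0$ is equivalent to $\EE[d\cdot g]=0$ for every $g\in L_2(Z_{-\myalpha},X)$, so intersecting over $\myalpha\in\cP_\gamma([K])$ and taking closures yields $\HH_\gamma=\cS_\gamma^\perp$. The projection theorem then provides the orthogonal decomposition $L_2(Z,A,X)=\cS_\gamma\oplus\HH_\gamma$, which does almost all the work.

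With this in hand, I would prove the equivalence of \eqref{outcome.bridge.moment} and \eqref{outcome.bridge.moment2} by setting $f(Z,A,X):=\EE\{Y-h^*(W,A,X)\mid Z,A,X\}$. By the tower property, \eqref{outcome.bridge.moment} becomes $\EE[d\cdot f]=0$ for all $d\in\HH_\gamma$, i.e.\ $f\in(\HH_\gamma)^\perp=\cS_\gamma$; then $\ell^*:=f\in\cS_\gamma$ satisfies $\tilde h=h^*+\ell^*$ and \eqref{outcome.bridge.moment2}. Conversely, given any decomposition $\tilde h=h^*+\ell^*$ with $\ell^*\in\cS_\gamma$ satisfying \eqref{outcome.bridge.moment2}, the orthogonality $\HH_\gamma\perp\cS_\gamma$ yields $\EE[d\cdot\ell^*]=0$ for $d\in\HH_\gamma$, so $\EE[d(Y-h^*)]=\EE[d(Y-\tilde h)]+\EE[d\cdot\ell^*]=0$, recovering \eqref{outcome.bridge.moment}.

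For the refinement that $\ell^*\in L_2(Z_{-\myalpha^*},X)$ under Assumptions~\ref{assump:consistency} and \ref{assump:lower.bound}--\ref{assump:outcome.bridge.completeness}, I would invoke Theorem~\ref{thm:outcome.bridge.identification}: for any fortified bridge $h^*$, we have $\EE\{Y-h^*\mid U,A,Z_{-\myalpha^*},X\}=\phi(Z_{-\myalpha^*},X)$ for some $\phi\in L_2(Z_{-\myalpha^*},X)$. Iterating expectations through $U$ and invoking Definition~\ref{def:valid.proxy}(iii)---specifically $Z_{\myalpha^*}\perp\mspace{-9mu}\perp Y\mid U,A,Z_{-\myalpha^*},X$ and (via weak union applied to $W\perp\mspace{-9mu}\perp(Z_{\myalpha^*},A)\mid U,Z_{-\myalpha^*},X$) $Z_{\myalpha^*}\perp\mspace{-9mu}\perp W\mid U,A,Z_{-\myalpha^*},X$---gives $\EE\{Y-h^*\mid Z,A,X,U\}=\phi(Z_{-\myalpha^*},X)$, hence $f=\EE[\phi\mid Z,A,X]=\phi(Z_{-\myalpha^*},X)$. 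Thus $\ell^*=\phi\in L_2(Z_{-\myalpha^*},X)$, and subtracting $\ell^*$ from the identity of Theorem~\ref{thm:outcome.bridge.identification} yields $\EE\{Y-\tilde h\mid U,A,Z_{-\myalpha^*},X\}=0$.

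The ATE formula then follows quickly: because $\ell^*$ depends only on $(Z_{-\myalpha^*},X)$, the proxy independence $W\perp\mspace{-9mu}\perp A\mid U,Z_{-\myalpha^*},X$ gives $\EE[\tilde h(W,a,Z,X)\mid U,A=a,Z_{-\myalpha^*},X]=\EE[\tilde h(W,a,Z,X)\mid U,Z_{-\myalpha^*},X]$, and combining with the conditional moment condition, consistency, and latent ignorability yields $\EE[\tilde h(W,a,Z,X)\mid U,Z_{-\myalpha^*},X]=\EE[Y(a)\mid U,Z_{-\myalpha^*},X]$; marginalizing gives $\tau^*=\EE\{\tilde h(W,1,Z,X)-\tilde h(W,0,Z,X)\}$. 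The main obstacle, in my view, is the bookkeeping of conditional independences from Definition~\ref{def:valid.proxy}(iii): one must extract $Z_{\myalpha^*}\perp\mspace{-9mu}\perp W\mid U,A,Z_{-\myalpha^*},X$ from the joint statement $W\perp\mspace{-9mu}\perp(Z_{\myalpha^*},A)\mid U,Z_{-\myalpha^*},X$ to push conditional expectations down from $(Z,A,X,U)$ to $(U,A,Z_{-\myalpha^*},X)$, and similarly justify replacing $A=a$ with marginalization over $A$ when computing $\EE[\tilde h\mid U,Z_{-\myalpha^*},X]$.
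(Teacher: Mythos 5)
Your proposal is correct and follows essentially the same route as the paper's proof: the equivalence rests on the duality $\HH_\gamma^\perp = \cS_\gamma$ (the paper obtains it via $N(T_\myalpha)^\perp = R(T_\myalpha) = L_2(Z_{-\myalpha},X)$ and the identity $(\bigcap_\myalpha N(T_\myalpha))^\perp = \overline{\sum_\myalpha N(T_\myalpha)^\perp}$, you equivalently via $\HH_\gamma = \cS_\gamma^\perp$ and the projection theorem), with $\ell^* = \EE\{Y-h^*\mid Z,A,X\}$ in both cases, and the remaining claims drawn from Theorem~\ref{thm:outcome.bridge.identification}. Your handling of the second half is in fact more explicit than the paper's, which simply states that the remaining claims ``follow directly'' from that theorem; the conditional-independence bookkeeping you identify (weak union extracting $Z_{\myalpha^*}\perp\mspace{-9mu}\perp W \mid U,A,Z_{-\myalpha^*},X$ so that $\EE\{Y-h^*\mid Z,A,X\}$ collapses to a function of $(Z_{-\myalpha^*},X)$) is precisely what that deferral requires.
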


Proposition~\ref{prop:outcome.bridge.identification2} shows that the existence of $h^*$ satisfying \eqref{outcome.bridge.moment} is equivalent to the existence of $\tilde h \in L_2(W,A,X) + \cS_\gamma$ satisfying \eqref{outcome.bridge.moment2}.
Furthermore, under the regularity conditions in Theorem~\ref{thm:outcome.bridge.identification} and Proposition~\ref{prop:outcome.bridge.identification2}, $\tilde h$ actually lies in $L_2(W,A,X) + L_2(Z_{-\myalpha^*},X)$ and satisfies \eqref{outcome.bridge.moment.U2}.
In other words, the required regularity conditions imply the existence of a conventional outcome confounding bridge, which can be decomposed into the sum of two functions, a function of $(W,A,X)$ and a function of $(Z_{-\myalpha^*},X)$, excluding any additive interaction between $Z_{-\myalpha^*}$ and $(W,A)$.

\begin{remark}
Our nonparametric identification results under Assumption~\ref{assump:lower.bound} can be interpreted as identification within a union of viable causal models, each corresponding to a subset $\myalpha^* \subseteq [K]$ satisfying $|\myalpha^*| \geq \gamma$.
In this respect, our approach is conceptually similar to that of \citet{sun2023semiparametric}, who developed a g-estimation framework for identifying the ATE under a semiparametric structural mean model with additive treatment effect \citep{holland1988causal, small2007sensitivity}, in the presence of unmeasured confounding. Their method assumes that at least $\gamma$ of $K$ candidate instruments are valid, without requiring prior knowledge of which instruments are invalid.
Like our approach, their framework enables identification and semiparametric theory without relying on model selection, thereby avoiding assumptions such as the majority rule \citep{han2008detecting, kang2016instrumental} or the plurality rule \citep{guo2018confidence}.
Despite this similarity, our setup presents distinct challenges from \cite{sun2023semiparametric}, in part because proximal inference must integrate both treatment and outcome confounding proxies, and therefore a treatment confounding proxy may be invalid either because it has a direct causal effect on the primary outcome, or because it has a causal effect on an outcome confounding proxy, while a candidate instrument in \cite{sun2023semiparametric} can only be invalid in a single way, due to a direct effect on the primary outcome. 
In addition, \cite{sun2023semiparametric}'s results largely rely on an important assumption that, whether valid or not, the set of candidate instruments must be both mutually independent and jointly independent of $U$. Neither of these conditions is required of valid or invalid treatment confounding proxies in our framework. Finally, they require homogeneity of the causal effect with respect to the hidden confounders, which we also do not require.
 
These fundamental differences call for new technical and methodological developments that distinguish our contributions from existing work on robust instrumental variables and robust proximal methods.


\end{remark}

\subsubsection{Identification via fortified treatment confounding bridge functions}
\label{sec:treatment.bridge}

In this Section, we provide an alternative proximal identification result to that of the previous section using a proximal analog to the inverse probability weighting strategy.
We first introduce a treatment confounding bridge function under Assumption~\ref{assump:lower.bound}.
\begin{assumption}
\label{assump:q}
    There exists a function $q^*$ satisfying $(-1)^{1-A}q^* \in \HH_\gamma$ and
\#
\EE\{q^*(Z,A,X)|W,A,X\} = \frac{1}{\PP(A | W, X)}. \label{treatment.bridge.moment} 
\#
\end{assumption}

We refer to $q^*$ as a \textit{fortified treatment confounding bridge function} to distinguish it from the conventional treatment confounding bridge function in PCI literature \citep{cui2024}.
When $\gamma = K$, then the condition $(-1)^{1-A}q^* \in \HH_K$ holds by definition since any function $q^*$ satisfying \eqref{treatment.bridge.moment} satisfies $\EE\{(-1)^{1-A}q^*(Z,A,X)|X\} = 0$.
Consequently, in this case, the definition recovers that of a conventional treatment confounding bridge function \citep{cui2024}.

We introduce the following regularity condition to identify $\tau^*$ via a fortified treatment confounding bridge function.

\begin{assumption}[Completeness]
\label{assump:treatment.bridge.completeness}
For any $g \in L_2(U, A, Z_{-\myalpha^*}, X)$, 
\$
\EE\{h(W,A,Z_{-\myalpha^*}, X)g(U, A, Z_{-\myalpha^*}, X)\} = 0 ~~~\mbox{ for all }~  h \in L_2(W,A,X) + L_2(Z_{-\myalpha^*}, X)
\$ 
almost surely if and only if $g(U, A, Z_{-\myalpha^*}, X) = 0$ almost surely.
\end{assumption}

Similar to Assumption~\ref{assump:outcome.bridge.completeness}, Assumption~\ref{assump:treatment.bridge.completeness} characterizes the cost of identifying $\tau^*$ without knowing $\myalpha^*$.
To see this, in the oracle setup with $\myalpha^*$ known, Assumption~\ref{assump:valid.completeness}(ii) requires that $L_2(W,A,Z_{-\myalpha^*},X)$ is sufficiently informative about $L_2(U,A,Z_{-\myalpha^*},X)$.
In contrast, Assumption~\ref{assump:treatment.bridge.completeness} requires that $L_2(W,A,X) + L_2(Z_{-\myalpha^*}, X)$, a subspace of $L_2(W,A,Z_{-\myalpha^*},X)$, is sufficiently informative about $L_2(U,A,Z_{-\myalpha^*},X)$.
Thus, Assumption~\ref{assump:treatment.bridge.completeness} is stronger than Assumption~\ref{assump:valid.completeness}(ii), reflecting the potential cost of identifying $\tau^*$ without a priori knowledge of $\myalpha^*$.

We are now ready to establish the identification of $\tau^*$ using fortified treatment confounding bridge functions.

\begin{theorem} \label{thm:treatment.bridge.identification}
    Suppose that Assumptions~\ref{assump:lower.bound}, \ref{assump:q} and \ref{assump:treatment.bridge.completeness} hold.
    Then, any solution $q^*$ to \eqref{treatment.bridge.moment} with $(-1)^{1-A}q^* \in \HH_\gamma$ satisfies
    \#
\EE\{q(Z,A,X)|U, A, Z_{-\myalpha^*}, X\} = \frac{1}{\PP(A|U, Z_{-\myalpha^*}, X)} \label{treatment.bridge.moment.U}
    \#
    almost surely.
    Furthermore, provided that Assumption~\ref{assump:consistency} also holds, the ATE is nonparametrically identified by 
    \#
    \tau^* = \EE\{q^*(Z, 1, X)AY - q^*(Z, 0, X)(1-A)Y\}. \label{treatment.identification}
    \#
\end{theorem}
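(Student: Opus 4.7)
The plan is to prove Theorem~\ref{thm:treatment.bridge.identification} in two steps: first establish the intermediate identity \eqref{treatment.bridge.moment.U} using the completeness condition (Assumption~\ref{assump:treatment.bridge.completeness}), and then use this identity together with consistency, latent ignorability, and the proxy structure to obtain the identification formula \eqref{treatment.identification}.

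For the first step, I would define $g(U, A, Z_{-\myalpha^*}, X) := \EE\{q^*(Z,A,X) \mid U, A, Z_{-\myalpha^*}, X\} - 1/\PP(A \mid U, Z_{-\myalpha^*}, X)$ and invoke Assumption~\ref{assump:treatment.bridge.completeness}, so that proving $g = 0$ almost surely reduces to verifying $\EE\{h g\} = 0$ for every $h \in L_2(W,A,X) + L_2(Z_{-\myalpha^*}, X)$. Decomposing $h = h_1 + h_2$ into its two components, the $h_1 \in L_2(W,A,X)$ part is handled by using the tower property to rewrite $\EE\{h_1 \EE\{q^* \mid U, A, Z_{-\myalpha^*}, X\}\}$ as $\EE\{h_1 q^*\}$, then invoking the bridge equation \eqref{treatment.bridge.moment} to obtain $\EE\{h_1/\PP(A\mid W,X)\}$; the counterpart $\EE\{h_1/\PP(A\mid U, Z_{-\myalpha^*}, X)\}$ is simplified using $W \perp\mspace{-9mu}\perp A \mid U, Z_{-\myalpha^*}, X$ from Definition~\ref{def:valid.proxy}(iii), and both terms reduce to $\sum_a \EE\{h_1(W, a, X)\}$. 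The $h_2 \in L_2(Z_{-\myalpha^*}, X)$ part requires the fortified condition $(-1)^{1-A} q^* \in \HH_\gamma$: taking $\myalpha = \myalpha^*$ (permissible since $|\myalpha^*| \geq \gamma$) gives $\EE\{A q^*(Z,1,X) \mid Z_{-\myalpha^*}, X\} = \EE\{(1-A) q^*(Z,0,X) \mid Z_{-\myalpha^*}, X\}$, and combining this symmetry with the bridge-derived identities $\EE\{A q^*(Z,1,X) \mid W, X\} = 1 = \EE\{(1-A) q^*(Z,0,X) \mid W, X\}$ should pin down the common value to $1$ almost surely, yielding $\EE\{q^*(Z,A,X) \mid Z_{-\myalpha^*}, X\} = 2$, which matches $\EE\{h_2/\PP(A\mid U, Z_{-\myalpha^*}, X)\} = 2\EE\{h_2\}$.

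For the second step, by consistency I rewrite $\EE\{q^*(Z,1,X) A Y\} = \EE\{q^*(Z,1,X) A Y(1)\}$, condition on $(U, Z_{-\myalpha^*}, X)$, and invoke the factorization $Y(1) \perp\mspace{-9mu}\perp (A, Z_{\myalpha^*}) \mid U, Z_{-\myalpha^*}, X$ obtained by combining latent ignorability (Definition~\ref{def:valid.proxy}(i)) with the proxy conditional independence $Z_{\myalpha^*} \perp\mspace{-9mu}\perp Y \mid U, A, Z_{-\myalpha^*}, X$ (Definition~\ref{def:valid.proxy}(iii)) and consistency. The inner expectation then factors as $\EE\{q^*(Z,1,X) A \mid U, Z_{-\myalpha^*}, X\} \cdot \EE\{Y(1) \mid U, Z_{-\myalpha^*}, X\}$; applying \eqref{treatment.bridge.moment.U} at $A = 1$ collapses the first factor to $\PP(A=1\mid U, Z_{-\myalpha^*}, X) \cdot 1/\PP(A=1\mid U, Z_{-\myalpha^*}, X) = 1$, so the expression reduces to $\EE\{Y(1)\}$. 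A symmetric argument handles the $A=0$ piece, and differencing yields $\tau^*$.

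The main obstacle will be the $h_2 \in L_2(Z_{-\myalpha^*}, X)$ test in Step 1, which amounts to showing $\EE\{A q^*(Z,1,X) \mid Z_{-\myalpha^*}, X\} = 1$ almost surely. The bridge equation directly controls conditional expectations given $W$, while the fortified condition controls conditional expectations (of the signed quantity $(-1)^{1-A} q^*$) given $Z_{-\myalpha^*}$; reconciling these two different conditioning operations through iterated expectations and the conditional independence $W \perp\mspace{-9mu}\perp (Z_{\myalpha^*}, A) \mid U, Z_{-\myalpha^*}, X$ is the most delicate technical step and the one most likely to force invocation of the full strength of Assumption~\ref{assump:treatment.bridge.completeness}.
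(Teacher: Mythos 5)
Your overall architecture matches the paper's: apply Assumption~\ref{assump:treatment.bridge.completeness} to a residual function of $(U,A,Z_{-\myalpha^*},X)$, verify orthogonality against $L_2(W,A,X)$ and against $L_2(Z_{-\myalpha^*},X)$ separately, and your Step 2 is essentially the paper's computation. Your $h_1\in L_2(W,A,X)$ verification is also sound. The gap is exactly where you flagged it, and your proposed resolution does not work. You test the \emph{unsigned} residual $g=\EE\{q^*\mid U,A,Z_{-\myalpha^*},X\}-1/\PP(A\mid U,Z_{-\myalpha^*},X)$ against $h_2\in L_2(Z_{-\myalpha^*},X)$, which requires $\EE\{q^*(Z,A,X)\mid Z_{-\myalpha^*},X\}=2$, equivalently $\EE\{Aq^*(Z,1,X)\mid Z_{-\myalpha^*},X\}=1$. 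The two facts you cite --- that $\EE\{Aq^*(Z,1,X)\mid W,X\}=\EE\{(1-A)q^*(Z,0,X)\mid W,X\}=1$ from the bridge equation, and that the two halves have equal conditional expectations given $(Z_{-\myalpha^*},X)$ from the fortified condition --- do not pin the common value at $1$: a random variable can have conditional mean $1$ given $(W,X)$ and yet a nonconstant conditional mean given $(Z_{-\myalpha^*},X)$ (take both halves equal to $1+\xi$ with $\EE(\xi\mid W,X)=0$ but $\xi$ correlated with $Z_{-\myalpha^*}$). Nor can you appeal to Assumption~\ref{assump:treatment.bridge.completeness} to close this: completeness is the device that converts the orthogonality conditions into $g=0$, so it cannot be invoked to establish one of those orthogonality conditions without circularity.

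The fix, which is what the paper does, is to apply completeness to the \emph{signed} residual $g_{\pm}=\EE\{(-1)^{1-A}q^*\mid U,A,Z_{-\myalpha^*},X\}-(-1)^{1-A}/\PP(A\mid U,Z_{-\myalpha^*},X)$, which is still an element of $L_2(U,A,Z_{-\myalpha^*},X)$ and vanishes if and only if your $g$ does. For this choice the $h_2$ test is immediate: $\EE\{(-1)^{1-A}q^*\mid Z_{-\myalpha^*},X\}=0$ follows directly from $(-1)^{1-A}q^*\in\HH_\gamma$ with $\myalpha=\myalpha^*$ (using $|\myalpha^*|\geq\gamma$), and $\EE\{(-1)^{1-A}/\PP(A\mid U,Z_{-\myalpha^*},X)\mid U,Z_{-\myalpha^*},X\}=1-1=0$, so $\EE\{g_\pm\mid Z_{-\myalpha^*},X\}=0$ without ever needing to identify the common value of the two halves. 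The $h_1$ test for $g_\pm$ goes through by the same tower-property and proxy-independence argument you already gave (the paper phrases it as $\EE\{g_\pm\mid W,A,X\}=0$). With $g_\pm=0$ established, \eqref{treatment.bridge.moment.U} follows by dividing out $(-1)^{1-A}$, and your Step 2 then completes the proof as written.
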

Theorem~\ref{thm:treatment.bridge.identification} provides an alternative identification result, complementing Theorem~\ref{thm:outcome.bridge.identification}.  
Similarly to Theorem~\ref{thm:outcome.bridge.identification}, Theorem~\ref{thm:treatment.bridge.identification} is therefore a multiply robust causal identification result, as it does not require explicit knowledge of the valid treatment confounding proxies $Z_{\myalpha^*}$.
Additionally, Theorem~\ref{thm:treatment.bridge.identification} shows that any solution $q$ solving \eqref{treatment.bridge.moment} with $(-1)^{1-A}q \in \HH_\gamma$ identifies $\tau^*$ without necessitating the uniqueness of the solution.

\begin{remark} \label{remark:alternative.outcome}
Instead of taking a solution to \eqref{outcome.bridge.moment} as a starting point, we can adopt an alternative identifying condition that posits the existence of a function $h^\prime(w,a,x)$ satisfying \eqref{outcome.bridge.moment.U} almost surely.
Then, one can show that any solution $h^\prime$ also identifies $\tau^*$ via \eqref{outcome.identification} and satisfies \eqref{outcome.bridge.moment}. 
Furthermore, replacing Assumption~\ref{assump:outcome.bridge.completeness} with an appropriate completeness assumption that does not involve unobservable variables and unknown valid treatment confounding proxies ensures the uniqueness of $\tilde h$. 
Consequently, under this setup, the solution of \eqref{outcome.bridge.moment} also identifies $\tau^*$ when substituted into \eqref{outcome.identification}.
An analogous alternative existence and completeness framework applies to the fortified treatment confounding bridge function.
For further details, we refer the reader to Theorems A.1 and A.2 in Appendix A.2 of the supplementary material.
\end{remark}

\section{Semiparametric theory and inference} \label{sec:3}

\subsection{Semiparametric local efficiency bound} \label{sec:semiparametric}

In this Section, we study efficient and robust estimation for $\tau^*$ in the presence of potentially invalid treatment confounding proxies.
For a fixed $\gamma \in [K]$, let $\cM_\gamma$ denote the collection of regular laws of the observed data, which only assumes the existence of a solution $h^*$ to \eqref{outcome.bridge.moment} and is otherwise unrestricted.
For a formal definition of a regular law, we refer readers to Chapter 3 of \cite{bickel1993efficient}. 
Let $O = (Y,A,Z,W,X)$  denote the observed data.

To characterize the semiparametric local efficiency bound for $\tau^*$ under model $\cM_\gamma$, we need the following surjectivity condition.

\begin{assumption}[Surjectivity] \label{assump:surjectivity} 
    For a fixed $\gamma \in [K]$, let $\Pi_\gamma : L_2(W,A,X) \to \HH_\gamma$ denote the operator defined by $\Pi_\gamma(h) = \argmin_{d \in \HH_\gamma}\|h - d\|_2$ for $h \in L_2(W,A,X)$.
    Then, $\Pi_\gamma$ is surjective.
\end{assumption}

We note that $\HH_\gamma \subseteq L_2(O)$ is itself a closed subspace, as it is the intersection of finite closed subspaces.
Consequently, the orthogonal projection operator from $L_2(O)$ to $\HH_\gamma$ exists uniquely (see, e.g., Appendix A.2 in \cite{bickel1993efficient}) and the operator $\Pi_\gamma$ is also well-defined, as it can be obtained by restricting the domain of the orthogonal projection operator to $L_2(W,A,X)$.
With a slight abuse of notation, let $\Pi_\gamma$ also denote the orthogonal projection from $L_2(O)$ to $\HH_\gamma$.
Then, the definition of $h^*$ in \eqref{outcome.bridge.moment} is equivalent to the condition $\Pi_\gamma(h^*) = \Pi_\gamma(Y)$. 
Thus, identifying $h^*$ corresponds to solving an inverse problem involving the linear operator $\Pi_\gamma$.

Recall that $h^*$ solves \eqref{outcome.bridge.moment} if and only if there exists $\ell^* \in \cS_\gamma$ such that $\tilde h = h^* + \ell^*$ satisfies \eqref{outcome.bridge.moment2}.
The following theorem identifies an influence function for $\tau^*$ under model $\cM_\gamma$, and derives a semiparametric local efficiency bound for $\cM_\gamma$ at a specific submodel, $\cM_{{\rm eff}, \gamma}$ defined below.

\begin{theorem} \label{thm:influence function}
Suppose that Assumptions~\ref{assump:consistency} and \ref{assump:lower.bound} -- \ref{assump:treatment.bridge.completeness} hold.
Let $h^* \in L_2(W,A,X)$ and $\ell^* \in \cS_\gamma$ be functions such that $\tilde h = h^* + \ell^*$ satisfies \eqref{outcome.bridge.moment2} and $q^*$ be a function satisfying \eqref{treatment.bridge.moment} with $(-1)^{1-A}q^* \in \HH_\gamma$.
\begin{enumerate}
\item An influence function for $\tau^*$ under model $\cM_\gamma$ is given by
\#
& \mathtt{IF}(O;h^*, q^*, \ell^*,\tau^*) = \{h^*(W,1,X) - h^*(W,0,X)\} \nn \\
& + (-1)^{1-A}q^*(Z, A, X)\{Y - h^*(W,A,X) - \ell^*(Z,X)\} - \tau^*. \label{influence.function}
\#
\item The efficient influence function for $\tau^*$ under the semiparametric model $\cM_{\gamma}$, at the submodel $\cM_{{\rm eff}, \gamma}$,
where Assumption~\ref{assump:surjectivity} holds and $h^* + \ell^*$ and $q^*$ are uniquely defined, is given by $\mathtt{IF}(O;h^*, q^*, \ell^*,\tau^*)$.
In particular, the semiparametric local efficiency bound of $\cM_{\gamma}$ at $\cM_{{\rm eff}, \gamma}$ is given by $\sigma_{{\rm ATE}}^2 = {\rm Var}(\mathtt{IF}(O;h^*, q^*, \ell^*,\tau^*))$.
\end{enumerate}
    
\end{theorem}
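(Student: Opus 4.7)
The plan is to prove Theorem~3.1 in two stages: first verify that $\mathtt{IF}(O;h^*,q^*,\ell^*,\tau^*)$ satisfies the defining properties of an influence function under model $\cM_\gamma$, and then upgrade this to the efficient influence function at $\cM_{{\rm eff},\gamma}$ by showing that, under surjectivity, the tangent space there collapses to all of $L_2^0(O)$.

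For Part~1, I would start with the mean-zero check. Proposition~\ref{prop:outcome.bridge.identification2} applied to $\tilde h := h^* + \ell^*$ yields $\tau^* = \EE\{h^*(W,1,X) - h^*(W,0,X)\}$ (since $\ell^* \in \cS_\gamma$ does not depend on $A$), which cancels the $-\tau^*$ contribution; the remaining term $\EE[(-1)^{1-A}q^*(Z,A,X)\{Y-h^*(W,A,X)-\ell^*(Z,X)\}]$ vanishes upon conditioning on $(Z,A,X)$ by \eqref{outcome.bridge.moment2}. For pathwise differentiability, I would take a regular parametric submodel $P_\theta \in \cM_\gamma$ through $P$ with score $S$, write $\tau^*(\theta) = \EE_\theta\{h_\theta^*(W,1,X) - h_\theta^*(W,0,X)\}$, and split the total derivative at $\theta=0$ into a score-weighted term $\EE[\{h^*(W,1,X)-h^*(W,0,X)\}S]$ plus a Gateaux term $\EE\{\dot h_0^*(W,1,X) - \dot h_0^*(W,0,X)\}$, where a dot denotes $\partial_\theta|_{\theta=0}$. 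To convert the Gateaux term into a score-weighted expression, I would differentiate \eqref{outcome.bridge.moment2} to obtain $\EE[\{\dot h_0^* + \dot\ell_0^*\}d] = \EE[\{Y-h^*-\ell^*\}dS]$ for every $d \in L_2(Z,A,X)$, and specialize to $d = (-1)^{1-A}q^* \in \HH_\gamma$. The $\dot\ell_0^*$ contribution disappears via the orthogonality $\HH_\gamma \perp \cS_\gamma$ (verifiable directly: for $d \in \HH_\gamma$ and $s \in L_2(Z_{-\myalpha},X)$ with $|\myalpha| \geq \gamma$, $\EE[ds] = \EE\{s\, \EE(d \mid Z_{-\myalpha},X)\} = 0$), while $\EE[(-1)^{1-A}q^*(Z,A,X)\dot h_0^*(W,A,X)]$ collapses to $\EE\{\dot h_0^*(W,1,X) - \dot h_0^*(W,0,X)\}$ via the tower property and the defining equation \eqref{treatment.bridge.moment} for $q^*$. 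Combining yields $\dot\tau^* = \EE[\mathtt{IF}\cdot S]$.

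For Part~2, I would invoke the standard fact that in a semiparametric model whose tangent space equals $L_2^0(O)$, the influence function is unique and hence coincides with the efficient influence function. It therefore suffices to show that at $\cM_{{\rm eff},\gamma}$ the tangent space is all of $L_2^0(O)$. The only substantive constraint in $\cM_\gamma$ is the existence of $h^*$ solving \eqref{outcome.bridge.moment}, which is equivalent to $\Pi_\gamma(Y) \in \Pi_\gamma(L_2(W,A,X))$. Surjectivity (Assumption~\ref{assump:surjectivity}) forces $\Pi_\gamma(L_2(W,A,X)) = \HH_\gamma$, and $\Pi_\gamma(Y) \in \HH_\gamma$ by definition of the projection, so the constraint is automatically preserved along every regular submodel; the tangent space is thus $L_2^0(O)$. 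The assumed uniqueness of $h^*+\ell^*$ and $q^*$ ensures $\mathtt{IF}$ is unambiguously defined, and $\sigma^2_{\rm ATE} = \mathrm{Var}(\mathtt{IF})$ follows immediately.

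The main technical obstacle is the pathwise differentiation step: along a regular submodel $P_\theta \in \cM_\gamma$ one must argue that $h_\theta^*$ together with a choice of $\ell_\theta^* \in \cS_\gamma$ can be taken jointly differentiable in $\theta$ in an $L_2$ sense, with $\dot\ell_0^* \in \cS_\gamma$, so that differentiating \eqref{outcome.bridge.moment2} under the expectation is justified. This is essentially an implicit-function-type argument around the operator $\Pi_\gamma$ and relies on the regularity built into $\cM_{{\rm eff},\gamma}$. Everything else---the orthogonality $\HH_\gamma \perp \cS_\gamma$, the IPW-style identity converting $(-1)^{1-A}/\PP(A \mid W,X)$ into treatment-contrast differences, and the mean-zero check---is routine.
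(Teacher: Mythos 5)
Your proposal is correct and follows the same overall strategy as the paper---pathwise differentiation of $\tau_t = \EE_t\{h_t(W,1,X)-h_t(W,0,X)\}$, conversion of the Gateaux term via $q^*$ and \eqref{treatment.bridge.moment}, and a surjectivity argument showing the tangent space at $\cM_{{\rm eff},\gamma}$ is unrestricted---but you route the key differentiation step differently. The paper differentiates the moment restriction \eqref{outcome.bridge.moment}, whose test-function class $\HH_\gamma$ depends on the law $f_t(Z,A\mid X)$; this forces an explicit computation of $\partial d_t/\partial t$ by differentiating the membership constraint $\EE_t\{d_t\mid Z_{-\myalpha},X\}=0$, and the resulting correction term $d\cdot S(Z,A,X)\,\epsilon$ is what ultimately produces the centering $\EE(\epsilon\mid Z,A,X)=\ell^*(Z,X)$ in the influence function. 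You instead differentiate the equivalent conditional-moment form \eqref{outcome.bridge.moment2}, whose test functions are law-independent, pushing the law-dependence into the nuisance $\ell^*_\theta$ and then eliminating $\dot\ell_0^*$ via the orthogonality $\HH_\gamma=\cS_\gamma^{\perp}$ (which the paper establishes in proving Proposition~\ref{prop:outcome.bridge.identification2}). The two computations are dual and land on the same identity; yours is lighter on algebra but, as you correctly flag, shifts the regularity burden onto showing that $\ell^*_\theta$ can be chosen differentiable with $\dot\ell_0^*\in\cS_\gamma$ (complicated slightly by $\cS_\gamma$ being defined as a closure under the true law), whereas the paper's burden is the differentiability of $t\mapsto d_t$. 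Your Part~2 argument coincides with the paper's: surjectivity of $\Pi_\gamma$ makes the linearized constraint $\Pi_\gamma(g)=\Pi_\gamma(\{S(O)-S(Z,A,X)\}\epsilon)$ solvable for every score, so the tangent space is everything and the stated influence function is the efficient one.
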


The influence function $\mathtt{IF}$ resembles that of $\tau^*$ in PCI under the assumption of known, valid treatment confounding proxies \citep{cui2024}. 
In particular, due to the additive form of $\tilde h = h^* + \ell^*$, the influence function in \eqref{influence.function} can be equivalently expressed as
\$
\tilde h(W,1,Z,X) - \tilde h(W,0,Z,X) + (-1)^{1-A}q^*(Z,A,X)\{Y - \tilde h(W,A,Z,X)\} - \tau^*.
\$
By Proposition~\ref{prop:outcome.bridge.identification2}, $\tilde h$ lies in $L_2(W,A,X) + L_2(Z_{-\myalpha^*},X)$ and satisfies \eqref{outcome.bridge.moment2}, implying $\tilde h$ can be considered as a conventional outcome confounding bridge function \citep{miao2024confounding} when $Z_{\myalpha^*}$ and $Z_{-\myalpha^*}$ are viewed as treatment confounding proxies and baseline covariates, respectively.
This observation further illustrates the similarity between our influence function and that of \cite{cui2024}. 
However, a key distinction is that $\tilde h$ in our framework is constructed without knowledge of $\myalpha^*$, whereas the conventional outcome confounding bridge function assumes $\myalpha^*$ is known.

Theorem~\ref{thm:influence function} shows that the estimator $\tilde \tau$ defined by the estimating equation $0 = \PP_n(\mathtt{IF}(O;h^*, q^*, \ell^*,\tilde \tau))$ attains the semiparametric local efficiency bound of $\cM_{\gamma}$ at $\cM_{{\rm eff}, \gamma}$, where $\PP_n$ is the empirical average operator, i.e. $\PP_n(V) = n^{-1}\sn V_i$ for any random variable $V$.
However, this estimator is infeasible in practice since it depends on the unknown nuisance functions $h^*, q^*$, and $\ell^*$.
Note that the definitions of fortified bridge functions in \eqref{outcome.bridge.moment} and \eqref{treatment.bridge.moment} involve solving Fredholm integral equations of the first kind, which are well-known to often be ill-posed problems.
To illustrate this, for a fixed $\gamma \in [K]$, recall the definition of $\Pi_\gamma$ in Assumption~\ref{assump:surjectivity} and let $\Pi_\gamma^*: \HH_\gamma \to L_2(W,A,X)$ denote the conditional expectation given by $\Pi_\gamma^*(d) = \EE\{d(Z,A,X)|W,A,X\}$ for $d \in \HH_\gamma$.
As previously discussed, the definition of $h^*$ in \eqref{outcome.bridge.moment} is equivalent to the condition $\Pi_\gamma(h^*) = \Pi_\gamma(Y)$, and similarly, the definition of $q^*$ in \eqref{treatment.bridge.moment} corresponds to the condition $\Pi^*_\gamma((-1)^{1-A}q^*) = (-1)^{1-A}/\PP(A|W,X)$.
Thus, identifying these functions reduces to solving these inverse problems.
Even if the operators are injective under suitable regularity conditions ensuring the existence and uniqueness of a solution to these equations, the problems may still be effectively ill-posed.
For instance, if $\Pi_\gamma$ (or $\Pi^*_\gamma$) is a compact operator and zero is a limit point of its eigenvalues, as can sometimes happen in practice, then the inverse operator $\Pi_\gamma^{-1}$ (or $(\Pi^*_\gamma)^{-1}$) may become discontinuous. 
This discontinuity can make the resulting solutions highly sensitive to small perturbations in the observed data due to sampling variability, leading to instability \citep{o1986statistical}.

In the context of conventional PCI, several approaches have been introduced to construct nonparametric estimators of bridge functions (see, e.g. \cite{singh2020kernel, kallus2021causal, mastouri2021proximal, ghassami2022minimax}).
However, as the subspace $\HH_\gamma$ depends on the unknown conditional law $f(Z,A \mid X)$, directly extending these nonparametric approaches to our setting presents significant new challenges.
To address this issue, we adopt a practical semiparametric modeling strategy that effectively regularizes the problem.
Similar to the approach of \cite{cui2024}, we proceed by a priori restricting unknown nuisance functions to follow certain smooth parametric working models, yet still allowing the observed data distribution to otherwise remain unrestricted.
In Section~\ref{sec:estimation.inference}, we introduce a rich class of multiply robust estimators for $\tau^*$ when the parameters of these models are estimated appropriately, without prior knowledge of the valid treatment confounding proxies.

\subsection{Estimation and inference of the average treatment effect} \label{sec:estimation.inference}

We introduce three strategies based on working submodels for the nuisance functions.
We assume that the conditional law $f(Z, A \mid X; \ba)$ is a prespecified parametric model, smooth in $\ba$, and that $h(W, A, X; \bb)$, $\ell(Z, X; \br)$, and $q(Z, A, X; \bt)$ are known parametric functions, each smooth in its respective parameter. 
We impose the constraints $h(0,0,X;\bb) = 0$ for all $\bb$ and $\ell(Z, X; \br) \in \cS_\gamma$ for all $\br$.
The first strategy specifies a semiparametric model denoted $\cM_{1, \gamma}$, which posits  parametric models for $f^*(Z,A|X) = f(Z,A|X;\ba^*)$ and $h(W,A,X) = h(W,A,X;\bb^*)$, where $(\ba^*, \bb^*)$ are unknown finite-dimensional parameters and $\ba$ and $\bb$ are variation-independent.
The second strategy specifies a semiparametric model $\cM_{2, \gamma}$, which posits a parametric model for $h(W,A,X) = h(W,A,X;\bb^*)$ and $\ell^*(Z,X) = \ell(Z,X;\br^*)$, where $(\bb^*, \br^*)$ are unknown finite-dimensional parameters and $\bb$ and $\br$ are variation-independent.
Finally, the third strategy specifies a semiparametric model $\cM_{3, \gamma}$, which posits a parametric model for $f^*(Z,A|X) = f(Z,A|X;\ba^*)$ and $q^*(Z,A,X) = q(Z,A,X;\bt^*)$, where $(\ba^*, \bt^*)$ are unknown finite-dimensional parameters and $\ba$ and $\bt$ are variation-independent.

Let $\hat\ba$ satisfy the estimating equation
\$
0 = \frac{1}{n} \sn \frac{\partial}{\partial \ba} \ln f(Z_i, A_i|X_i;\ba)\bigg|_{\ba = \hat \ba},
\$
so that $\hat\ba$ is the maximum likelihood estimator (MLE) under models $\cM_{1, \gamma}$ and $\cM_{3, \gamma}$.
Define $\HH_\gamma(\hat\ba)$ analogously to $\HH_\gamma$, but with all expectations evaluated under the law $f(Z, A \mid X; \hat\ba)$ in \eqref{def:H.gamma}, and let $d(\cdot;\hat\ba)$ be a mapping from $L_2(Z,A,X)$ to $\HH_\gamma(\hat\ba)$ that depends on $f(Z,A|X;\hat\ba)$. 
We present details on appropriate choices of $d(\cdot;\hat\ba)$ following Theorem~\ref{thm:mr.estimator} below.

Let $\mathbf{c}_0(\hat\ba)$ be a vector-valued function with the same dimension as $\bb$, where each element lies in $\HH_\gamma(\hat\ba)$ via the mapping $d(\cdot; \hat\ba)$.
Define $\hat\bb = \hat\bb(\mathbf{c}_0(\hat\ba))$ as the solution to
\#
0 = \frac{1}{n} \sn \mathbf{c}_0(Z_i, A_i, X_i;\hat\ba) \{Y_i - h(W_i, A_i, X_i;\bb) - \ell(Z_i, X_i;\hat\br(\bb))\}, \label{def:h.estimates}
\#
where $\hat\br(\bb)$ satisfies
\$
0 = \frac{1}{n}\sn \mathbf{c}_1(Z_i, A_i, X_i)\{Y_i - h(W_i, A_i, X_i;\bb) - \ell(Z_i, X_i;\br)\},
\$
where $\mathbf{c}_1(Z_i, A_i, X_i)$ is a vector function of the same dimension as $\br$.
Finally, to construct an estimator for $\bt^*$, we note that, because $q^*$ satisfies  \eqref{treatment.bridge.moment}, 
\$
\EE\{(-1)^{1-A}q^*(Z,A,X)c(W,A,X)\} & = \EE\bigg\{\frac{(-1)^{1-A}}{\PP(A|W,X)}c(W,A,X)\bigg\} \\
& = c(W,1,X) - c(W,0,X)
\$
for any function $c$, with $(-1)^{1-A}q^* \in \HH_\gamma$.
Motivated by this, for any vector-valued function $\mathbf{c}_2$ of the same dimension as $\bt$, define $\hat\bt$ as the solution to
\$
0 = \frac{1}{n} \sn \{\mathbf{c}_2(W_i, A_i, X_i) d((-1)^{1 - A}q(\bt);\hat\ba))(Z_i, A_i, X_i) - \mathbf{c}_2(W_i, 1, X_i) + \mathbf{c}_2(W_i, 0, X_i)\},
\$
where $q(\bt) = q(Z,A,X;\bt)$.

Based on the identification results established in Sections~\ref{sec:outcome.bridge} and~\ref{sec:treatment.bridge}, we define the \textit{fortified proximal outcome regression} (fPOR) estimator $\hat\tau_{\rm fPOR}$ and the \textit{fortified proximal inverse probability weighted} (fPIPW) estimator $\hat\tau_{\rm fPIPW}$ as follows:
\#
& \hat \tau_{{\rm fPOR}} := \PP_n\{h(W,1,X;\hat \bb) - h(W,0,X;\hat\bb)\},  \label{def:rpor} \\
& \hat \tau_{{\rm fPIPW}} := \PP_n\{d((-1)^{1-A}q(\hat\bt);\hat\ba)(Z,A,X)Y\}. \label{def:rpipw}
\#

The estimators $\hat\bb$ (under $\cM_{1, \gamma}$ or $\cM_{2, \gamma}$) and $\hat\bt$ (under $\cM_{3, \gamma}$) are consistent and asymptotically normal (CAN).
Consequently, $\hat \tau_{{\rm fPOR}}$ provides valid inference for $\tau^*$ if either $\cM_{1, \gamma}$ or $\cM_{2, \gamma}$ holds, and $\hat\tau_{{\rm fPIPW}}$ provides valid inference if $\cM_{3, \gamma}$ holds.

However, $\hat\tau_{{\rm fPOR}}$ and $\hat\tau_{{\rm fPIPW}}$ may be severely biased and yield invalid inference if their corresponding models, $\cM_{1, \gamma}$ and $\cM_{2, \gamma}$ or $\cM_{3, \gamma}$, respectively, are misspecified.
To mitigate this, we propose the following \textit{fortified proximal multiply robust} (fPMR) estimator:
\#
& \hat \tau_{{\rm fPMR}} = \PP_n[h(W,1,X;\hat\bb) - h(W,0,X;\hat\bb) \nn \\
& ~~~~~~~~~~~~~~~+ d((-1)^{1-A}q(\hat\bt);\hat\ba)(Z,A,X)\{Y - h(W,A,X;\hat \bb) - \ell(Z_i, X_i;\hat\br)\}]. \label{def:mr.estimator}
\#
The next theorem establishes that, under the union model $\cM_{1, \gamma} \cup \cM_{2, \gamma} \cup \cM_{3, \gamma}$, $\hat \tau_{{\rm fPMR}}$ is multiply robust and locally semiparametric efficient.

\begin{theorem}\label{thm:mr.estimator}
Suppose that the regularity conditions given in the supplementary material hold and that $\ba, \bb, \br$ and $\bt$ are variation-independent.
Then, the estimator $\hat \tau_{{\rm fPMR}}$ defined in \eqref{def:mr.estimator} is a consistent and asymptotically normal estimator of $\tau^*$ under the semiparametric union model $\cM_{1, \gamma} \cup \cM_{2, \gamma} \cup \cM_{3, \gamma}$. Furthermore, $\hat \tau_{{\rm fPMR}}$ is semiparametrically locally efficient under the semiparametric union model, at the intersection submodel $\cM_{{\rm eff}, \gamma} \cap \cM_{1, \gamma} \cap \cM_{2,\gamma} \cap \cM_{3, \gamma}$.
\end{theorem}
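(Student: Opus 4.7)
The plan is to treat $\hat\tau_{\rm fPMR}$ as the last coordinate of a joint M-estimator that also contains $(\hat\ba,\hat\bb,\hat\br,\hat\bt)$ as nuisance coordinates, with the $\tau$-equation being $\PP_n\psi=0$ where $\psi(O;\ba,\bb,\br,\bt,\tau)$ denotes the integrand inside \eqref{def:mr.estimator}. Under the stated regularity, each nuisance estimator converges to a population limit $(\ba^\dagger,\bb^\dagger,\br^\dagger,\bt^\dagger)$ depending on which submodels hold, and the joint system is CAN by standard arguments (e.g.\ Chapter~5 of van der Vaart). Consistency of $\hat\tau_{\rm fPMR}$ on the union $\cM_{1,\gamma}\cup\cM_{2,\gamma}\cup\cM_{3,\gamma}$ therefore reduces to the identity $\EE\psi(O;\ba^\dagger,\bb^\dagger,\br^\dagger,\bt^\dagger,\tau^*)=0$ under each submodel separately, which is the main algebraic content.

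The central ingredient is the $L_2$-orthogonality $\HH_\gamma\perp\cS_\gamma$: for any $d\in\HH_\gamma$ and $s\in L_2(Z_{-\myalpha},X)$ with $|\myalpha|\geq\gamma$, iterated expectations give $\EE[d\cdot s]=\EE[s\,\EE\{d|Z_{-\myalpha},X\}]=0$. Since $\ell(\cdot;\br)\in\cS_\gamma$ by construction and $d((-1)^{1-A}q(\bt);\ba)\in\HH_\gamma(\ba)$, the $\ell$-augmentation drops out whenever $\ba^\dagger=\ba^*$, so that $\HH_\gamma(\ba^\dagger)=\HH_\gamma$. Case analysis then proceeds as follows. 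Under $\cM_{1,\gamma}$, $\ba^\dagger=\ba^*$ and $h(\bb^*)=h^*$ is a fortified outcome bridge; orthogonality kills the $\ell$ term and \eqref{outcome.bridge.moment} eliminates $d\cdot(Y-h)$, leaving $\EE[h(W,1;\bb^*)-h(W,0;\bb^*)]=\tau^*$ by Theorem~\ref{thm:outcome.bridge.identification}. Under $\cM_{2,\gamma}$, $\tilde h=h(\bb^*)+\ell(\br^*)$ satisfies \eqref{outcome.bridge.moment2} so $\EE\{Y-h-\ell\mid Z,A,X\}=0$ and the entire augmentation vanishes conditionally; since $\ell$ depends only on $(Z,X)$, Proposition~\ref{prop:outcome.bridge.identification2} yields $\EE[h(W,1;\bb^*)-h(W,0;\bb^*)]=\tau^*$. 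Under $\cM_{3,\gamma}$, $\ba^\dagger=\ba^*$, $q(\bt^*)=q^*$, and $(-1)^{1-A}q^*\in\HH_\gamma$, so $d$ acts as the identity on it; orthogonality again removes $\ell$, while $\EE\{(-1)^{1-A}q^*\mid W,A,X\}=(-1)^{1-A}/\PP(A|W,X)$ collapses the $h$-augmentation against the regression term, and Theorem~\ref{thm:treatment.bridge.identification} delivers $\tau^*$.

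For asymptotic normality on the union model, I would Taylor expand $\PP_n\psi$ jointly around $(\ba^\dagger,\bb^\dagger,\br^\dagger,\bt^\dagger,\tau^*)$ and substitute the linear expansions of the nuisance estimators to read off the influence function. At the intersection submodel $\cM_{{\rm eff},\gamma}\cap\cM_{1,\gamma}\cap\cM_{2,\gamma}\cap\cM_{3,\gamma}$ every limiting value coincides with the true parameter, so $\psi$ reduces pointwise to $\mathtt{IF}(O;h^*,q^*,\ell^*,\tau^*)$. Local efficiency then follows by showing Neyman orthogonality of $\EE\psi$ in each nuisance direction: differentiating the three case-by-case identities above at the truth gives $\partial\EE\psi/\partial\ba=\partial\EE\psi/\partial\bb=\partial\EE\psi/\partial\br=\partial\EE\psi/\partial\bt=0$, so the correction terms from nuisance estimation are $o_p(1)$ and $\sqrt{n}(\hat\tau_{\rm fPMR}-\tau^*)$ has influence function $\mathtt{IF}$, matching the bound $\sigma_{\rm ATE}^2$ of Theorem~\ref{thm:influence function}.

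The main obstacle will be the bookkeeping of the pseudo-true limits when a working model is misspecified. Under $\cM_{3,\gamma}$ alone, $\bb^\dagger$ and $\br^\dagger$ can be arbitrary, yet the argument must still close using only properties of $q^*$ and the projection $d$; conversely, under $\cM_{1,\gamma}$ or $\cM_{2,\gamma}$, $\bt^\dagger$ may be a pseudo-truth, and one must still have $d((-1)^{1-A}q(\bt^\dagger);\hat\ba)\in\HH_\gamma$ so that $\HH_\gamma\perp\cS_\gamma$ can be invoked to eliminate $\ell$. This forces careful specification of the mapping $d(\cdot;\hat\ba)$ and of $\mathbf{c}_0,\mathbf{c}_1,\mathbf{c}_2$ so that each nuisance estimator is CAN at its pseudo-truth and that $d(\cdot;\hat\ba)$ lies in the correct subspace whenever $\hat\ba\to\ba^*$; these are the regularity conditions in the supplementary material, and verifying that the construction described after the theorem statement actually delivers them is where the technical work concentrates.
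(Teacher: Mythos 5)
Your proposal is correct and follows essentially the same route as the paper: a case-by-case verification that $\EE\{G(O;\tilde\ba,\tilde\bb,\tilde\br,\tilde\bt)\}=\tau^*$ under each of $\cM_{1,\gamma}$, $\cM_{2,\gamma}$, $\cM_{3,\gamma}$ (using the orthogonality $\HH_\gamma\perp\cS_\gamma$, equation \eqref{outcome.bridge.moment} under $\cM_{1,\gamma}$, the conditional mean-zero residual under $\cM_{2,\gamma}$, and Theorem~\ref{thm:treatment.bridge.identification} together with the moment identity for $q^*$ under $\cM_{3,\gamma}$), followed by a Taylor expansion of the estimating function and the observation that multiple robustness forces $\EE\{\partial G/\partial\zeta^\T\}=0$ at the intersection submodel, so the influence function reduces to $\mathtt{IF}$ and local efficiency follows from Theorem~\ref{thm:influence function}. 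Your closing remarks about pseudo-true limits and the requirement that $d(\cdot;\hat\ba)$ land in $\HH_\gamma(\hat\ba)$ correctly identify where the paper's supplementary regularity conditions (adapted from Robins et al.) do the remaining work.
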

Theorem~\ref{thm:mr.estimator} proves that our proposed estimator is multiply robust without requiring correct specification of a set of valid treatment confounding proxies. 
It delivers valid inference provided that at least one of the following holds: the conditional distribution of $(Z,A)$ given $X$, and the fortified outcome confounding bridge function are correctly specified; both the fortified outcome confounding bridge function and the conditional expectation of a corresponding residual are correctly specified; or the conditional distribution of $(Z,A)$ given $X$ and the fortified treatment confounding bridge function are correctly specified.

We now describe two separate approaches for constructing functions in $\HH_\gamma(\hat\ba)$, each based on a specific mapping $d(\cdot; \hat\ba)$ from $L_2(Z, A, X)$ to $\HH_\gamma(\hat\ba)$.
The following lemma generalizes Theorem 1 of \citet{tchetgen2010doubly} and provides an explicit representation of $\HH_\gamma$.

\begin{lemma} \label{lem:change.of.measure}
Let $f^*(Z, A| X)$ be any fixed user-specified conditional law of $(A, Z)$ given $X$ such that $(A,Z)$ are conditionally independent given $X$ under the law $f^*$. Moreover, suppose that $f^*$ is absolutely continuous with respect to the true law $f(Z, A|X)$ and $f/f^* \in L_2(Z, A, X)$.
Then, the set of functions $\HH_\gamma$ is $\{d^\dagger(g)(Z,A,X) : g \in L_2(Z,A,X)\}$, where
     \$
     d^\dagger(g)(Z,A,X) = \frac{f^*}{f}\bigg\{g - \sum_{i = \gamma}^K \sum_{\myalpha \in \cP_{i}([K])} \alpha_i\EE^*(g|Z_{-\myalpha}, X)\bigg\}.
     \$
     Here, the coefficients $\{\alpha_i\}_{i = \gamma}^K$ are explicitly specified in the proof, and the conditional expectations $\EE^*(\cdot|\cdot)$ are taken with respect to the probability law $f^*$.
\end{lemma}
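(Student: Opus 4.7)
The plan is to prove both directions of the claimed equality by first reducing the problem to the simpler reference measure $f^*$ via a change of measure, and then solving the resulting inclusion--exclusion problem to identify the coefficients $\alpha_i$.

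First, I would establish a change-of-measure identity that converts the defining conditions of $\HH_\gamma$ (which involve $f$-conditional expectations) into statements about $f^*$-conditional expectations. For any $d \in L_2(Z,A,X)$, write $d = (f^*/f)\,g$ with $g \in L_2(Z,A,X)$ (the assumption $f/f^* \in L_2$ guarantees a bijection with bounded-norm control). A direct computation gives
\[
\EE_f[d \mid Z_{-\myalpha}, X] \;=\; \frac{f^*(Z_{-\myalpha}\mid X)}{f(Z_{-\myalpha}\mid X)}\,\EE^*\!\left[g \mid Z_{-\myalpha}, X\right],
\]
so $d \in \HH_\gamma$ if and only if $\EE^*[g \mid Z_{-\myalpha}, X] = 0$ for every $\myalpha \in \cP_{\geq \gamma}([K])$. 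Thus the problem becomes one of projecting inside $L_2^*(Z,A,X)$ onto the analogous subspace defined under $f^*$.

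Next, I would use the conditional independence structure under $f^*$ — where $A, Z_1, \ldots, Z_K$ can be treated as jointly independent given $X$ — to iterate conditional expectations via the clean identity
\[
\EE^*\!\left[\EE^*(g \mid Z_{-\myalpha}, X) \,\big|\, Z_{-\beta}, X\right] \;=\; \EE^*\!\left[g \mid Z_{-(\myalpha \cup \beta)}, X\right].
\]
Applying this with the candidate function $g - \sum_{i=\gamma}^K \sum_{\myalpha \in \cP_i([K])} \alpha_i\, \EE^*(g \mid Z_{-\myalpha}, X)$ in place of $g$ and grouping the $\myalpha$'s by the cardinality $j = |\myalpha \cup \beta|$ reduces the vanishing requirement for every $\beta$ with $|\beta| \geq \gamma$ to a triangular system of combinatorial identities in the $\alpha_i$'s, whose coefficients are products of binomial numbers $\binom{K-|\beta|}{j-|\beta|}\binom{|\beta|}{i - j + |\beta|}$. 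Solving this system by inclusion--exclusion yields an explicit formula for each $\alpha_i$ and certifies that $d^\dagger(g) \in \HH_\gamma$ for every $g \in L_2(Z,A,X)$.

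Finally, to obtain the reverse inclusion, take any $d \in \HH_\gamma$ and set $g = (f/f^*)\,d$. By the change-of-measure identity, $\EE^*[g \mid Z_{-\myalpha}, X] = 0$ for every $\myalpha \in \cP_{\geq \gamma}([K])$, so each term in the sum defining $d^\dagger(g)$ vanishes and $d^\dagger(g) = (f^*/f)\,g = d$, establishing surjectivity.

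The main technical obstacle is the combinatorial step: determining the coefficients $\alpha_i$ explicitly and verifying the corresponding binomial identities that ensure the full inclusion--exclusion cancellation simultaneously for every $\beta$ with $|\beta| \in \{\gamma, \gamma+1, \ldots, K\}$. This is precisely the extension of Theorem~1 of \citet{tchetgen2010doubly} — which handled a single collection of orthogonality constraints — to the nested family $\{\cP_i([K])\}_{i \geq \gamma}$ arising here, and it is where the dependence on the chosen lower bound $\gamma$ enters the formula.
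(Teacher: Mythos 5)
Your proposal is correct and follows essentially the same route as the paper's proof: the same change-of-measure identity reducing membership in $\HH_\gamma$ to vanishing $\EE^*$-conditional expectations, the same semigroup identity $\EE^*[\EE^*(g\mid Z_{-\myalpha},X)\mid Z_{-\beta},X]=\EE^*[g\mid Z_{-(\myalpha\cup\beta)},X]$ under conditional independence, and the same observation that the correction terms vanish identically on $\HH_\gamma$ for the reverse inclusion; the paper merely performs the two reductions in the opposite order (combinatorics under independence first, change of measure second). The only piece you defer --- solving the triangular system --- is exactly where the paper's explicit recursion $\alpha_\gamma=1$ and $\sum_{j=\gamma}^{i}\binom{\gamma}{i-j}\alpha_j=0$ for $i>\gamma$ comes from (it suffices to enforce cancellation for $|\beta|=\gamma$ only, by iterated expectations), and your counting of the sets $\myalpha$ with $\myalpha\cup\beta=\sigma$ matches the paper's up to relabeling of indices.
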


Alternatively, since $\HH_\gamma$ is the intersection of the closed subspaces, any function $d^{(0)}(Z,A,X) \in L_2(Z,A,X)$ can be mapped to an element in $\HH_\gamma$ using the alternating conditional expectations (ACE) algorithm \citep{breiman1985estimating, bickel1993efficient}. 
Let $\myalpha_1, \ldots, \myalpha_M$ enumerate all subsets of $[K]$ of size $\gamma$, where $M = \binom{K}{\gamma}$. 
Starting from $d^{(0)}$, the ACE algorithm iteratively computes the sequence
\#
d^{(M\cdot m + j)}(Z,A,X) = d^{(M \cdot m + j - 1)}(Z,A,X) - \EE\{d^{(M \cdot m + j - 1)}(Z,A,X)|Z_{-\myalpha_j}, X\} \label{ace.algorithm}
\#
for $1 \leq j \leq M$ and $m \geq 1$.
This iterative procedure converges to a function $d^\ddag(d^{(0)})(Z,A,X)$ in $\HH_\gamma$ as $m \to \infty$.

Given a function $d^{(0)} \in L_2(Z,A,X)$, let $d^\dagger(d^{(0)}; \hat\ba)(Z, A, X)$ and $d^\ddag(d^{(0)}; \hat\ba)(Z, A, X)$ denote the functions $d^\dagger(d^{(0)})(Z,A,X)$ and $d^\ddag(d^{(0)})(Z,A,X)$, respectively, with expectations taken under the law $f(Z, A \mid X; \hat\ba)$. 
Then, both $d^\dagger$ and $d^\ddag$ define valid mappings from $L_2(Z, A, X)$ to $\HH_\gamma(\hat\ba)$.

\begin{remark}
When constructing a function in $\HH_\gamma(\hat\ba)$, the mapping $d^\dagger$ involves weighting by the inverse of the joint conditional law $f(Z,A \mid X;\hat\ba)$, 
which can lead to instability if the estimated law is either inaccurate or reaches values near zero. 
To mitigate this, we use the $d^\ddag$ mapping implemented via the ACE algorithm in both simulations and data analysis.
Following the approach of \citet{breiman1985estimating} and \citet{vansteelandt2008multiply}, we adopt a practical strategy that specifies separate working models for the conditional expectations in equation~\eqref{ace.algorithm}, fitting each iteratively using standard regression methods.
A limitation of this approach is that the functions obtained via the ACE algorithm lack closed-form expressions, preventing direct computation of the asymptotic variance through derivatives of the estimating equations. 
Consequently, following \citet{vansteelandt2008multiply}, we employ the nonparametric bootstrap to conduct inference under this practical implementation of the ACE algorithm.
\end{remark}

\section{Simulation} \label{sec:simulation}

In this section, we perform Monte Carlo experiments to assess the finite-sample performance of the proposed estimators.
We first describe the data-generating process (DGP).
We consider the candidate treatment confounding proxies $Z = (Z_1, Z_2)$ with $K = 2$.
We generated $X, U, Z_2$ by the multivariate normal distribution,
\$
\left(\begin{array}{c}
     X  \\
     Z_2  \\
     U
\end{array} \right)  \sim 
{\rm MVN} \left( \left( \begin{array}{c}
     0  \\
     0  \\
     0
\end{array} \right), \left( \begin{array}{cccc}
     1 & 1/2 & 1/2  \\
     1/2 & 1 & 1/2  \\
     1/2 & 1/2 & 1 
\end{array}\right) \right).
\$
We then generate $A$ and $Z_1$ as follows.
\$
& \PP(A = 1 | X, Z_2, U) = \{1 + \exp(-2(U - 0.5 X))\}^{-1}, \\
& Z_1 \sim \mathcal{N}(2(-1)^{1-A}(U - 0.5 X), 0.5^2).
\$
Finally, $W$ and $Y$ are generated as follows:
\$
W \sim \mathcal{N}(-1 - X - 0.5 U + 0.25 Z_2, 0.5^2), \\
Y \sim \mathcal{N}(1 + X + 0.25 U + \tau^*A - 0.25 Z_2, 0.25^2).
\$

From the definition of the DGP, $Z_1$ is the valid treatment confounding proxy, while $Z_2$ is invalid.
Moreover, in Appendix A.4 of the supplementary material, we show that the above DGP is compatible with the following parametric models of $h,\ell$ and $q$:
\#
h(W,A,X;\bb) &= b_a A + b_w W, \label{h.para}\\
\ell(Z,X;\br) &= r_0 + r_z^\T Z + r_x X, \label{ell.para} \\
q(Z,A,X;\bt) &=  1 + \exp(t_0 + t_z^\T Z + t_a A + t_x X), \nn
\#
where $b_a = \tau^*, b_w = -1/2$, $r_0 = 1/2, r_z = (0, -1/8)^\T, r_x = 1/2$ and $t_0 = -1/8, t_z = (-1, 0)^\T, t_a = 0, t_x = 0$.
We choose $\tau^* = 2$.

We implemented our proposed estimators: fortified proximal analogs of outcome regression estimator $\hat \tau_{{\rm fPOR}}$ defined in  \eqref{def:rpor}, fortified proximal inverse probability weighted estimator $\hat \tau_{{\rm fPIPW}}$ defined in \eqref{def:rpipw}, and fortified proximal multiply robust estimator $\hat \tau_{{\rm fPMR}}$.
We applied the ACE algorithm under the practical strategy described in Section~\ref{sec:estimation.inference} using linear regression models for the conditional expectations in \eqref{ace.algorithm}.
Standard errors were obtained via the ordinary nonparametric bootstrap with 500 bootstrap samples, excluding the most extreme $1\%$ of estimates to improve stability.

\begin{table}
  \centering
  \caption{Simulation results for various estimators: absolute value of the empirical bias (Bias) ($\times 10^{-2}$) and standard deviation (SD) ($\times 10^{-2}$) of the estimators, average of the estimated standard errors (SE) ($\times 10^{-2}$), and coverage of $95\%$ confidence intervals (Cov).}
  \begin{tabular}{lrrrr}
    \midrule
    {Method} & {Bias} & {SD} & {SE} & {Cov} \\
    \midrule
    $\hat \tau_{{\rm fPOR}, 1}$ & 0.7 & 5.8 & 8.6 & 97.2 \\
    $\hat \tau_{{\rm fPIPW}}$ & 1.0 & 4.9 & 3.5 & 92.3 \\
    $\hat \tau_{{\rm fPMR}}$ & 0.0 & 3.3 & 3.0 & 96.2 \\
    $\hat \tau_{{\rm PDR}, 1}$ & 0.1 & 2.7 & 2.5 & 96.3 \\
    $\hat \tau_{{\rm PDR}, 2}$ & 26.0 & 22.1 & 16.1 & 45.2 \\
    $\hat \tau_{{\rm DR}}$ & 15.6 & 1.1 & 1.2 & 0\\
    \bottomrule
  \end{tabular}
  \label{table1}
\end{table}

For comparative purposes, we also implemented several benchmark estimators, including two conventional proximal estimators that assume knowledge of valid treatment confounding proxies \citep{cui2024}, as well as a standard doubly robust estimator that ignores unmeasured confounding and treats all of $Z$, $W$, and $X$ as baseline covariates.
The conventional proximal estimators, denoted by $\hat{\tau}_{{\rm PDR}, j}$ for $j = 1, 2$, are proximal analogs of doubly robust estimators that use $Z_j$ as the treatment confounding proxy while incorporating the remaining proxy along with $X$ as baseline covariates. 
Among these, $\hat{\tau}_{{\rm PDR}, 1}$ is regarded as the oracle estimator since it relies on a correctly specified valid proxy. 
In contrast, $\hat{\tau}_{{\rm PDR}, 2}$ incorrectly treats the invalid proxy as if it were valid.
The standard doubly robust estimator is given by
\$
\hat \tau_{{\rm DR}} = \PP_n\bigg[ \hat \EE(Y|L, A = 1) - \hat \EE(Y|L, A = 0) + \frac{(-1)^{1-A}}{\hat f(A|L)}\{Y - \hat E(Y|L, A)\} \bigg],
\$
where $\hat{f}(A \mid L)$ denotes an estimator of the conditional law of $A$ given $L$, and $\hat{\EE}(Y \mid L, A)$ is an estimator of the conditional expectation of $Y$ given $L$ and $A$. 
These quantities were estimated via standard logistic and linear regression, respectively.

We considered a sample size of $n = 3000$ over $1000$ Monte Carlo samples. 
Table~\ref{table1} summarizes the simulation results. 
As expected, the oracle estimator $\hat{\tau}_{{\rm PDR}, 1}$ exhibited negligible bias and the lowest standard deviation among the six estimators evaluated. 
Since all parametric models were correctly specified, our proposed estimators achieved small biases and coverage probabilities close to the nominal level, even in the absence of knowledge about the valid proxy, which is consistent with our theoretical guarantees. 
In contrast, $\hat{\tau}_{{\rm PDR}, 2}$ suffered from substantial biases and under-coverage, attributable to the use of invalid treatment confounding proxies. 
Similarly, the standard doubly robust estimator $\hat \tau_{{\rm DR}}$ is severely biased due to unmeasured confounding.

Motivated by \cite{kang2007demystifying}, we evaluate the robustness of our proposed estimators under model misspecification by considering scenarios in which one or both confounding bridge functions are incorrectly specified. 
Specifically, we consider the following three scenarios:
\begin{enumerate}
    \item[I.] We used $W^* = W\cdot |X|^{1/2}$ instead of $W$ for parametric modeling of $h$, that is, $h$ is misspecified ($\cM_{1, \gamma}$ and $\cM_{2, \gamma}$ are misspecified).
    \item[II.] We used $Z^* = (Z_1^2, Z_2^2)^\T$ instead of $Z$ for parametric modeling of $\ell$, and $Z^{**} = (Z_1\cdot X, Z_2 \cdot X)^\T$ for parametric modeling of $q$, that is, $\ell$ and $q$ are misspecified ($\cM_{2, \gamma}$ and $\cM_{3, \gamma}$ are misspecified).
    \item[III.] We used $Z^{**}$ for parametric modeling of $q$, and chose $d(d^{(0)};\hat\ba)(Z,A,X) = d^{(0)}$, that is, $q$ is misspecified and the ACE algorithm converges to incorrectly specified functions ($\cM_{1, \gamma}$ and $\cM_{3, \gamma}$ are misspecified).
\end{enumerate}

\begin{table}[t]
  \centering
  \caption{Simulation results under various model misspecification scenarios (Scen.): absolute value of the empirical bias (Bias) ($\times 10^{-2}$) and standard deviation (SD) ($\times 10^{-2}$) of the estimators, average of the estimated standard errors (SE) ($\times 10^{-2}$), and coverage of $95\%$ confidence intervals (Cov).}
  \begin{tabular}{clrrrr}
    \midrule
    {Scen.} & {Method} & {Bias} & {SD} & {SE} & {Cov} \\
    \midrule
    \multirow{3}{*}{I}
    & $\hat \tau_{{\rm fPOR}}$ & 22.9 & 76.4 & 36.6 & 84.2\\
    & $\hat \tau_{{\rm fPIPW}}$ & 1.0 & 4.9 & 3.5 & 92.3\\
    & $\hat \tau_{{\rm fPMR}}$ & 0.8 & 21.8 & 7.9 & 94.9 \\
    \midrule
    \multirow{3}{*}{II}
    & $\hat \tau_{{\rm fPOR}}$ & 0.8 & 5.9 & 9.0 & 97.3\\
    & $\hat \tau_{{\rm fPIPW}}$ & 14.4 & 1.9 & 2.0 & 0.2\\
    & $\hat \tau_{{\rm fPMR}}$ & 0.8 & 5.9 & 9.0 & 97.4 \\
    \midrule
    \multirow{3}{*}{III}
    & $\hat \tau_{{\rm fPOR}}$ & 0.7 & 5.8 & 8.6 & 97.2\\
    & $\hat \tau_{{\rm fPIPW}}$ & 16.4 & 8.2 & 8.4 & 48.7\\
    & $\hat \tau_{{\rm fPMR}}$ & 0.7 & 5.8 & 8.6 & 97.5 \\
    \bottomrule
  \end{tabular}
  \label{table2}
\end{table}

Table~\ref{table2} presents simulation results for the three misspecification scenarios under a sample size of $n = 3000$, based on $1000$ repetitions.
As expected, the multiply robust estimator $\hat \tau_{{\rm fPMR}}$ exhibits small bias and achieves coverage probabilities close to the nominal level in Scenarios I -- III, that is, when at least one of the parametric models is correctly specified.
This confirms its multiply robust property.
In contrast, the other two estimators exhibit substantial bias when the corresponding models $\cM_{1, \gamma}, \cM_{2, \gamma}$ and $\cM_{3, \gamma}$ are misspecified.
These findings provide empirical support for our theoretical results in finite samples and underscore the practical benefits of the proposed multiply robust estimator.

\section{Data Analysis}
\label{sec:data.analysis}

Right heart catheterization (RHC) is a diagnostic procedure used to assess cardiac function in critically ill patients. 
We apply our proposed method to data from the Study to Understand Prognoses and Preferences for Outcomes and Risks of Treatments (SUPPORT) to evaluate the causal effect of RHC in the intensive care unit (ICU)  \citep{connors1996effectiveness}.
The SUPPORT dataset includes 5,735 hospitalized patients, of whom 2,184 received RHC within the first 24 hours of study entry ($A_i = 1$), and 3,551 did not ($A_i = 0$). 
The outcome of interest is the number of days from admission to either death or censoring at 30 days. 
A total of 71 baseline demographic, clinical, and laboratory variables were collected; see Table 1 of \cite{hirano2001estimation} for details.

Ten physiological variables, obtained from blood tests conducted during the first 24 hours in the ICU, can be regarded as potential proxies for unmeasured confounding.
Following \cite{tchetgen2024anint} and \cite{cui2024}, we designate $\texttt{ph1}$ and $\texttt{hema1}$ as outcome confounding proxies ($W = (\texttt{ph1}, \texttt{hema1})$). 
The other eight physiological variables are treated as candidate treatment confounding proxies, $Z = $(\texttt{sod1}, \texttt{pot1}, \texttt{crea1}, \texttt{bili1}, \texttt{alb1}, \texttt{pafi1}, \texttt{paco21}, \texttt{wblc1}) with $K = 8$, while the remaining 61 variables are included as baseline covariates $X$.
We specified the fortified outcome confounding bridge function and the conditional expectation of its residual as in equations~\eqref{h.para} and \eqref{ell.para}, respectively, incorporating additional interactions among up to $K - \gamma$ components of $Z$ for $\gamma \in \{2,4,6,8\}$. The model for the fortified treatment confounding bridge function was specified as
\$
q(Z,A,X;\bt) &=  1 + \exp\left((-1)^{1-A}(t_0 + t_z^\T Z + t_a A + t_x X)\right).
\$

For comparison, we also consider conventional proximal outcome confounding bridge function estimator ($\hat \tau_{{\rm POR}}$), proximal treatment confounding bridge function estimator ($\hat \tau_{{\rm PIPW}}$), proximal doubly robust estimator ($\hat\tau_{{\rm PDR}}$) with allocating \texttt{pafi1} and \texttt{paco21} as valid treatment confounding proxies \citep{cui2024}, and the standard doubly robust estimator ($\hat \tau_{{\rm DR}}$).

\begin{table}[t]
  \centering
  \caption{Point estimates and 95\% confidence intervals (in parentheses) of the causal effects of RHC on survival time up to 30 days.
  The top panel reports our proposed estimators indexed by $\gamma$ with $K = 8$, while the bottom panel shows the doubly robust estimator and conventional proximal estimators in \cite{cui2024}.}
  \begin{tabular}{cccc}
    \midrule
    $\gamma$ & $\hat \tau_{{\rm fPOR}}$ & $\hat \tau_{{\rm fPIPW}}$ & $\hat \tau_{{\rm fPMR}}$  \\
    \midrule
    2 & -0.75 (-10.23, 8.74) & -1.27 (-2.18, -0.35)   & -1.25 (-2.17, -0.34)  \\
    4 & -2.20 (-19.93, 12.49) & -1.28 (-2.14, -0.42)  & -1.28 (-2.14, -0.42) \\
    6 & -2.25 (-18.98, 14.48) & -1.42 (-2.31, -0.52)  & -1.41 (-2.30, -0.51) \\
    8 & -1.60 (-4.37, 1.17) & -1.20 (-1.86, -0.53)  & -1.20 (-4.43, 2.04)\\
    \midrule
    \midrule
    $\hat \tau_{{\rm POR}}$ & $\hat \tau_{{\rm PIPW}}$ & $\hat \tau_{{\rm PDR}}$ & $\hat \tau_{{\rm DR}}$ \\
    \midrule
    -1.80 (-2.65, -0.94) & -1.72 (-2.30, -1.14) & -1.66 (-2.50, -0.83) & -1.17 (-1.79, -0.55) \\
    \bottomrule
  \end{tabular}
  \label{table3}
\end{table}

Table~\ref{table3} presents the data analysis results. 
Notably, the estimated treatment effects from most of our proposed methods are larger in magnitude than those from the standard doubly robust estimator. 
This pattern generally confirms the findings in \cite{tchetgen2024anint} and \cite{cui2024}, suggesting that RHC may have a more detrimental effect on 30-day survival among critically ill patients admitted into an ICU than previously indicated by using the standard doubly robust estimator.
At the same time, our estimates, $\hat\tau_{{\rm fPIPW}}$ and $\hat\tau_{{\rm fPMR}}$ are more conservative in magnitude than those obtained from the conventional proximal estimators. This reflects the potential robustness benefits of our identifying assumptions: unlike the conventional methods, which assume that both \texttt{pafi1} and \texttt{paco21}, are valid treatment confounding proxies, our approach requires only that at least $\gamma$ out of the 8 candidate proxies are valid without needing to specify which ones are valid.
Finally, we observe that $\hat{\tau}_{{\rm fPOR}}$ is more variable than the other estimators, possibly due to increased variability associated with estimating the required function $\mathbf{c}_0$ in \eqref{def:h.estimates}.

\section{Discussion}

In this paper, we developed a general semiparametric framework for causal inference in the presence of unmeasured confounding by leveraging possibly invalid treatment confounding proxies through the introduction of new classes of confounding bridge functions, which generalize the standard confounding bridge functions of \citep{miao2024confounding} and \cite{cui2024}. We proposed a new class of multiply robust estimators for the ATE, which are locally efficient. The utility of our approach was demonstrated through simulation studies and a real data application.

There are several promising directions for future research. While our framework accommodates invalid treatment confounding proxies, it still relies critically on the validity of the outcome confounding proxies, in the sense that $W$ and $A$ must be conditionally independent given $U$, $X$, and $Z_{-\myalpha^*}$. A natural extension is to develop identification strategies for the ATE when some of the candidate outcome confounding proxies are invalid, or when both treatment and outcome confounding proxies may be partially invalid; we plan to further explore these strategies in future work.

Another important direction is to explore the use of modern machine learning methods to estimate the fortified confounding bridge functions, as alternatives to positing parametric working models. In the context of conventional PCI, adversarial inference frameworks incorporating reproducing kernel Hilbert spaces (RKHSs) or neural networks have shown promise \citep{singh2020kernel, kallus2021causal, mastouri2021proximal, ghassami2022minimax}. However, extending these approaches to our setting is nontrivial, as the structure of $\HH_\gamma$ is not readily approximated by conventional function classes such as RKHSs or neural networks.

\appendix

\section{Details of the main paper}

\subsection{Assumption~\ref{assump:h} when \texorpdfstring{$\gamma = K$}{gamma = K}}
\label{sec:gamma.K}

Let $\gamma = K$ and suppose that Assumption~\ref{assump:h} holds, that is, there exists $h \in L_2(W,A,X)$ satisfying
\$
\EE[d(Z,A,X)\{Y - h(W,A,X)\}] = 0 ~~~\forall d \in \HH_K.
\$
Recall that when $\gamma = K$,
\$
\HH_K = \{d(Z,A,X) \in L_2(Z,A,X) : \EE\{d(Z,A,X)|X\} = 0\}.
\$
Denote $\ell(X) = \EE\{Y - h(W,A,X)|X\}$. 
Since we have
\$
\EE\{d(Z,A,X)\ell(X)\} = \EE[\EE\{d(Z,A,X)|X\}\ell(X)] = 0 ~~\forall d \in \HH_K,
\$
it follows that
\$
\EE[d(Z,A,X)\{Y - h(W,A,X) - \ell(X)\}] = 0 ~~~\forall d \in \HH_K.
\$
Choose $d(Z,A,X) = \EE\{Y - h(W,A,X)|Z,A,X\} - \ell(X)$.
By definition of $\ell$, we have $d \in \HH_K$, implying
\$
0 & = \EE[d(Z,A,X)\{Y - h(W,A,X) - \ell(X)\}] \\
& = \EE[\EE\{Y - h(W,A,X) - \ell(X)|Z,A,X\}\{Y - h(W,A,X) - \ell(X)\}] \\
& = \EE[\EE\{Y - h(W,A,X) - \ell(X)|Z,A,X\}^2].
\$
Therefore, $\EE\{Y - h(W,A,X) - \ell(X)|Z,A,X\} = 0$ almost surely, and the conventional outcome confounding bridge function exists with $\tilde h(W,A,X) = h(W,A,X) + \ell(X)$.

Conversely, if there exists a conventional outcome confounding bridge function $\tilde h$ satisfying $\EE\{Y - \tilde h(W,A,X)|Z,A,X\} = 0$, then it is straightforward that $\tilde h$ also satisfies \eqref{outcome.bridge.moment} as $\HH_K \subseteq L_2(Z,A,X)$.

\subsection{Alternative identification results} \label{sec:alternative.identification}

\begin{theorem} \label{thm:alternative.outcome.identification}
Suppose that Assumptions~\ref{assump:consistency} and \ref{assump:lower.bound} hold and that there exists a function $h^\prime(w,a,x)$ satisfying 
\$
\EE\{Y - h^\prime(W,A,X)|U,A, Z_{-\myalpha^*}, X\} = \EE\{Y - h^\prime(W,A,X)|Z_{-\myalpha^*}, X\},
\$
almost surely.
Then, we have
\#
\tau^* = \EE\{h^\prime(W,1,X) - h^\prime(W,0,X)\},
\# \label{outcome.identification.appendix}
and $h^\prime$ also solves \eqref{outcome.bridge.moment}.
Furthermore, suppose that for any square-integrable function $g \in L_2(W,A,X)$, $\EE\{d(Z,A,X)g(W,A,X)\} = 0$ for all $d \in \HH_\gamma$ if and only if $g = \EE(g|X)$ almost surely.
Then, $\tau^*$ is identified by plugging any solution to \eqref{outcome.bridge.moment} into \eqref{outcome.identification.appendix}.
\end{theorem}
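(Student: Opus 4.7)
The plan is to prove the three claims in succession, leveraging the hypothesis on $h^\prime$, the proxy conditional independences in Definition~\ref{def:valid.proxy}, and finally the supplementary completeness assumption, following a strategy structurally parallel to that of Theorem~\ref{thm:outcome.bridge.identification}.

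First, to establish $\tau^* = \EE\{h^\prime(W,1,X) - h^\prime(W,0,X)\}$, I would define $\ell(Z_{-\myalpha^*}, X) := \EE\{Y - h^\prime(W,A,X)|Z_{-\myalpha^*}, X\}$ and rewrite the hypothesis as $\EE\{Y - h^\prime(W,A,X)|U,A,Z_{-\myalpha^*}, X\} = \ell(Z_{-\myalpha^*}, X)$. Evaluating at $A = a$ and invoking Assumption~\ref{assump:consistency} together with the latent ignorability of Definition~\ref{def:valid.proxy}(i) yields $\EE\{Y(a)|U, Z_{-\myalpha^*}, X\} = \EE\{Y | A=a, U, Z_{-\myalpha^*}, X\}$. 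On the other side, the proxy condition $W \perp\mspace{-9mu}\perp (Z_{\myalpha^*}, A)|U, Z_{-\myalpha^*}, X$ implies via weak union that $W \perp\mspace{-9mu}\perp A|U, Z_{-\myalpha^*}, X$, so that $\EE\{h^\prime(W,a,X)|A=a, U, Z_{-\myalpha^*}, X\} = \EE\{h^\prime(W,a,X)|U, Z_{-\myalpha^*}, X\}$. Combining these gives $\EE\{Y(a) - h^\prime(W,a,X)|U, Z_{-\myalpha^*}, X\} = \ell(Z_{-\myalpha^*}, X)$; taking the outer expectation and subtracting across the two treatment arms cancels $\EE\{\ell\}$, yielding the claim.

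Second, to verify that $h^\prime$ solves the moment equation \eqref{outcome.bridge.moment}, I would show that $\EE\{Y - h^\prime(W,A,X)|Z,A,X\} = \ell(Z_{-\myalpha^*}, X)$. The key step is conditioning on $U$ and exploiting both proxy statements in Definition~\ref{def:valid.proxy}(iii): $Z_{\myalpha^*} \perp\mspace{-9mu}\perp Y|U,A,Z_{-\myalpha^*},X$, and (by weak union) $W \perp\mspace{-9mu}\perp Z_{\myalpha^*}|U,A,Z_{-\myalpha^*},X$. Together these give $\EE\{Y - h^\prime(W,A,X)|U,A,Z,X\} = \EE\{Y - h^\prime(W,A,X)|U,A,Z_{-\myalpha^*},X\} = \ell(Z_{-\myalpha^*}, X)$, and marginalizing out $U$ by iterated expectations preserves the identity. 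Then for any $d \in \HH_\gamma$,
\$
\EE[d(Z,A,X)\{Y - h^\prime(W,A,X)\}] = \EE[\EE\{d(Z,A,X)|Z_{-\myalpha^*}, X\}\,\ell(Z_{-\myalpha^*}, X)],
\$
which vanishes because Assumption~\ref{assump:lower.bound} gives $\myalpha^* \in \cP_{\geq \gamma}([K])$, so $\EE\{d|Z_{-\myalpha^*}, X\} = 0$ by the equivalent characterization of $\HH_\gamma$ stated after \eqref{def:H.gamma}.

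Third, for the uniqueness-of-plug-in claim, I would let $h^*$ be any solution to \eqref{outcome.bridge.moment} and set $g := h^\prime - h^*$. By linearity, $\EE\{d(Z,A,X) g(W,A,X)\} = 0$ for every $d \in \HH_\gamma$, and the supplementary completeness hypothesis forces $g(W,A,X) = \EE\{g|X\}$ almost surely, making $g$ a function of $X$ alone. Consequently $g(W,1,X) - g(W,0,X) = 0$ almost surely, and combining this with Part 1 gives $\EE\{h^*(W,1,X) - h^*(W,0,X)\} = \EE\{h^\prime(W,1,X) - h^\prime(W,0,X)\} = \tau^*$. The main obstacle is not any single step, but rather the careful bookkeeping of conditional independences in Parts 1 and 2: one must systematically apply weak union (and, where relevant, decomposition) to the proxy axioms in Definition~\ref{def:valid.proxy}(iii) in order to extract precisely the statements $W \perp\mspace{-9mu}\perp A|U, Z_{-\myalpha^*}, X$ and $(Y,W) \perp\mspace{-9mu}\perp Z_{\myalpha^*}|U,A,Z_{-\myalpha^*}, X$ on which both the matching of $\EE\{Y|A=a, U, Z_{-\myalpha^*}, X\}$ with $\EE\{Y(a)|U, Z_{-\myalpha^*}, X\}$ and the collapse of $\EE\{\cdot|U,A,Z,X\}$ to $\EE\{\cdot|U,A,Z_{-\myalpha^*}, X\}$ crucially rely.
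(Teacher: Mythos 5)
Your proposal is correct and follows essentially the same route as the paper's proof: Part 1 reproduces the argument the paper defers to the end of the proof of Theorem~\ref{thm:outcome.bridge.identification}, Part 2 matches the paper's derivation that $\EE\{Y - h^\prime(W,A,X)\mid Z,A,X\}$ is a function of $(Z_{-\myalpha^*},X)$ annihilated by every $d \in \HH_\gamma$, and Part 3 matches the paper's use of the completeness hypothesis to show any two solutions differ by a function of $X$ alone. The only nitpick is terminological: deriving $W \perp\mspace{-9mu}\perp A \mid U, Z_{-\myalpha^*}, X$ from $W \perp\mspace{-9mu}\perp (Z_{\myalpha^*}, A) \mid U, Z_{-\myalpha^*}, X$ is an instance of decomposition, not weak union.
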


\begin{theorem} \label{thm:alternative.treatment.identification}
    Suppose that Assumptions~\ref{assump:consistency} and \ref{assump:lower.bound} hold, and that there exists a function $q^\prime(z,a,x)$ satisfying $(-1)^{1-A}q^\prime \in \HH_\gamma$ and 
    \$
\EE\{q^\prime(Z,A,X)|U,A,Z_{-\myalpha^*},X\} = \frac{1}{\PP(A|U,Z_{-\myalpha^*},X)}
    \$
    almost surely.
    Then, we have
    \#
\tau^* = \EE\{q^\prime(Z,1,X)AY - q^\prime(Z,0,X)(1-A)Y\}, \label{treatment.identification.appendix}
    \#
    and $q^\prime$ also solves \eqref{treatment.bridge.moment}.
    Furthermore, suppose that for any square-integrable function $g \in \HH_\gamma$, $\EE\{h(W,A,X)g(Z,A,X)\} = 0$ for all $h \in L_2(W,A,X)$ if and only if $g = 0$ almost surely. Then, any function $q$ satisfying \eqref{treatment.bridge.moment} and $(-1)^{1-A}q \in \HH_\gamma$ is unique. Consequently, $\tau^*$ is identified by substituting this solution into \eqref{treatment.identification.appendix}.
\end{theorem}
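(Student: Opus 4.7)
The theorem asserts three things: (i) $\tau^*$ is identified by~\eqref{treatment.identification.appendix}; (ii) any such $q^\prime$ also satisfies the observable moment condition~\eqref{treatment.bridge.moment}; and (iii) under the additional completeness-type hypothesis, the solution to~\eqref{treatment.bridge.moment} with $(-1)^{1-A}q\in\HH_\gamma$ is unique, so plugging it into~\eqref{treatment.identification.appendix} delivers $\tau^*$. My plan is to handle these three points in sequence.

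For (i), I would essentially rerun the final step of the proof of Theorem~\ref{thm:treatment.bridge.identification}, but starting from the already-assumed $U$-version of the bridge equation rather than having to first derive it from~\eqref{treatment.bridge.moment}. Concretely, I condition
\begin{align*}
\EE\{q^\prime(Z,1,X)AY\}=\EE\bigl[\EE\{q^\prime(Z,1,X)AY(1)\mid U,A,Z_{-\myalpha^*},X\}\bigr],
\end{align*}
using $AY=AY(1)$ from Assumption~\ref{assump:consistency}. On the event $\{A=1\}$, the proxy condition $Z_{\myalpha^*}\perp\mspace{-9mu}\perp Y\mid U,A,Z_{-\myalpha^*},X$ combined with consistency factorizes the inner expectation into $\EE\{q^\prime(Z,1,X)\mid U,A=1,Z_{-\myalpha^*},X\}\cdot\EE\{Y(1)\mid U,A=1,Z_{-\myalpha^*},X\}$; latent ignorability drops the $A$ from the potential outcome factor, while the assumed $U$-bridge identity turns the other factor into $1/\PP(A=1\mid U,Z_{-\myalpha^*},X)$. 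Iterating over $A$ collapses $\EE\{A/\PP(A=1\mid U,Z_{-\myalpha^*},X)\mid U,Z_{-\myalpha^*},X\}$ to $1$, leaving $\EE\{Y(1)\}$. A symmetric computation for $\EE\{q^\prime(Z,0,X)(1-A)Y\}$ gives $\EE\{Y(0)\}$, and subtraction yields $\tau^*$.

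For (ii), the plan is to transport the $U$-bridge equation to the observable one by two applications of the proxy independence $W\perp\mspace{-9mu}\perp(Z_{\myalpha^*},A)\mid U,Z_{-\myalpha^*},X$. First, a Bayes-rule calculation using the factorization $f(W,Z_{\myalpha^*},A\mid U,Z_{-\myalpha^*},X)=f(W\mid U,Z_{-\myalpha^*},X)f(Z_{\myalpha^*},A\mid U,Z_{-\myalpha^*},X)$ shows that $Z_{\myalpha^*}\mid W,U,A,Z_{-\myalpha^*},X$ has the same conditional distribution as $Z_{\myalpha^*}\mid U,A,Z_{-\myalpha^*},X$, so iterating inside $(W,A,X)$ gives
\begin{align*}
\EE\{q^\prime(Z,A,X)\mid W,A,X\}=\EE\left[\frac{1}{\PP(A\mid U,Z_{-\myalpha^*},X)}\,\bigg|\,W,A,X\right].
\end{align*}
Second, the same proxy independence gives $W\perp\mspace{-9mu}\perp A\mid U,Z_{-\myalpha^*},X$, and iterating on $(U,Z_{-\myalpha^*})$ given $(W,X)$ yields $\EE[\II\{A=a\}/\PP(A=a\mid U,Z_{-\myalpha^*},X)\mid W,X]=1$ for each $a\in\{0,1\}$; by Bayes this is equivalent to $\PP(A=a\mid W,X)\EE[1/\PP(A=a\mid U,Z_{-\myalpha^*},X)\mid W,A=a,X]=1$, which is exactly~\eqref{treatment.bridge.moment}.

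For (iii), I would use a standard subtraction argument. If $q_1$ and $q_2$ both satisfy~\eqref{treatment.bridge.moment} with $(-1)^{1-A}q_i\in\HH_\gamma$, then $q_d:=q_1-q_2$ lies in the same constrained class and satisfies $\EE\{q_d\mid W,A,X\}=0$. For any $h\in L_2(W,A,X)$,
\begin{align*}
\EE\{h(W,A,X)(-1)^{1-A}q_d(Z,A,X)\}=\EE\bigl\{(-1)^{1-A}h(W,A,X)\EE[q_d\mid W,A,X]\bigr\}=0,
\end{align*}
and the hypothesized completeness condition forces $(-1)^{1-A}q_d=0$, hence $q_d=0$. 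Combined with (i) and (ii), the unique solution is precisely $q^\prime$, which identifies $\tau^*$ via~\eqref{treatment.identification.appendix}. The main obstacle is (ii): both reductions follow from the single proxy-independence statement, but assembling them requires a careful two-step iterated-expectation argument rather than a one-line manipulation; (i) is essentially the back half of the existing proof of Theorem~\ref{thm:treatment.bridge.identification}, and (iii) is routine once (ii) is established.
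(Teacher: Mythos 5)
Your proposal is correct and follows essentially the same route as the paper's proof: part (i) is exactly the concluding computation of Theorem~\ref{thm:treatment.bridge.identification} (which the paper simply cites), part (ii) is the same iterated-expectation chain using $W \perp\mspace{-9mu}\perp (Z_{\myalpha^*},A) \mid U, Z_{-\myalpha^*}, X$ twice (your indicator-function phrasing of the final Bayes step is only cosmetically different from the paper's direct manipulation of $\PP(A\mid U,W,Z_{-\myalpha^*},X)$), and part (iii) is the identical subtraction-plus-completeness argument. No gaps.
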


\begin{proof}[Proof of Theorem~\ref{thm:alternative.outcome.identification}]

In the proof of Theorem~\ref{thm:outcome.bridge.identification}, it is proved that such $h^\prime$ satisfies \eqref{outcome.identification.appendix}, so it suffices to prove that $h^\prime$ also solves \eqref{outcome.bridge.moment}.
To show this, denote $\EE\{Y - h^\prime(W,A,X)|Z_{-\myalpha^*},X\} = \ell^\prime(Z_{-\myalpha^*},X)$ and note that $Y \perp Z_{\myalpha^*}|U,A,Z_{-\myalpha^*},X$ and $W \perp (Z_{\myalpha^*},A)|U,Z_{-\myalpha^*},X$.
Then, by the Law of Iterative Expectation, we have
\$
& \EE\{Y - h^\prime(W,A,X) - \ell^\prime(Z_{-\myalpha^*},X)|Z,A,X\} \\
& = \EE[\EE\{Y - h^\prime(W,A,X) - \ell^\prime(Z_{-\myalpha^*},X)|U,Z,A,X\}|Z,A,X\} \\
& = \EE[\EE\{Y - h^\prime(W,A,X) - \ell^\prime(Z_{-\myalpha^*},X)|U,Z_{-\myalpha^*},A,X\}|Z,A,X\} \\
& = 0,
\$
where the last equation follows from the definition of $h^\prime$.
Thus, we have
\$
\EE\{Y - h^\prime(W,A,X)|Z,A,X\} = \ell^\prime(Z_{-\myalpha^*},X).
\$
For any $d \in \HH_\gamma$, note that
\$
\EE[d(Z,A,X)\{Y - h^\prime(W,A,X)\}] & = \EE\{d(Z,A,X)\ell^\prime(Z_{-\myalpha^*},X)\} \\
& = \EE[\EE\{d(Z,A,X)|Z_{-\myalpha^*},X\}\ell^\prime(Z_{-\myalpha^*},X)] = 0.
\$
Thus, $h^\prime$ solves the class of moment equations \eqref{outcome.bridge.moment}.

Now, let $h$ be any solution to \eqref{outcome.bridge.moment}.
Since $h$ and $h^\prime$ are solutions to \eqref{outcome.bridge.moment}, we have
\$
\EE[d(Z,A,X)\{h(W,A,X) - h^\prime(W,A,X)\}] = 0 ~~\forall d \in \HH_\gamma.
\$
Then, by the completeness assumption, it follows that
\$
h^\prime(W,A,X) - h(W,A,X) = \EE\{h^\prime(W,A,X) - h(W,A,X)|X\} =: \ell^\prime(X).
\$
Therefore,
\$
\tau^* = \EE\{h^\prime(W,1,X) - h^\prime(W,0,X)\} & = \EE[h(W,1,X) + \ell^\prime(X) - \{h(W,0,X) + \ell^\prime(X)\}] \\
& = \EE\{h(W,1,X) - h(W,0,X)\},
\$
establishing the claim.
\end{proof}

\begin{proof}[Proof of Theorem~\ref{thm:alternative.treatment.identification}]
In the proof of Theorem~\ref{thm:treatment.bridge.identification}, we have shown that such $q^\prime$ satisfies \eqref{treatment.identification.appendix}.
Thus, we only need to show that $q^\prime$ also solves \eqref{treatment.bridge.moment}.
Recall that $W \perp (Z_{\myalpha^*},A)|U,Z_{-\myalpha^*},X$.
Then, by the Law of Iterative Expectation, we have
\$
& \EE\{q^\prime(Z,A,X)|W,A,X\} \\
& = \EE[\EE\{q^\prime(Z,A,X)|U,W,A,Z_{-\myalpha^*}, X\}|W,A,X] \\
& = \EE[\EE\{q^\prime(Z,A,X)|U,A,Z_{-\myalpha^*}, X\}|W,A,X] \\
& = \EE\bigg\{ \frac{1}{\PP(A|U,Z_{-\myalpha^*},X)} \bigg| W, A, X \bigg\} \\
& = \EE\bigg\{ \frac{1}{\PP(A|U,W, Z_{-\myalpha^*},X)} \bigg| W, A, X \bigg\} = \frac{1}{\PP(A|W, X)}.
\$
Thus, $q^\prime$ solves \eqref{treatment.bridge.moment}.

Now, to show the uniqueness, suppose that both $q(Z,A,X)$ and $q^\prime(Z,A,X)$ satisfy \eqref{treatment.bridge.moment} and $(-1)^{1-A}q(Z,A,X), (-1)^{1-A}q^\prime(Z,A,X) \in \HH_\gamma$.
Then, we must have
\$
\EE[(-1)^{1-A}\{q(Z,A,X) - q^\prime(Z,A,X)\}|W,A,X] = 0.
\$
Then, by the completeness assumption, $(-1)^{1-A}q = (-1)^{1-A}q^\prime$, that is, $q = q^\prime$ almost surely.
Thus, the solution $q^\prime$ to \eqref{treatment.bridge.moment} with $(-1)^{1-A}q^\prime \in \HH_\gamma$ is unique and the treatment confounding bridge function is identified from \eqref{treatment.bridge.moment}.
    
\end{proof}

\subsection{Sufficient conditions for the existence of the bridge functions} \label{appendix:sufficient.condition}

By following a similar approach as in \cite{miao2018identifying}, we provide sufficient conditions that ensure the existence of a solution to \eqref{outcome.bridge.moment}.
Sufficient conditions that guarantee the existence of a fortified treatment confounding bridge function can be established in a similar fashion and are omitted.
We first state Picard's theorem, which gives a necessary and sufficient condition for the existence of the solution to the first-kind equations.
\begin{theorem} [Theorem 15.18 of \cite{kress1989linear}]
    Let $A: X \to Y$ be a compact linear operator with singular system $(\mu_n, \varphi_n, g_n)_{n = 1}^\infty$.
    The equation of the first kind 
    \$
A\varphi = f
    \$
    is solvable if and only if $f \in N(A^*)^\perp = \{f^\prime : A^*(f^\prime) = 0\}^\perp $ and $\sum_{n = 1}^\infty \nu_n^{-2}|\langle f, g_n \rangle|^2 < \infty$.
\end{theorem}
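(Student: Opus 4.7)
The plan is to exploit the singular value decomposition encoded in the triple $(\mu_n,\varphi_n,g_n)$ to reduce the operator equation $A\varphi=f$ to a diagonal problem in the coordinates given by this system; both the necessity and sufficiency of Picard's conditions then follow from comparing Fourier expansions in the orthonormal bases $\{\varphi_n\}$ and $\{g_n\}$, together with the orthogonal decomposition $Y=\overline{R(A)}\oplus N(A^*)$.

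First I would record the structural facts that accompany the singular system: the defining relations $A\varphi_n=\mu_n g_n$ and $A^*g_n=\mu_n\varphi_n$ with $\mu_n>0$ and $\mu_n\to 0$; that $\{\varphi_n\}$ is an orthonormal basis for $\overline{R(A^*)}=N(A)^\perp$; and that $\{g_n\}$ is an orthonormal basis for $\overline{R(A)}=N(A^*)^\perp$. These identities are standard consequences of the spectral theorem applied to the compact self-adjoint operator $A^*A$ and may be assumed here, since the statement of the theorem presupposes the singular system.

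For necessity, I would take $\varphi\in X$ with $A\varphi=f$, decompose $\varphi=\varphi_0+\sum_n\langle\varphi,\varphi_n\rangle\varphi_n$ with $\varphi_0\in N(A)$, and apply $A$ termwise (valid by boundedness) to obtain $f=\sum_n\mu_n\langle\varphi,\varphi_n\rangle g_n$. This places $f$ inside $\overline{R(A)}=N(A^*)^\perp$ and gives the coordinate identity $\langle f,g_n\rangle=\mu_n\langle\varphi,\varphi_n\rangle$; Bessel's inequality then produces $\sum_n\mu_n^{-2}|\langle f,g_n\rangle|^2=\sum_n|\langle\varphi,\varphi_n\rangle|^2\le\|\varphi\|^2<\infty$. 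For sufficiency, I would use $f\in N(A^*)^\perp=\overline{R(A)}$ to expand $f=\sum_n\langle f,g_n\rangle g_n$, define $\varphi:=\sum_n\mu_n^{-1}\langle f,g_n\rangle\varphi_n$, and verify that the summability hypothesis forces the partial sums to be Cauchy in $X$, so $\varphi$ is a well-defined element of $X$. Continuity of $A$ then permits termwise evaluation, giving $A\varphi=\sum_n\langle f,g_n\rangle g_n=f$.

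The main obstacle is not algebraic but analytic: one must carefully justify the termwise passage of $A$ through an infinite series in the necessity argument and confirm norm-convergence of the reconstructed $\varphi$ in $X$ in the sufficiency argument, both of which hinge on the fact that $\mu_n\to 0$ (so division by $\mu_n^{-1}$ can blow up coordinates) but is controlled precisely by Picard's summability condition. Once the singular system and the identity $\overline{R(A)}=N(A^*)^\perp$ are in hand, this is where the delicate Hilbert-space bookkeeping lives; the deeper result underlying everything—existence of the singular system itself, via the spectral theorem for compact self-adjoint operators applied to $A^*A$—is where the bulk of the genuine work would lie were the theorem to be built from scratch.
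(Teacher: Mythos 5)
Your proposal is correct and is essentially the standard proof of Picard's theorem as given in Kress: necessity via the adjoint relation $\langle f,g_n\rangle=\mu_n\langle\varphi,\varphi_n\rangle$ together with Bessel's inequality, and sufficiency by reconstructing $\varphi=\sum_n\mu_n^{-1}\langle f,g_n\rangle\varphi_n$, whose convergence is exactly what the summability condition guarantees. The paper itself offers no proof of this statement --- it is quoted verbatim as Theorem 15.18 of the cited reference --- so there is nothing to compare beyond noting that your argument matches the textbook one; the only cosmetic discrepancy is that the paper's statement writes $\nu_n$ where it means the singular values $\mu_n$, which you correctly resolve.
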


We first show that the existence of a solution to \eqref{outcome.bridge.moment} is equivalent to the existence of solutions to some non-homogeneous equations.
For each $x$, let $L_2(W,A|X = x)$ and $L_2(Z,A|X = x)$ be the spaces of square-integrable functions of $(W,A)$ and $(Z,A)$ given $X = x$, respectively, which are equipped with the $L_2$ inner product. 
Similarly, let $L_2(Y,W,Z,A|X = x)$ be the space of square-integrable functions of $(Y,W,Z,A)$ given $X = x$.
Define
\$
\HH_{\gamma, x} := \{d \in L_2(Z,A|X = x) : \EE(d|Z_{-\myalpha}, X = x) = 0 ~\forall \myalpha \in \cP_{\gamma}([K])\},
\$
and let $\Pi_{\gamma, x} : L_2(Y,W,Z,A|X = x) \to \HH_{\gamma, x}$ be the orthogonal projection onto $\HH_{\gamma, x}$.
\begin{lemma} \label{lem:sol.equivalence}
    There exists a solution to \eqref{outcome.bridge.moment} if and only if for each $X = x$, there exists a solution to the non-homogeneous equation $\Pi_{\gamma, x}(h) = \Pi_{\gamma, x}(Y)$ for $h \in L_2(W,A|X = x)$.
\end{lemma}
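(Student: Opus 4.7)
The plan is to decouple the global moment condition in \eqref{outcome.bridge.moment} into a family of fibrewise projection problems indexed by $x$, exploiting the fact that every constraint defining $\HH_\gamma$ is itself conditional on $X$. The first step is a direct fibrewise characterization of $\HH_\gamma$: a function $d \in L_2(Z,A,X)$ lies in $\HH_\gamma$ if and only if its section $d(\cdot,\cdot,x)$ lies in $\HH_{\gamma,x}$ for almost every $x$. This is immediate from \eqref{def:H.gamma}, since each requirement $\EE\{d(Z,A,X)\mid Z_{-\myalpha},X\}=0$ holds if and only if $\EE\{d(Z,A,x)\mid Z_{-\myalpha}, X=x\}=0$ almost surely in $x$. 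Consequently, $\HH_\gamma$ is a direct integral of the fibres $\HH_{\gamma,x}$, and by the tower property the condition in \eqref{outcome.bridge.moment} is equivalent to
\[
\EE\bigl[d_x(Z,A)\{Y - h(W,A,x)\}\,\big|\, X=x\bigr]=0 \quad\text{a.e. } x,\ \forall\, d_x\in\HH_{\gamma,x}.
\]

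For the forward implication, suppose $h^*\in L_2(W,A,X)$ solves \eqref{outcome.bridge.moment}. Setting $h_x(W,A):=h^*(W,A,x)$, I would then deduce that $\Pi_{\gamma,x}(h_x)=\Pi_{\gamma,x}(Y)$ for a.e.~$x$ by choosing test functions of the form $d(Z,A,X)=\mathbb{I}(X\in B)\,d_0(Z,A,X)$, where $d_0(\cdot,\cdot,x)\in\HH_{\gamma,x}$, and letting the Borel set $B$ vary. For the backward implication, suppose that for a.e.~$x$ there is $h_x\in L_2(W,A\mid X=x)$ with $\Pi_{\gamma,x}(h_x)=\Pi_{\gamma,x}(Y)$. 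I would then assemble these fibrewise solutions into a jointly measurable function $h^*(W,A,X)$ via a measurable selection argument (e.g.\ Kuratowski-Ryll-Nardzewski applied to the multifunction $x\mapsto\{h\in L_2(W,A\mid X=x): \Pi_{\gamma,x}(h)=\Pi_{\gamma,x}(Y)\}$), preferably by selecting the minimum-norm element in each fibre, which is uniquely defined and measurable in $x$. Given $h^*\in L_2(W,A,X)$, the fibrewise orthogonality then immediately propagates to the global condition via the tower property, completing the backward implication.

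The main obstacle is the backward step, and more precisely the verification that the assembled $h^*$ is square-integrable under the joint law. Measurability is standard, but $L_2$-integrability requires a uniform-in-$x$ control on the norm of the fibrewise (minimum-norm) solution, which is delicate because the fibrewise problem is an ill-posed Fredholm integral equation of the first kind (as already noted after Theorem~\ref{thm:influence function}). I would handle this either by imposing an integrability side-condition of the form $\EE\|h_x\|_{L_2(W,A\mid X=x)}^2<\infty$ on the fibrewise solutions, or by interpreting the statement of the lemma as an equivalence within the class of $L_2$ solutions, so that the integrability is absorbed into the hypothesis; either route avoids intricate regularity of the inverse operator while preserving the essential equivalence.
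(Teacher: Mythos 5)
Your proposal takes essentially the same route as the paper's proof: the forward direction uses test functions supported on slices of $X$ (the paper uses $\mathbbm{1}(X=x)d(Z,A)$ where you use $\mathbb{I}(X\in B)d_0$), and the backward direction restricts global test functions $d\in\HH_\gamma$ to their sections $d(\cdot,\cdot,x)\in\HH_{\gamma,x}$ and assembles the fibrewise solutions into a global $h^*$. The only substantive difference is that you explicitly flag the measurability and square-integrability of the assembled $h^*$ — a point the paper's proof passes over by simply declaring $h^*\in L_2$ — and your proposed remedies (minimum-norm measurable selection plus an integrability side-condition, or reading the equivalence within the class of $L_2$ solutions) are sensible refinements rather than a departure from the argument.
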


\begin{proof} 
    First, we assume that there exists a solution to \eqref{outcome.bridge.moment}. 
    This means that $\Pi_{\gamma}(Y - h^*) = 0$, that is, $\Pi_{\gamma}(h^*) = \Pi_{\gamma}(Y)$.
    Choose any function $d \in \HH_{\gamma, x}$.
    Then, it can be shown that $\mathbbm{1}(X = x)d(Z,A) \in \HH_{\gamma}$ so that for any $d \in \HH_{\gamma, x}$, we have
    \$
\EE[\mathbbm{1}(X = x)d(Z,A)\{Y - h^*(W,A,X)\}] = 0,
    \$
    which implies
    \$
\EE[d(Z,A)\{Y - h^*(W,A,x)\}|X = x] = 0.
    \$
    Since this equation holds for all $d \in \HH_{\gamma, x}$, $h^*(W,A,x)$ solves the non-homogeneous equation $\Pi_{\gamma, x}(h) = \Pi_{\gamma, x}(Y)$.

    Conversely, assume that for each $x$, there exists $h_x \in L_2(W,A|X = x)$ satisfying $\Pi_{\gamma, x}(h) = \Pi_{\gamma, x}(Y)$.
    Let $h^* \in L_2(Z,A,X)$ be given by $h^*(W,A,x) = h_x(W,A)$ for each $x$.
    Now, for any $d \in \HH_\gamma, \myalpha \in \cP_{\gamma}([K])$ and $x$, $\EE\{d(Z,A,X)|Z_{-\myalpha}, X = x\} = 0$,
    implying that $d(Z,A,x) \in \HH_{\gamma, x}$.
    Thus, 
    \$
\EE[d(Z,A,X)\{h^*(W, A, X) - Y\}|X = x] & = \EE[d(Z,A,x)\{h_x(W,A) - Y\}|X = x] \\
& = 0.
    \$
    This equality holds for any $d \in \HH_\gamma$ and $x$, implying that $h^*$ satisfies \eqref{outcome.bridge.moment}, completing the proof.
\end{proof}

Abusing some notations, denote $\Pi_{\gamma, x}$ to be the operator by restriction the domain of $\Pi_{\gamma, x}$ to $L_2(W,A|X = x)$ for each $x$.
Then, by Lemma~\ref{lem:sol.equivalence}, we need to show that $\Pi_{\gamma, x}(h) = \Pi_{\gamma, x}(Y)$ is solvable. 
To show this, we assume the following conditions for each $x$:

\begin{enumerate}
    \item[(i)] $\sum_{a \in \{0,1\}}\int \int f_{W|Z, A, X}(w|Z, a, x)f_{Z|W,A,X}(Z|w,a,x){\rm d}w {\rm d}Z < \infty$;
    \item[(ii)] For $g \in \HH_{\gamma, x}$, $\EE\{g(Z, A)|W, A, X = x\} = 0$ implies $g = 0$ almost surely;
    \item[(iii)] $\sum_{i = 1}^\infty \mu_{i, x}^{-2}|\langle g_{Y, x}(Z, A), g_{i, x} \rangle_{L_2(Z, A|X = x}|^2 < \infty$, where $\{\mu_{i, x}, \phi_{i, x}, g_{i, x}\}_{i = 1}^\infty$ is the singular system of $\Pi_{\gamma, x}$ and $g_{Y, x} = \Pi_{\gamma, x}(Y)$.
\end{enumerate}

We first show that under the above assumptions, the operator $\Pi_{\gamma, x}: L_2(W,A|X = x) \to \HH_{\gamma, x}$ is compact.
To see this, let $\Pi_{\gamma, x}^* : \HH_{\gamma, x} \to L_2(W,A|X = x)$ be the conditional expectation operator defined as follows:
\$
\Pi_{\gamma, x}^*(g) = \EE\{g(Z,A)|W, A, X = x\}.
\$
Then, $\Pi_{\gamma, x}^*$ is the adjoint operator of $\Pi_{\gamma, x}$. 
To verify this, for any $h \in L_2(W,A|X = x)$ and $g \in \HH_{\gamma, x}$,
\$
\langle \Pi_{\gamma, x}(h), g \rangle_{\HH_{\gamma, x}} & = \EE\{ \Pi_{\gamma, x}(h) g(Z,A)|X = x\} = \EE\{ h(W,A) g(Z,A)|X = x\} \\
& = \EE[h(W,A)\EE\{g(Z,A)|W,A,X = x\}|X = x] \\
& = \langle h, \Pi_{\gamma, x}^*(g) \rangle_{L_2(W,A|X = x)},
\$
which proves the claim.
Moreover, we remark that $\Pi_{\gamma, x}^*$ is the restriction of the conditional expectation operator $\Pi_x^*: L_2(Z, A|X = x) \to L_2(W, A|X = x)$ to $\HH_{\gamma, x}$.
Therefore, $\Pi_{\gamma, x}$ is compact if and only if $\Pi_{\gamma, x}^*$ is compact, and $\Pi_{\gamma, x}^*$ is compact given that $\Pi_{x}^*$ is compact (see, e.g., Theorem 4.4 in Chapter II of \cite{conway1990acourse}).
To show that $\Pi_x^*$ is compact, note that for any $(w,a)$,
\$
\Pi_x^*(g)(w,a) & = \EE\{g(Z,A)|W = w, A = a, X = x\} \\
& = \sum_{a^\prime \in \{0,1\}} \int g(Z, a^\prime)f(Z, a^\prime|x)\frac{\mathbbm{I}(a = a^\prime)f(w, Z, a|x)}{f(Z,a^\prime|x)f(w,a|x)} {\rm d}Z \\
& = \langle k(\cdot, (w, a)), g \rangle_{L_2(Z,A|X = x)},
\$
where the kernel $k$ is defined as follows:
\$
k((Z,a^\prime), (w,a)) = \frac{\mathbbm{I}(a = a^\prime)f(w,Z,a|x)}{f(w,a^\prime|x)f(Z,a|x)}.
\$
Therefore, as shown in page 5659 of \cite{carrasco2007linear}, $\Pi_x^*$ is compact given that
\$
& \sum_{a \in \{0,1\}}\sum_{a^\prime \in \{0,1\}} \int \int \frac{\mathbbm{I}(a = a^\prime)f^2(w,Z,a|x)}{f^2(w,a^\prime|x)f^2(Z,a|x)} f(w,a^\prime|x)f(Z,a|x){\rm d}w {\rm d}Z \\
& = \sum_{a \in \{0,1\}}\int \int f_{W|Z, A, X}(w|Z, a, x)f_{Z|W,A,X}(Z|w,a,x){\rm d}w {\rm d}Z < \infty,
\$
which is assumed in the above.

Next, we show that $N(\Pi_{\gamma, x}^*)^\perp = L_2(Z,A|X = x)$, which ensures that $\Pi_{\gamma, x}(Y)$ belongs to $ N(\Pi_{\gamma, x}^*)^\perp$ trivially. 
To see this, remark that $g \in N(\Pi_{\gamma, x}^*)$ if and only if $\EE\{g(Z,A)|W,A,X = x\} = 0$.
But by the above assumptions, this implies $g = 0$, so that $N(\Pi_{\gamma, x}^*)$ contains only the zero function, leading to $N(\Pi_{\gamma, x}^*)^\perp = L_2(Z,A|X = x)$.
Putting the pieces together with the above assumption on the singular system of $\Pi_{\gamma, x}$, the conditions of Theorem 15.18 of \cite{kress1989linear} are satisfied.
Consequently, for each $x$, there exists a solution $h_x \in L_2(W,A|X =x)$ satisfying $\Pi_{\gamma, x}(h_x) = \Pi_{\gamma, x}(Y)$, which, establishes the existence of a solution to \eqref{outcome.bridge.moment} by Lemma~\ref{lem:sol.equivalence}.

\subsection{Choices of the parameters for data generating process}
\label{appendix:choice.of.parameters}

Let $\myalpha^* = \{1\}$.
By the construction of $W$ and $Y$, it is straightforward that $\myalpha^*$ is the set of indices of valid treatment confounding proxies.

We next present the choices of the parameters of the parametric modeling of $h,\ell$, and $q$ for the data-generating process in Section~\ref{sec:simulation}.

\subsubsection{Parameters for the fortified treatment confounding bridge function}

Recall that conditional on $Z_2, X, U$, $A$ is generated from the following Bernoulli distribution,
\$
\PP(A = 1 | Z_2, X, U) = \{1 + \exp(- 2(U - 0.5 X))\}^{-1}.
\$
Then, $Z_1$ is generated from the univariate normal distribution $Z_1 \sim \mathcal{N}(2(-1)^{1-A}(U - 0.5X), 0.5^2)$.
Then,
\$
\EE\{q(Z,A,X;\bt^*)|U,Z_2,A,X\} & = 1 + e^{-1/8}\EE\{e^{-Z_1}|U,Z_2,A,X\} \\
& = 1 + e^{-1/8}e^{-2(-1)^{1-A}(U- 0.5 X)}\times e^{1/8} = \frac{1}{\PP(A|Z_2, X, U)},
\$
satisfying \eqref{treatment.bridge.moment}.
To see that $(-1)^{1-A}q(\bt^*) \in \HH_\gamma$, we first note that
\$
\EE\{(-1)^{1-A}q(Z,A,X;\bt^*)|Z_2, X\} & = \EE[\EE\{(-1)^{1-A}q(Z,A,X;\bt^*)|Z_2, U, X\}|Z_2, X] \\
& = \EE\bigg[\EE\bigg\{\frac{(-1)^{1-A}}{\PP(A|Z_2, X, U)}\bigg\}\bigg|Z_2, X\bigg] = 0.
\$
Thus, it suffices to show that
\$
\EE\{(-1)^{1-A}q(Z,A,X;\bt^*)|Z_1, X\} = 0.
\$
Remark that $U|X = x \sim \mathcal{N}(0.5x, 3/4)$. 
Now, the joint distribution of $(Z_1, U, X, A)$ is decomposed as
\$
f(Z_1 = z_1, U = u, X = x, A = 1) = f(z_1|u, 1, x)f(1|u,x)f(u,x).
\$
Let $\phi(\cdot)$ be the probability density function of the standard normal distribution.
Then, 
\$
& f(z_1|u,1,x) = 2\phi(2(z_1 - 2(u - 0.5x))), \\
& f(1|u,x) = (1 + e^{-2(u - 0.5x)})^{-1}, \\
& f(u,x) = \phi((u - 0.5x)/\sigma)/\sigma \times \phi(x),
\$
where $\sigma^2 = 3/4$.
Combining these terms and integrating the $f(z_1, u, x, 1)$ over $U$, we have
\$
& f(Z_1 = z_1, X = x, A = 1) \\
& = \frac{2\phi(x)}{\sigma}\int_{-\infty}^\infty \phi(2(z_1 - 2(u - 0.5x)))(1 + e^{-2(u-0.5x)})^{-1}\phi((u - 0.5x)/\sigma){\rm d}u.
\$
Similarly, we have
\$
& f(Z_1 = z_1, X = x, A = 0) \\
& = \frac{2\phi(x)}{\sigma}\int_{-\infty}^\infty \phi(2(z_1 + 2(u - 0.5x)))(1 + e^{2(u-0.5x)})^{-1}\phi((u - 0.5x)/\sigma){\rm d}u.
\$
Applying the change of variable and the fact that $\phi(\cdot)$ is symmetric, we have
\$
f(Z_1 = z_1, X = x, A = 1) = f(Z_1 = z_1, X = x, A = 0),
\$
which implies that
\$
\PP(A = 1|Z_1, X) = \PP(A = 0|Z_1, X) = 1/2
\$
almost surely.
Combining this with our choice of $q(\bt^*)$, it follows that
\$
\EE\{(-1)^{1-A}q(Z,A,X;\bt^*)|Z_1, X\} & = \EE\{q(Z,1,X;\bt^*)|Z_1, A = 1, X\}\PP(A = 1|Z_1, X) \\
& - \EE\{q(Z,0,X;\bt^*)|Z_1, A = 0, X\}\PP(A = 0|Z_1, X) \\
& = (1 + e^{-1/8 - Z_1})\{\PP(A = 1|Z_1, X) - \PP(A = 0|Z_1, X)\} \\
& = 0,
\$ 
which establishes the claim.

\subsubsection{Parameters for the fortified outcome confounding bridge function}

Recall that $W$ and $Y$ are generated as follows:
\$
W \sim \mathcal{N}(-1 - X - 0.5U + 0.25Z_2, 0.5^2), \\
Y \sim \mathcal{N}(1 + X + 0.25 U + \tau^*A - 0.25Z_2, 0.25^2).
\$

Therefore, we have
\$
& \EE(W|A, X, U, Z) = -1 - X - 0.5U + 0.25Z_2, \\
& \EE(Y|A, X, U, Z) = 1 + X + 0.25 U + \tau^*A - 0.25Z_2,
\$
implying
\$
\EE(Y + W/2 - \tau^* A|Z, A, X) = 1/2 + X/2 - Z_2/8.
\$
Denoting $h(W,A,X) = -W/2 + \tau^* A$, this implies that
\$
\EE[d(Z,A,X)\{Y - h(W,A,X)\}] = 0 ~~~\forall d \in \HH_1,
\$
and $h$ is the fortified outcome confounding bridge function, and $\ell(Z,X) = 1/2 + X/2 - Z_2/8$.

\section{Proofs of results from the main text}

\subsection{Proof of Theorem~\ref{thm:outcome.bridge.identification}}

To begin with, we show that $h^*$ satisfying \eqref{outcome.bridge.moment} also satisfies
\#
\EE\{Y - h^*(W,A,X)|U,A,Z_{-\myalpha^*},X\} = \EE\{Y - h^*(W,A,X)|Z_{-\myalpha^*},X\}. \label{outcome.identification.proof1}
\#
To show this, by definition, we have
\$
\EE[d(Z,A,X)\{Y - h^*(W,A,X)\}] = 0
\$
for any $d \in \HH_\gamma$.
Then, by the Law of Iterative Expectation, it follows that
\#
\EE[d(Z,A,X)\EE\{Y - h^*(W,A,X)|U,Z,A,X\}] = 0. \label{outcome.identification.proof2}
\#
Now, since $Y \perp Z_{\myalpha^*}|U,A, Z_{-\myalpha^*},X$ and $W \perp (Z_{\myalpha^*}, A)|U,Z_{-\myalpha^*}, X$, it follows that
\$
\EE\{Y|U, Z, A, X\} = \EE\{Y|U, A, Z_{-\myalpha^*}, X\},
\$
and
\$
\EE\{h^*(W,A,X)|U,Z,A,X\} = \EE\{h^*(W,A,X)|U,Z_{-\myalpha^*}, A, X\}.
\$
Therefore, \eqref{outcome.identification.proof2} implies that for all $d \in \HH_\gamma$,
\$
\EE[d(Z,A,X)\EE\{Y - h^*(W,A,X)|U,Z_{-\myalpha^*},A,X\}] = 0.
\$
Now, let 
\$
\ell^*(Z_{-\myalpha^*},X) = \EE\{Y - h^*(W,A,X)|Z_{-\myalpha^*}, X\}.
\$
For any $d \in \HH_\gamma$,
\$
\EE\{d(Z,A,X)\ell^*(Z_{-\myalpha^*},X)\} & = \EE[\EE\{d(Z,A,X)|Z_{-\myalpha^*},X\}\ell^*(Z_{-\myalpha^*},X)] = 0,
\$
implying that for any $d \in \HH_\gamma$,
\$
\EE[d(Z,A,X)\EE\{Y - h^*(W,A,X) - \ell^*(Z_{-\myalpha*},X) |U,Z_{-\myalpha^*},A,X\}] = 0.
\$
Moreover, for any $d \in L_2(Z_{-\myalpha^*},X)$, we have by definition of $\ell^*$ that
\$
\EE[d(Z_{-\myalpha^*},X)\EE\{Y - h^*(W,A,X) - \ell^*(Z_{-\myalpha*},X) |U,Z_{-\myalpha^*},A,X\}] = 0.
\$
Then, by Assumption~\ref{assump:outcome.bridge.completeness}, we have 
\$
\EE\{Y - h^*(W,A,X) - \ell^*(Z_{-\myalpha^*},X)|U,Z_{-\myalpha^*},A,X\} = 0,
\$
which is equivalent to \eqref{outcome.identification.proof1}.

Now, by Definition~\ref{def:valid.proxy}, it follows for any $a \in \{0,1\}$ that
\$
\EE\{Y(a)|U, Z_{-\myalpha^*}, X\} & = \EE\{Y(a)|U, A = a, Z_{-\myalpha^*}, X\} \\
& = \EE\{Y|U, A = a, Z_{-\myalpha^*}, X\} \\
& = \EE\{h^*(W,A,X) + \ell^*(Z_{-\myalpha^*},X)|U, A = a, Z_{-\myalpha^*}, X\} \\
& = \EE\{h^*(W,a,X) + \ell^*(Z_{-\myalpha^*},X)|U, A = a, Z_{-\myalpha^*}, X\} \\
& = \EE\{h^*(W,a,X) + \ell^*(Z_{-\myalpha^*},X)|U, Z_{-\myalpha^*}, X\},
\$
where the second equality follows by Assumption~\ref{assump:consistency}.
Therefore, we have
\$
\tau^* = \EE\{Y(1)\} - \EE\{Y(0)\} & = \EE[\EE\{Y(1) - Y(0)|U, Z_{-\myalpha^*}, X\}] \\
& = \EE[\EE\{h^*(W,1,X) - h^*(W,0,X)|U, Z_{-\myalpha^*}, X\}] \\
& = \EE\{h^*(W,1,X) - h^*(W,0,X)\},
\$
which establishes the claim.

\qed

\subsection{Proof of Proposition~\ref{prop:outcome.bridge.identification2}}

Remark that
\$
\HH_\gamma = \bigcap_{{\myalpha \in \cP_\gamma([K])}} \{d \in L_2(Z,A,X) : \EE(d|Z_{-\myalpha}, X) = 0\}.
\$
By definition of $h^*$, we have 
\$
\EE[d(Z,A,X)\{Y - h^*(W,A,X)\}] = 0 ~~ \forall d \in \HH_\gamma,
\$
which is equivalent to 
\$
\EE\{Y - h^*(W,A,X) | Z, A, X\} & \in \bigg( \bigcap_{{\myalpha \in \cP_\gamma([K])}} \{d \in L_2(Z,A,X) : \EE(d|Z_{-\myalpha}, X) = 0\} \bigg)^{\perp} \\ 
& = \overline{\bigg(\sum_{\myalpha \in \cP_{\gamma}([K])} \{d \in L_2(Z,A,X): \EE(d|Z_{-\myalpha}, X) = 0 \}^\perp \bigg)},
\$
where $A^\perp$ is the orthogonal complement of a subset $A$. 
Thus, it suffices to identify the orthogonal complement of $\{d \in L_2(Z,A,X) : \EE(d|Z_{-\myalpha}, X) = 0\}$.
To derive this for a fixed $\myalpha$, define the operator $T_\myalpha : L_2(Z,A,X) \to L_2(Z,A,X)$ as $T_\myalpha(d) = \EE(d|Z_{-\myalpha}, X)$.
Since $T_\myalpha$ is an orthogonal projection, $T_\myalpha$ is self-adjoint and $N(T_\myalpha)^\perp = R(T_\myalpha)$, where $N(T_\myalpha)$ and $R(T_\myalpha)$ are the null space and range space of $T_\myalpha$, respectively (see, e.g. Appendix A.2 in \cite{bickel1993efficient}).
It is straightforward to see that $N(T_\myalpha) = \{d \in L_2(Z,A,X) : \EE(d|Z_{-\myalpha}, X) = 0\}$, and $R(T_\myalpha) = L_2(Z_{-\myalpha}, X)$ for each $\myalpha$, implying
\$
\EE\{Y - h^*(W,A,X) | Z, A, X\} \in \overline{\sum_{\myalpha \in \cP_\gamma([K])} L_2(Z_{-\myalpha}, X)} = \cS_\gamma. 
\$
Therefore, if $h^*$ satisfies \eqref{outcome.bridge.moment},
\$
\ell^*(Z,X) := \EE\{Y - h^*(W,A,X)|Z,A,X\} \in \cS_\gamma,
\$
and by its definition, we have 
\$
\EE\{Y - h^*(W,A,X) - \ell^*(Z,X)|Z,A,X\} = 0.
\$
Conversely, assume that there exist $h^* \in L_2(W,A,X)$ and $\ell^* \in \cS_\gamma$ satisfying
\$
\EE\{Y - h^*(W,A,X) - \ell^*(Z,X)|Z,A,X\} = 0.
\$
Since we have $\HH_\gamma^\perp = \cS_\gamma$, it follows that for any $d \in \HH_\gamma$,
\$
0 = \EE[d(Z,A,X)\{Y - h^*(W,A,X) - \ell^*(Z,X)\}] = \EE[d(Z,A,X)\{Y - h^*(W,A,X)\}],
\$
so that $h^*$ satisfies \eqref{outcome.bridge.moment}.
The remaining claims follow directly from Theorem~\ref{thm:outcome.bridge.identification}, completing the proof.

\qed

\subsection{Proof of Theorem~\ref{thm:treatment.bridge.identification}}

To begin with, we establish that 
\$
\EE\{q^*(Z,A,X)|U, A, Z_{-\myalpha^*}, X\} = \frac{1}{\PP(A|U, Z_{-\myalpha^*}, X)}.
\$
To show this, by definition, we have
\$
\EE\{(-1)^{1-A}q^*(Z,A,X)|W, A, X\} = \frac{(-1)^{1-A}}{\PP(A|W,X)}.
\$
By applying the Law of Iterative Expectation, it follows that
\$
\EE[\EE\{(-1)^{1-A}q^*(Z,A,X)|U, W, A, Z_{-\myalpha^*}, X\}|W, A, X] =  \frac{(-1)^{1-A}}{\PP(A|W,X)}.
\$
By Definition~\ref{def:valid.proxy}, $(A, Z_{\myalpha^*}) \perp W | U, Z_{-\myalpha^*}, X$, we have
\$
\EE\{(-1)^{1-A}q^*(Z,A,X)|U,W,A,Z_{-\myalpha^*},X\} = \EE\{(-1)^{1-A}q^*(Z,A,X)|U,A,Z_{-\myalpha^*},X\},
\$
and
\$
\EE\bigg\{\frac{(-1)^{1-A}}{\PP(A|U, Z_{-\myalpha^*}, X)}\bigg|W, A, X\bigg\} = \EE\bigg\{\frac{(-1)^{1-A}}{\PP(A|U, W, Z_{-\myalpha^*}, X)}\bigg|W, A, X\bigg\} = \frac{(-1)^{1-A}}{\PP(A|W,X)}.
\$
Therefore,
\#
\EE\bigg[\EE\{(-1)^{1-A}q^*(Z,A,X)|U, A, Z_{-\myalpha^*}, X\} - \frac{(-1)^{1-A}}{\PP(A|U, Z_{-\myalpha^*}, X)}\bigg|W, A, X\bigg] = 0. \label{treatment.identification.proof1}
\#
Now, remark that $(-1)^{1-A}q^* \in \HH_\gamma$ and $|\myalpha^*| \geq \gamma$, implying that 
\$
\EE\{(-1)^{1-A}q^*(Z,A,X)|Z_{-\myalpha^*},X\} = 0.
\$
Since $\EE\{(-1)^{1-A}/\PP(A|U,Z_{-\myalpha^*},X)|Z_{-\myalpha^*}, X\} = 0$, we have
\$
\EE\bigg[\EE\{(-1)^{1-A}q^*(Z,A,X)|U, A, Z_{-\myalpha^*}, X\} - \frac{(-1)^{1-A}}{\PP(A|U, Z_{-\myalpha^*}, X)}\bigg| Z_{-\myalpha^*}, X\bigg] = 0.
\$
Combining this with \eqref{treatment.identification.proof1} and Assumption~\ref{assump:treatment.bridge.completeness}, we have
\$
\EE\{(-1)^{1-A}q^*(Z,A,X)|U, A, Z_{-\myalpha^*}, X\} = \frac{(-1)^{1-A}}{\PP(A|U, Z_{-\myalpha^*}, X)},
\$
implying
\$
\EE\{q^*(Z,A,X)|U, A, Z_{-\myalpha^*}, X\} = \frac{1}{\PP(A|U, Z_{-\myalpha^*}, X)}.
\$

Now, remark that
\$
& \EE\{Y(-1)^{1-A} q^*(Z, A, X)|U, Z_{-\myalpha^*}, X\} \\
& = \EE\{Yq^*(Z,1,X)|U, A = 1, Z_{-\myalpha^*}, X\}\PP(A = 1|U, Z_{-\myalpha^*}, X) \\
& ~~~~~~~~~~~~~~~~~~~~- \EE\{Yq^*(Z,0,X)|U, A = 0, Z_{-\myalpha^*}, X\}\PP(A = 0|U, Z_{-\myalpha^*}, X) \\
& = \EE\{Y|U, A = 1, Z_{-\myalpha^*}, X\}\EE\{q^*(Z,1,X)|U, A = 1, Z_{-\myalpha^*}, X\}\PP(A = 1|U, Z_{-\myalpha^*}, X) \\
& - \EE\{Y|U, A = 0, Z_{-\myalpha^*}, X\}\EE\{q^*(Z,0,X)|U, A = 0, Z_{-\myalpha^*}, X\}\PP(A = 0|U, Z_{-\myalpha^*}, X) \\
& = \EE\{Y|U,A = 1,Z_{-\myalpha^*},X\} - \EE\{Y|U,A = 0,Z_{-\myalpha^*},X\} \\
& = \EE\{Y(1)|U,A = 1,Z_{-\myalpha^*},X\} - \EE\{Y(0)|U,A = 0,Z_{-\myalpha^*},X\} \\
& = \EE\{Y(1)|U,Z_{-\myalpha^*},X\} - \EE\{Y(0)|U,Z_{-\myalpha^*},X\}.
\$
Thus, 
\$
\tau^* = \EE\{Y(1) - Y(0)\} = \EE\{Y (-1)^{1-A}q^*(Z, A, X)\},
\$
which completes the proof.

\qed

\subsection{Proof of Theorem~\ref{thm:influence function}}
\subsubsection{Proof of Theorem~\ref{thm:influence function}(i)}
Let $f_t(y,w,z, a, x)$ be a regular parametric submodel with $f_0$ the true density function with respect to an appropriate dominating measure $\mu$. 
The associated score function and the expectation are denoted as $S(y,w,z, a, x) = \partial \log f_t(y,w,z, a, x)/\partial t |_{t = 0}$ and $\EE_t$, respectively.
Moreover, we can similarly denote corresponding ATE, nuisance functions, and score functions for any component of the submodel.
In order to find the efficient influence function of $\tau^*$, we need to find a random variable $G$ with mean $0$ and 
$$
\frac{\partial \tau_t}{\partial t}\bigg|_{t = 0} = \EE\{G \cdot S(Y,W,Z,A,X)\},
$$
where $\tau_t|_{t = 0} = \tau^*$ and $\tau_t$ is the ATE under $f_t$.

To this end, we first recall that $h_t(W, A, X)$ satisfies $h_t(W,A,X)|_{t = 0} = h^*(W,A,X)$ and
\begin{align}
\EE_t[d_t(Z,A,X)\{Y - h_t(W,A,X)\}] = 0, \label{submodel.h}
\end{align}
where $d_t(Z, A, X)$ is any function in $L_2(Z,A,X)$ satisfying
\#
\EE_t\{d_t(Z, A, X)|Z_{-\myalpha}, X\} = 0 \label{submodel.h.gamma}
\#
for any $\myalpha \in \cP_\gamma([K])$.
By taking partial derivatives in \eqref{submodel.h}, we have
$$
\int \frac{\partial [d_t(Z,a,x)\{y - h_t(w,a,x)\}f_t(y,w,Z,a,x)]}{\partial t}\bigg|_{t = 0}\mu({\rm d}y, {\rm d}w, {\rm d}z, {\rm d}a, {\rm d}x) = 0.
$$
Thus, for any $d_t \in L_2(Z, A, X)$ satisfying \eqref{submodel.h.gamma} with $d_0 = d \in \HH_\gamma$, 
\begin{align}
& \EE\bigg\{d(Z,A,X)\epsilon S(Y,W,Z,A,X) + \frac{\partial d_t}{\partial t}(Z,A,X)\bigg|_{t = 0} \cdot \epsilon\bigg\} \nn \\
& ~~~~~~~~~~~~~~~~~~~~~~~~~~~~~~~~~~~~~~~~= \EE\bigg\{d(Z,A,X)\frac{\partial h_t}{\partial t}(W,A,X)\bigg|_{t = 0}\bigg\}, \label{h.derivative}
\end{align}
where $\epsilon = Y - h^*(W,A,X)$.

To compute $\EE(\partial d_t/\partial t|_{t = 0}\cdot\epsilon)$, in the following, we find a function $\tilde d(Z, A, X)$ such that $\tilde d + \partial d_t/\partial t|_{t = 0} \in \HH_\gamma$.
Then, noting $\epsilon = Y - h^*(W,A,X)$ which satisfies \eqref{submodel.h}, it follows that 
\begin{align}
\EE\bigg\{\frac{\partial d_t}{\partial t}(Z, A, X)\bigg|_{t = 0}\cdot\epsilon\bigg\} = - \EE\{\tilde d(Z, A, X)\epsilon\}. \label{d.derivative}
\end{align}
Recall that $d_t$ satisfies \eqref{submodel.h.gamma}.
Thus, by taking the partial derivative at $t = 0$, we have that for any $\myalpha \in \cP_{\gamma}([K])$,
\#
\EE\bigg\{ d(Z, A, X)S(Z_\myalpha, A|Z_{-\myalpha}, X) + \frac{\partial d_t}{\partial t}(Z, A, X)\bigg| Z_{-\myalpha}, X\bigg\} = 0. \label{d.derivative2}
\#
Now, by the definition, we have 
\$
S(Z_\myalpha, A| Z_{-\myalpha}, X) = S(Z, A, X) - \EE\{S(Z, A, X)|Z_{-\myalpha}, X\}.
\$
Thus, for any $\myalpha \in \cP_{\gamma}([K])$,
\$
& \EE\{d(Z, A, X)S(Z_\myalpha, A|Z_{-\myalpha}, X)|Z_{-\myalpha}, X\} \\
& = \EE\{d(Z, A, X)[S(Z, A, X) - \EE\{S(Z, A, X)|Z_{-\myalpha}, X\}]|Z_{-\myalpha}, X\} \\
& = \EE\{d(Z, A, X)S(Z, A, X)|Z_{-\myalpha}, X\} - \EE\{S(Z, A, X)|Z_{-\myalpha}, X\}\EE\{d(Z, A, X)|Z_{-\myalpha}, X\} \\
& = \EE\{[d(Z, A, X) - \EE\{d(Z, A, X)|Z_{-\myalpha}, X\}]S(Z, A, X)|Z_{-\myalpha}, X\} \\
& = \EE\{d(Z, A, X)S(Z, A, X)|Z_{-\myalpha}, X\},
\$
where the last equality follows by the fact that $d \in \HH_\gamma$.
Combining this with \eqref{d.derivative2} yields
\$
& \EE\bigg\{ d(Z, A, X) S(Z, A, X) + \frac{\partial d_t}{\partial t}(Z, A, X)\bigg| Z_{-\myalpha}, X\bigg\} = 0.
\$
Thus, $\tilde d + \partial d_t/\partial t|_{t = 0} \in \HH_\gamma$ where $\tilde d = d\cdot S(Z, A, X)$ and \eqref{d.derivative} implies
\$
\EE\bigg\{\frac{\partial d_t}{\partial t}(Z, A, X)\bigg|_{t = 0}\cdot\epsilon\bigg\} = - \EE\{d(Z, A, X) S(Z, A, X)\epsilon\}.
\$
Together, this and \eqref{h.derivative} give that for all $d \in \HH_\gamma$,
\begin{align}
& \EE[\{d(Z, A, X)S(Y,W,Z,A,X) - d(Z, A, X) S(Z, A, X)\}\epsilon]  \nn \\
& ~~~~~~~~~~~~~~~~~~~~~~~~~~~~~~~~= \EE\bigg\{d(Z,A,X)\frac{\partial h_t}{\partial t}(W,A,X)\bigg|_{t = 0}\bigg\}.\label{h.derivative2}
\end{align}

Now, under the given regularity assumptions, it follows that $\tau_t = \EE_t\{h_t(W,1,X) - h_t(W,0,X)\}$ by Theorem~\ref{thm:outcome.bridge.identification}.
By taking the partial derivative, we have
\#
\frac{\partial \tau_t}{\partial t}\bigg|_{t = 0} & = \EE[\{h^*(W,1,X) - h^*(W,0,X)\}S(Y, W, Z, A, X)]\nn \\
&  ~~~~~~~~+ \EE\bigg\{\frac{\partial h_t(W,1,X)}{\partial t}\bigg|_{t = 0} - \frac{\partial h_t(W,0,X)}{\partial t}\bigg|_{t = 0}\bigg\}.  \label{tau.derivative}
\#
Remark that
\$
\EE\bigg\{\frac{\partial h_t(W,1,X)}{\partial t}\bigg|_{t = 0} - \frac{\partial h_t(W,0,X)}{\partial t}\bigg|_{t = 0}\bigg\} = \EE\bigg\{\frac{(-1)^{1-A}}{\PP(A|W, X)} \frac{\partial h_t(W,A,X)}{\partial t}\bigg|_{t = 0} \bigg\}.
\$
Thus, for any function $q^*(Z, A, X)$ satisfying 
\#
\EE\{q^*(Z, A, X)|W,A,X\} = \frac{1}{\PP(A|W,X)}, \label{q.dagger}
\#
and $(-1)^{1-A}q^* \in \HH_\gamma$, \eqref{tau.derivative} can be written as
\$
& \frac{\partial \tau_t}{\partial t}\bigg|_{t = 0} \\
& = \EE\bigg[\{h^*(W,1,X) - h^*(W,0,X)\}S(Y,W,Z,A,X) + (-1)^{1-A}q^*(Z,A,X)\frac{\partial h_t(W,A,X)}{\partial t}\bigg].
\$
Now, as $(-1)^{1-A}q^* \in \HH_\gamma$, \eqref{h.derivative2} implies
\$
& \EE\bigg\{(-1)^{1-A}q^*(Z,A,X)\frac{\partial h_t(W,A,X)}{\partial t}\bigg\} \\
& = \EE[\{(-1)^{1-A}q^*(Z, A, X)S(Y,W,Z,A,X) - q^*(Z, A, X) S(Z, A, X)\}\epsilon].
\$
By applying the Law of Iterative Expectation,
\$
& \EE\{(-1)^{1-A}q^*(Z,A,X)S(Z,A,X)\epsilon\} \\
& =  \EE\{(-1)^{1-A}q^*(Z,A,X)S(Z,A,X)\EE(\epsilon|Z, A, X)\} \\
& = \EE\{(-1)^{1-A}q^*(Z,A,X)S(Y, W, Z,A,X)\EE(\epsilon|Z, A, X)\}.
\$
Thus, 
\$
& \frac{\partial \tau_t}{\partial t}\bigg|_{t = 0} = \EE\{[\{h^*(W,1,X) - h^*(W,0,X)\} \\
& ~~~~~~~~~~~~~~~~~~~~~~+ (-1)^{1-A}q^*(Z, A, X)\{\epsilon - \EE(\epsilon|Z, A, X)\}]\cdot S(Y,W,Z,A,X)\},
\$
which completes the proof.

\qed

\subsubsection{Proof of Theorem~\ref{thm:influence function}(ii)}
To establish that the influence function $\mathtt{IF}(\tau^*) = \mathtt{IF}(O;h^*, q^*, \ell^*, \tau^*)$ attains the semiparametric local efficiency bound, it suffices to show that $\mathtt{IF}(\tau^*)$ belongs to the tangent space of model $\cM_{\gamma}$ at the submodel $\cM_{{\rm eff}, \gamma}$.
As shown in the proof of Theorem~\ref{thm:influence function}(i), the model $\cM_\gamma$ imposes the restriction \eqref{h.derivative2}, which implies that the tangent space of $\cM_{{\rm eff}, \gamma}$ consists of the functions $S(O) \in L_2(O)$ such that there exists $g \in L_2(W,A,X)$ satisfying
\#
& \EE[d\{S(O) - S(Z,A,X)\}\epsilon^*] = \EE\{d(Z,A,X)g(W,A,X)\} \label{score.restriction}
\#
for all $d \in \HH_\gamma$, where $\epsilon^* = Y - h^*(W,A,X)$.
This reduces to finding a solution $g$ satisfying
\$
\Pi_\gamma(g) = \Pi_\gamma(\{S(O) - S(Z,A,X)\}\epsilon^*).
\$
At the submodel $\cM_{{\rm eff}, \gamma}$, Assumption~\ref{assump:surjectivity} implies that such a solution always exists for any $S(O) \in L_2(O)$.
Therefore, the influence function $\mathtt{IF}(\tau^*)$ belongs to the tangent space of $\cM_{ \gamma}$ at the submodel $\cM_{{\rm eff}, \gamma}$, which completes the proof.
\qed

\subsection{Proof of Theorem~\ref{thm:mr.estimator}} \label{appendix:mr.estimator}

We begin by imposing additional regularity assumptions of Appendix B of \cite{robins1994estimation} on the estimating equations for the nuisance parameters, which implies that $\hat \ba, \hat\bb, \hat\br$ and $\hat\bt$ have probability limits $\tilde \ba, \tilde \bb, \tilde \br$ and $\tilde \bt$, respectively, and are asymptotically normal.
Define $\zeta = (\ba, \bb, \bt, \br), \hat \zeta = (\hat \ba, \hat \bb, \hat \bt, \hat \br)$ and $\tilde\zeta = (\tilde\ba, \tilde\bb, \tilde\br, \tilde\bt)$.
We write 
\$
G(O;\zeta) & = G(O;\ba, \bb, \br, \bt)  = h(W, 1, X;\bb) - h(W,0,X;\bb) \\ 
& + d((-1)^{1-A}q(\bt);\ba)(Z,A,X)\{Y - h(W,A,X;\bb) - \ell(Z,X;\br)\}.
\$
Then, $\hat\tau_{{\rm fPMR}}$ is defined as $\hat\tau_{{\rm fPMR}} = \PP_n(G(O_i;\hat\ba, \hat\bb, \hat\br, \hat\bt))$.

We begin by showing that 
\#
\EE\{G(O;\tilde\ba, \tilde\bb, \tilde\br, \tilde\bt)\} = \tau^* \label{multiply.robust}
\#
under $\cM_{1, \gamma} \cup \cM_{2, \gamma} \cup \cM_{3, \gamma}$.
First, $\tilde\ba = \ba^*$ under $\cM_{1, \gamma}$, and thus, $\mathbf{c}_0(Z,A,X;\tilde\ba) \in \HH_\gamma$.
Therefore, for any value $\br$, we have
\$
\EE[\mathbf{c}_0(Z,A,X;\tilde\ba)\{Y - h(W,A,X;\bb^*) - \ell(Z, X;\br)\}] = 0,
\$
implying $\tilde\bb = \bb^*$.
Equation \eqref{multiply.robust} now follows because
\$
& \EE\{G(O;\tilde\ba, \tilde\bb, \tilde\br, \tilde\bt)\} = \EE\{G(O;\ba^*, \bb^*, \tilde\br, \tilde\bt)\} \\
& = \EE\{h(W,1,X;\bb^*) - h(W,0,X;\bb^*)\} \\
& ~~~~~~~~~~~~~~~~~~+ \EE[d((-1)^{1-A}q(\tilde\bt);\ba^*)(Z,A,X)\{Y - h(W,A,X;\bb^*) - \ell(Z,X;\tilde\br)\}] \\
& = \tau^* + \EE[d((-1)^{1-A}q(\tilde\bt);\ba^*)(Z,A,X)\{Y - h(W,A,X;\bb^*) - \ell(Z,X;\tilde\br)\}],
\$
where the last equation follows by Theorem~\ref{thm:outcome.bridge.identification}.
Note that $d(Z,A,X,q(\tilde\bt);\ba^*) \in \HH_\gamma$, so the second term in the last equation vanishes as $h(W,A,X;\bb^*)$ satisfies \eqref{outcome.bridge.moment} and $\ell \in \cS_\gamma$ for any $\br$.
Second, under $\cM_{2, \gamma}$,
\$
& \EE[\mathbf{c}_1(Z,A,X)\{Y - h(W,A,X;\bb^*) - \ell(Z,X;\br^*)\}] = 0, ~\mbox{ and }~ \\
& \EE[\mathbf{c}_0(Z,A,X;\ba)\{Y - h(W,A,X;\bb^*) - \ell(Z,X;\br^*)\}] = 0
\$
for any $\ba$, we have $\tilde\br = \br^*$ and $\tilde\bb = \bb^*$.
Since 
\$
\EE\{Y - h(W,A,X;\bb^*) - \ell(Z,X;\br^*)|Z,A,X\} = 0,
\$
we have
\$
& \EE\{G(O;\tilde\ba, \tilde\bb, \tilde\br, \tilde\bt)\} = \EE\{G(O;\tilde\ba, \bb^*, \br^*, \tilde\bt)\} \\
& = \tau^* + \EE[d((-1)^{1-A}q(\tilde\bt);\tilde\ba)(Z,A,X)\{Y - h(W,A,X;\bb^*) - \ell(Z,X;\br^*)\}] = \tau^*.
\$
Third, $\tilde\ba = \ba^*$ and $\tilde\bt = \bt^*$ as $(-1)^{1-A}q(Z,A,X;\bt^*) = (-1)^{1-A}q^*(Z,A,X) \in \HH_\gamma$ under $\cM_{3, \gamma}$.
By Theorem~\ref{thm:treatment.bridge.identification}, 
\$
\EE\{d((-1)^{1-A}q(\bt^*);\ba^*)(Z,A,X)Y\} = \EE\{(-1)^{1-A}q^* Y\} = \tau^*.
\$
Furthermore, 
\$
& \EE[d((-1)^{1-A}q(\bt^*);\ba^*)(Z,A,X)\{h(W,A,X;\tilde\bb) + \ell(Z,X;\tilde\br)\}] \\
& = \EE[(-1)^{1-A}q^*(Z,A,X)\{h(W,A,X;\tilde\bb) + \ell(Z,X;\tilde\br)\}] \\
& = h(W,1,X) - h(W,0,X),
\$
where the last equality follows by the assumption $\ell \in \cS_\gamma$ and $(-1)^{1-A}q^* \in \HH_\gamma$.
This implies \eqref{multiply.robust} under $\cM_{3,\gamma}$.

Now, assuming that $\partial G(O;\zeta)/\partial\zeta^\T$ satisfies the uniform law of large numbers in a neighborhood of $\tilde\zeta$, we apply a Taylor expansion along with the uniform law of large numbers, we obtain
\#
\sqrt{n}(\hat \tau_{{\rm fPMR}} - \tau^*) = \frac{1}{\sqrt{n}} \sn G(O_i;\tilde\zeta) + \EE\bigg\{\frac{\partial G(O;\zeta)}{\partial \zeta^\T}\bigg|_{\zeta = \tilde \zeta}\bigg\}\sqrt{n}(\hat \zeta - \tilde \zeta) + o_{P}(1), \label{dr.estimator.proof1}
\#
where $o_{P}(1)$ denotes a random variable converging to $0$ in probability.
From this expression, consistency and asymptotic normality follow directly, under the union model $\cM_{1, \gamma} \cup \cM_{2, \gamma} \cup \cM_{3, \gamma}$, by leveraging the multiply robust property $\EE\{G(O_i;\tilde\zeta)\} = 0$, the central limit theorem, and Slutsky's theorem.

To establish that $\hat \tau_{{\rm fPMR}}$ is semi-parametrically locally efficient in $\cM_{{\rm eff}, \gamma}$ at the intersection submodel $\cM_{1, \gamma} \cap \cM_{2, \gamma} \cap \cM_{3, \gamma}$, we remark that at the intersection submodel, $\tilde \zeta = \zeta^*$, where $\zeta^*$ denotes the true nuisance parameter.
Then, from the multiply robustness, it follows that $\EE\{\partial G(O;\zeta)/\partial \zeta^\T |_{\zeta = \zeta^*}\} = 0$.
Combining this with \eqref{dr.estimator.proof1}, we conclude that the influence function of $\hat \tau_{{\rm fPMR}}$ is given by  $\mathtt{IF}(O_i;h^*, q^*, \ell^*)$, which is semiparametric locally efficient in $\cM_{{\rm eff}, \gamma}$.
This completes the proof.

\qed

\subsection{Proof of Lemma~\ref{lem:change.of.measure}} \label{proof:change.of.measure}

We first claim the following:
Assuming that $(A, Z)$ are conditionally independent given $X$, $\HH_{\gamma}$ is characterized as
    \$
\HH_{\gamma} = \bigg\{d(Z,A,X) - \sum_{i = \gamma}^K \sum_{\myalpha \in \cP_{i}([K])} \alpha_i\EE\{d(Z,A,X)|Z_{-\myalpha}, X\} : d \in L_2(Z, A, X) \bigg\},
    \$
    where $\{\alpha_i\}_{i = \gamma}^K$ is inductively defined as 
    \#
\alpha_\gamma = 1 ~\mbox{ and }~ \sum_{j = \gamma}^i \binom{\gamma}{i - j} \alpha_j = 0 \mbox{ for } i > \gamma. \label{a.coefficients}
\#

To show this, we first need to show that this choice of $\alpha_i$ satisfies
\#
\EE\bigg[d(Z,A,X) - \sum_{i = \gamma}^K \sum_{\myalpha \in \cP_{i}([K])} \alpha_i\EE\{d(Z,A,X)|Z_{-\myalpha}, X\}\bigg|Z_{-\tilde\myalpha}, X\bigg] = 0 \label{intersection.proof1}
\#
for all $\tilde\myalpha \in \cP_{\gamma}([K])$.
Since $(Z, A)$ are conditionally independent given $X$, we remark that for any subset $\myalpha, \myalpha^\prime \subseteq [K]$, 
\#
\EE[\EE\{d(Z,A,X)|Z_{-\myalpha}, X\}|Z_{-\myalpha^\prime}, X] =  \EE\{d(Z,A,X)|Z_{-(\myalpha \cup \myalpha^\prime)}, X\}. \label{indep.cond.exp}
\#

Now, let $\tilde\myalpha \in \cP_{\gamma}([K])$ be fixed.
By \eqref{indep.cond.exp},
\#
& \EE\bigg[d(Z,A,X) - \sum_{i = \gamma}^K \sum_{\myalpha \in \cP_{i}([K])} \alpha_i\EE\{d(Z,A,X)|Z_{-\myalpha}, X\}\bigg|Z_{-\tilde\myalpha}, X\bigg] \nn \\
& = \EE\{d(Z,A,X)|Z_{-\tilde\myalpha}, X\} - \sum_{i = \gamma}^K \sum_{\myalpha \in \cP_{i}([K])} \alpha_i\EE\{d(Z,A,X)|Z_{-(\myalpha \cup \tilde\myalpha)}, X\} \nn \\
& ~~~~~~~~~~~~~~~~~~~~~~~~~~~~~~= \sum_{i =  \gamma}^K \sum_{\myalpha \in \cP_i([K]), \tilde\myalpha \subseteq \myalpha} b_{\myalpha}\EE\{d(Z,A,X)|Z_{-\myalpha}, X\} \label{intersection.proof2}
\#
for some coefficients $b_\myalpha$.
Hence, to establish \eqref{intersection.proof1}, we need to show that all the coefficients $b_\myalpha$ in \eqref{intersection.proof2} are zero.

For $i = \gamma$, the only $\myalpha \in \cP_\gamma([K])$ satisfying $\tilde\myalpha \subseteq \myalpha$ is $\tilde\myalpha$ itself.
By comparing the terms in \eqref{intersection.proof2}, we have
\$
\EE\{d(Z,A,X)|Z_{-\tilde\myalpha}, X\} - \alpha_\gamma\EE\{d(Z,A,X)|Z_{-\tilde\myalpha}, X\} = b_\gamma \EE\{d(Z, A, X)|Z_{-\tilde\myalpha}, X\},
\$
which implies $b_\gamma = 0$ as $\alpha_\gamma = 1$.
Now, consider $i > \gamma$ and fix $\myalpha \in \cP_i([K])$ with $\tilde\myalpha \subseteq \myalpha$.
To determine $b_\myalpha$ in \eqref{intersection.proof2}, we need to determine the subsets $\myalpha^\prime \subseteq [K]$ such that $\myalpha^\prime \cup \tilde\myalpha = \myalpha$ by \eqref{indep.cond.exp}.
For each $\gamma \leq j \leq i$, we count the number of subsets $\myalpha^\prime \in \cP_j([K])$ such that $\myalpha^\prime \cup \tilde\myalpha = \myalpha$ in the following.
Note that such $\myalpha^\prime$ should contain all elements in $\myalpha \setminus \tilde\myalpha$, where $A \setminus B$ is the set of elements in $A$ but not in $B$ for any subsets $A$ and $B$ in $[K]$. 
Since $|\myalpha\setminus\tilde\myalpha| = i - \gamma$, the remaining $j - (i - \gamma)$ elements of $\myalpha^\prime$ must be selected from $\tilde\myalpha$.
The number of such subsets $\myalpha^\prime$ is $\binom{\gamma}{j - i + \gamma} = \binom{\gamma}{i - j}$.
Thus, the coefficients of $\EE\{d|Z_{-\myalpha}, X\}$  with $|\myalpha| = i$ in the left-hand side of \eqref{intersection.proof2} is
\$
- \sum_{j = \gamma}^i \sum_{\myalpha^\prime \in \cP_i([K]), \myalpha^\prime \cup \tilde\myalpha = \myalpha} \alpha_i = - \sum_{j = \gamma}^i \binom{\gamma}{i - j} \alpha_i.
\$
By the choice of $\alpha_i$ in \eqref{a.coefficients}, this summation equals zero, which establishes \eqref{intersection.proof1}.
Thus, we have
\$
\HH_{\gamma} \supseteq \bigg\{d(Z,A,X) - \sum_{i = \gamma}^K \sum_{\myalpha \in \cP_{i}([K])} \alpha_i\EE\{d(Z,A,X)|Z_{-\myalpha}, X\} : d \in L_2(Z, A, X) \bigg\}.
\$

Conversely, for any $d \in \HH_{\gamma}$, we have
\$
d(Z,A,X) - \sum_{i = \gamma}^K \sum_{\myalpha \in \cP_{i}([K])} \alpha_i\EE\{d(Z,A,X)|Z_{-\myalpha}, X\} = d(Z, A, X).
\$
Thus, 
\$
\HH_{\gamma} \subseteq \bigg\{d(Z,A,X) - \sum_{i = \gamma}^K \sum_{\myalpha \in \cP_{i}([K])} \alpha_i\EE\{d(Z,A,X)|Z_{-\myalpha}, X\} : d \in L_2(Z, A, X) \bigg\},
\$
which completes the proof.

Now, we consider the general setting where $(Z,A)$ may not be conditionally independent given $X$.
Let $d \in L_2(Z, A, X)$ be a function such that $v(Z,A,X) \in L_2(Z,A,X)$, where $v$ is defined as follows:
\$
v(Z, A, X) := \frac{f^*}{f}\bigg\{d - \sum_{i = \gamma}^K \sum_{\myalpha \in \cP_{i}([K])} \alpha_i\EE^*(d|Z_{-\myalpha}, X) \bigg\}. 
\$
Since $f^*$ is absolutely continuous with respect to $f$, $f^*/f$ is the Radon-Nikod{\' y}m derivative of $f^*$ with respect to $f$.
Thus, for any $\tilde \myalpha \in \cP_\gamma([K])$, we have
\$
\EE(v|Z_{-\myalpha}, X) & = \EE\bigg[ \frac{f^*}{f}\bigg\{d - \sum_{i = \gamma}^K \sum_{\myalpha \in \cP_{i}([K])} \alpha_i\EE^*(d|Z_{-\myalpha}, X) \bigg| Z_{-\tilde\myalpha}, X\bigg] \\
& = \frac{1}{f(Z_{-\tilde\myalpha}, X)}\EE^*\bigg\{ d - \sum_{i = \gamma}^K \sum_{\myalpha \in \cP_{i}([K])} \alpha_i\EE^*(d|Z_{-\myalpha}, X)\bigg| Z_{-\tilde\myalpha}, X \bigg\} = 0.
\$
This implies that $v \in \HH_{\gamma}$, that is,
\$
\HH_{\gamma}  \supseteq \bigg\{\frac{f^*}{f}\bigg\{d - \sum_{i = \gamma}^K \sum_{\myalpha \in \cP_{i}([K])} \alpha_i\EE^*(d|Z_{-\myalpha}, X) \bigg\}: d \in L_2(Z, A, X)\bigg\} \cap L_2(Z, A, X).
\$

To prove the reverse inclusion, let $v(Z,A,X)$ be any function in $\HH_{\gamma}$.
As we assume that $f/f^* \in L_2(Z,A,X)$, the function 
\$
d(Z, A, X) := \frac{f}{f^*}\cdot v(Z, A, X)
\$
is well-defined and belongs to $L_2(Z,A,X)$.
Then, for any $\myalpha \in \cP_\gamma([K])$, we have 
\$
\EE^*(d(Z, A, X)|Z_{-\myalpha}, X) = \frac{f(Z_{-\myalpha}, X)}{f^*(Z_{-\myalpha}, X)}\EE\{v(Z, A, X)|Z_{-\myalpha},  X\} = 0,
\$
where the last equation follows as $v \in \HH_{\gamma}$.
Thus, 
\$
& \frac{f^*}{f}\bigg\{d - \sum_{i = \gamma}^K \sum_{\myalpha \in \cP_{i}([K])} \alpha_i\EE^*(d|Z_{-\myalpha}, X) \bigg\} = \frac{f^*}{f} d(Z, A, X) = v(Z,A,X).
\$
This completes the proof.

\qed

\bibliographystyle{plainnat}
\bibliography{sample}

\end{document}